\documentclass[
nonacm,rgb,acmsmall]{acmart}
\settopmatter{printfolios=true,printccs=false,printacmref=false}

\AtBeginDocument{%
  }

\setcopyright{acmcopyright}
\copyrightyear{2018}
\acmYear{2018}
\acmDOI{XXXXXXX.XXXXXXX}

\acmConference[Conference acronym 'XX]{Make sure to enter the correct
  conference title from your rights confirmation emai}{June 03--05,
  2018}{Woodstock, NY}
\acmPrice{15.00}
\acmISBN{978-1-4503-XXXX-X/18/06}





\usepackage{tikz}
\usepackage{mathtools}                     
 \usepackage[all]{xy}  
\usepackage{array}
\usepackage{fixltx2e}
\usepackage{url}
 
\usepackage{amsmath}
\usepackage{cmll}
\usepackage{amsthm}
\usepackage{stmaryrd}
\usepackage{mathpartir}
\usepackage{color}

\usepackage{bm}
\usepackage{calc,xifthen} 
\usepackage{calrsfs}

\newcommand{\ESfig}[3][!ht]{\begin{figure}[#1]
		\ifthenelse{\isempty{#2}}{}{\caption{#2}}
		\begin{displaymath}\xymatrix@=2em{#3}\end{displaymath} \end{figure}}
\newcommand{\ESfighalf}[3][!ht]{\begin{figure}[#1]
		\ifthenelse{\isempty{#2}}{}{\caption{#2}} 
		\begin{displaymath}\xymatrix@=1em{#3}\end{displaymath} \end{figure}}
\newcommand{\ESfigtiny}[3][!ht]{\begin{figure}[#1]
		\ifthenelse{\isempty{#2}}{}{\caption{#2}}
		\begin{displaymath}\xymatrix@=0.5em{#3}\end{displaymath} \end{figure}}

\newcommand{\posn}{\hbox{{\it do}}[A,B]}
\newcommand{\dop}{\hbox{{\it do}}}

\theoremstyle{acmplain}
   \newtheorem{notation}[subsection]{Notation}

\newcommand{\pfrule}[2]{{{\hbox{{$#1$}}\over{\strut \hbox{{$#2$}}}}}}
 \def\fatbar{\mathop{[\!]}}
\newcommand{\Eta}{{\rm H}}

\newcommand{\stcomp}[3]{\exists #1.\, [\, #2 \mathrel{
\vvbar} #3\, ]}

\newcommand{\pin}{\hbox{ {\rm in} }}
\newcommand{\plet}{\hbox{ {\rm let }}}
\newcommand{\larrow}{\Leftarrow}

\newcommand{\tri}{\trianglelefteq}

\newcommand{\Strat}{{\mathbf{Strat}}}
\newcommand{\exi}{{1}}
\newcommand{\foral}{{2}}

\newcommand{\dem}{{\it dem}}
\newcommand{\out}{{\it out }}
\newcommand{\sPi}{\Pi^s}
\newcommand{\curry}{{\rm curry\,}}
\newcommand{\app}{{\rm apply}}
\newcommand{\DOP}{{\bf dOp}}

\newcommand{\comp}{{\circ}}
\newcommand{\Optic}{{\rm optic}}
\newcommand{\pto}{\rightharpoonup}
\newcommand{\cov}{{{\mathrel-\joinrel\subset}}}

\newcommand{\eswp}{event structure with polarity}
\newcommand{\esswp}{event structures with polarity}

\newdir{C}{%
!/4.5pt/@{( }*:(1,-.2)@{}*:(1,+.2)@_{}}
\newdir{|>}{%
!/4.5pt/@{|}*:(1,-.2)@{>}*:(1,+.2)@_{>}}

 \def\fatbar{\mathop{[\!]}}
 
 \newcommand{\goi}{{\it goi}}
\newcommand{\bel}{\sqsubseteq}
\newcommand{\leb}{\sqsupseteq}
\newcommand{\Tensor}{\bigotimes} 
\newcommand{\tensor}{\otimes} 
\newcommand{\Hilb}{{\mathcal H}}
\newcommand{\Q}{{\mathcal Q}}

\newcommand{\Para}[1]{\rm Para^{}({#1})}

\newcommand{\A}{{\mathcal A}}
\newcommand{\B}{{\mathcal B}}

\newcommand{\sncirc}{\oast} 

\def\endex{{\hspace*{\fill}\hbox{$\Box$}}}

\newcommand{\opmove}{{\boxminus}}
\newcommand{\plmove}{{\boxplus}}

\def\profto{\!\!\!\xymatrix@C-.75pc{\ar[r]|-{\! +\!} &}\!\!\! }
\newcommand{\spano}[5]{{
\xymatrix
{ 
    & {#3}\ar[dl]_{#2}\ar[dr]^-{#4} &\\
      {#1} && {#5}
  }}}
\def\all{\forall}

\newdir{pb}{:(1,-1)@^{|-}}
\def\pb#1{\save[]+<16 pt,0 pt>:a(#1)\ar@{pb{}}[]\restore}

\newcommand{\co}{\mathbin{{\it co}}}
\newcommand{\vvbar}{{\mathbin{\parallel}}}
\newcommand{\stcirc}{{{\odot}}}
\newcommand{\scirc}{{{\odot}}}

\newcommand{\longcov}[1]{{\stackrel{#1}{\mathrel-\joinrel\relbar\joinrel\subset\,}}}

\renewcommand{\max}{\it top}
\newcommand{\imc}{\rightarrowtriangle}
\newcommand{\setdif}{\setminus}
\newcommand{\sig}{\sigma}
\newcommand{\eps}{\, \epsilon}

\newcommand{\fsubseteq}{\subseteq_{\rm fin}}
\newcommand{\CC}{{\rm C\!\!C}}
\newcommand{\cc}{\ c\!c\,}
\newcommand{\pol}{{\it pol}}

\newcommand{\F}{{\mathcal F}}
\renewcommand{\G}{{\mathcal G}}

\def\Con{{\rm Con}}
\newcommand{\Fam}{{\mathcal F}}
\newcommand{\parrow}{\rightharpoonup}

\newcommand{\id}{{\rm id}}
\newcommand{\set}[2]{{\{  #1\  | \  #2 \} }}
\newcommand{\setof}[1]{{\{ #1 \} }}
\newcommand{\eqdef}{\coloneqq}
\newcommand{\iso}{\cong}

\newcommand{\ie}{{\it i.e.}}
\newcommand{\eg}{{\it e.g.}}
\newcommand{\cf}{{\it cf.}}

\newcommand{\viz}{{\it viz.}}


\def\mxxth{\mathsurround=0pt}
\dimendef\dimenxx=0
\def\openup{\afterassignment\xxpenup\dimenxx=}
\def\xxpenup{\advance\lineskip\dimenxx
  \advance\baselineskip\dimenxx \advance\lineskiplimit\dimenxx}
\def\eqalign#1{\,\vcenter{\openup1\jot \mxxth
  \ialign{\strut\hfil$\displaystyle{##}$&$\displaystyle{{}##}$\hfil
     \crcr#1\crcr}}\,}

\newif\ifdtxxp
\def\displxxy{\global\dtxxptrue \openup1\jot \mxxth
  \everycr{\noalign{\ifdtxxp \global\dtxxpfalse
      \vskip-\lineskiplimit \vskip\normallineskiplimit
      \else \penalty\interdisplaylinepenalty \fi}}}
\def\displaylines#1{\displxxy
  \halign{\hbox to\displaywidth{$\hfil\displaystyle##\hfil$}\crcr
      #1\crcr}}

\newskip\mycntring \mycntring=0pt plus 1000pt minus 1000pt

\def\leqalignno#1{\displxxy \tabskip=\mycntring
  \halign to\displaywidth{\hfil$\displaystyle{##}$\tabskip=0pt
      &$\displaystyle{{}##}$\hfil\tabskip=\mycntring
      &\kern-\displaywidth\rlap{$##$}\tabskip=\displaywidth\crcr
      #1\crcr}}

 
\acmConference[POPL 2023]{Principles of Programming Languages}{15-21 January}{San Antonio, Texas, USA}
\acmYear{2023}
\acmISBN{} 
\acmDOI{} 
\startPage{1}

\newcommand{\fconf}[1]{{\mathcal C}(#1)}
\newcommand{\conf}[1]{\:\!{\mathcal C}(#1)^o}
\newcommand{\iconf}[1]{\:\!{\mathcal C}(#1)}
 
\newdir{C}{%
!/4.5pt/@{( }*:(1,-.2)@{}*:(1,+.2)@_{}}

\newdir{|>}{%
!/4.5pt/@{|}*:(1,-.2)@{>}*:(1,+.2)@_{>}}

\begin{document}

\title{
Making Concurrency Functional}


\author{Glynn Winskel}
\affiliation{%
  \institution{Edinburgh Research Centre, Central Software Institute, Huawei;\\ }
   \institution{University of Strathclyde, Glasgow}
  \country{United Kingdom}}
\renewcommand{\shortauthors}{Winskel}

\begin{abstract}
The article bridges between two 
major paradigms in computation, the {\em functional}, at basis computation from input to output,  and the {\em interactive}, where computation reacts to its environment while underway.  
 Central to  any compositional theory of interaction is the dichotomy between a system and its environment.
 Concurrent games and strategies address the dichotomy in fine detail,  very locally, in a distributed fashion, through 
 distinctions between Player moves (events of the system) and Opponent moves (those of the environment). 
A functional approach has to handle the dichotomy 
more 
ingeniously,
via its blunter distinction between input and output.  This has led to  a variety of functional approaches, specialised to particular interactive demands.  
Through concurrent games we can 
see what separates and connects the differing paradigms, and 
show how: 

\noindent
$\bullet$ to 
lift functions to strategies
; 
how to turn functional dependency to causal dependency. 

\noindent
$\bullet$ 
several paradigms of functional programming and logic arise naturally as full subcategories of concurrent games,\! 
including stable domain theory; nondeterministic dataflow; geometry of interaction; 
the dialectica interpretation; lenses and optics, and 
their extensions to containers in dependent lenses and optics. 

\noindent
$\bullet$ 
the enrichments  of strategies (
\eg~to probabilistic, quantum 
or real-number computation)  
specialise to the 
functional cases.
  \end{abstract}

\begin{CCSXML}
<ccs2012>
 <concept>
  <concept_id>10010520.10010553.10010562</concept_id>
  <concept_desc>Computer systems organization~Embedded systems</concept_desc>
  <concept_significance>500</concept_significance>
 </concept>
 <concept>
  <concept_id>10010520.10010575.10010755</concept_id>
  <concept_desc>Computer systems organization~Redundancy</concept_desc>
  <concept_significance>300</concept_significance>
 </concept>
 <concept>
  <concept_id>10010520.10010553.10010554</concept_id>
  <concept_desc>Computer systems organization~Robotics</concept_desc>
  <concept_significance>100</concept_significance>
 </concept>
 <concept>
  <concept_id>10003033.10003083.10003095</concept_id>
  <concept_desc>Networks~Network reliability</concept_desc>
  <concept_significance>100</concept_significance>
 </concept>
</ccs2012>
\end{CCSXML}

\ccsdesc[500]{Computer systems organization~Embedded systems}
\ccsdesc[300]{Computer systems organization~Redundancy}
\ccsdesc{Computer systems organization~Robotics}
\ccsdesc[100]{Networks~Network reliability}


\maketitle

\section{Introduction}

The view of computation as functions is at the very 
foundation of computer science: the Church-Turing thesis expresses the coincidence of different notions of computable function; programming with higher-order functions is now taken for granted.  

In contrast the view of computation as interaction is more recent and less settled, and often obscured 
by adherence to one syntax or another, perhaps each with its own mechanism of interaction.  Instead our approach is maths-driven. 
Its tools are those of distributed/concurrent games and strategies~\cite{lics11}, a causal model which allows for highly distributed interaction. 
Concurrent games and strategies are built on the mathematical foundations of categories of models for interaction~\cite{handbook}, chiefly on the central model of event structures~\cite{evstrs}.\footnote{A core language for concurrent strategies derives from the mathematical structure, although we shall only glimpse it here in Section~\ref{sec:corelang}: it  
 is {\em higher-order} and
an interesting hybrid of {\em dataflow}, \cf~TensorFlow~\cite{tensorflow},   {\em concurrent process calculi}, \cf~CSP, CCS and {\em Session Types}~\cite{hoare,milner,session-types}.}
 
 Whereas the basic mechanism of interaction of functions is clear---ultimately by function composition---a functional approach can struggle with finding quite the right way to approach computation which isn't simply from input to output.  The literature includes approaches via lenses, optics, combs,  containers, dependent lenses, 
open games and learners~\cite{oles,pierce,optics,Chiribella2008,containers,opengames,openlearners}. 
The difficulties are compounded by enrichments 
 to, say, probabilistic, quantum or real-number computation. 
  
In functional approaches new patterns of interaction are often achieved
by extending the usual input/output of functions with extra parameters to permit exchanges with the environment while computation is underway; the environment may comprise another similar parameterised function.  But the types of
 functions tend only to give a static, rather rigid, partial picture of the dynamics of interaction.  
This handicaps the expression of and search for more complicated patterns of interaction within functional languages: for instance, 
patterns of interaction that may change over time, perhaps with one pattern of interaction replacing 
another, 
or perhaps being chosen nondeterministically or probabilistically.  And if we are to allow very general interactions how are we to avoid functional loops which may not  be sensible for the functions of interest?

By adopting a model  which addresses interaction from the outset, we can better understand and explore the space of possible interactions, functional or otherwise. 
 Concurrent games and strategies provide 
 a way to describe and orchestrate temporal patterns of interaction between functions, their fine-grained dependencies and dynamic linkage---Sections~\ref{sec:part2} and~\ref{sec:enrichment}. 
 They support 
 enrichments to strategies for probabilistic, quantum and real-number computation. 

This article bridges between the two paradigms of computation, the functional and the interactive. 
In broad terms it shows: 
\begin{itemize}
\item
How to convert a general class of functions to concurrent strategies---Section~\ref{sec:part1}; this helps in the programming of strategies via functional techniques and is of potential further use in describing sub(bi)categories of strategies through functions.  
 \item
How in many cases we can describe concurrent strategies as interacting patterns of  functions; it reveals many paradigms in functional programming arise as full subcategories associated with special cases of concurrent games; in these cases composition of strategies can be described via simpler function composition---Section~\ref{sec:part2}.
\item
How 
concurrent strategies enriched in a symmetric monoidal category $\mathcal M$ determine interacting patterns of ``functions'' (maps in $\mathcal M$) and how these compose through the composition of strategies; in this sense a sub(bi)category of strategies determines its own functional paradigm.  This can be used to systematise the way we explore interaction between functions---Section~\ref{sec:enrichment}.\end{itemize}

Amplifying the second point above, it was a surprise to the author 
how neatly and automatically many functional paradigms arise  simply by specialising to full subcategories of concurrent games. 
For example: 
\begin{itemize}
\item
We shall see how by restricting to deterministic strategies between concurrent games where all moves are Player moves we rediscover {\em stable functions} and Berry's stable domain theory, of which Girard's qualitative domains and coherence spaces are special cases. For such restricted games, general, possibly nondeterministic, strategies correspond to {\em stable spans,} a model discovered and rediscovered in 
compositional accounts of nondeterministic dataflow. 
\item
Only marginally more complicated than those purely Player games are games which consist of two parallel components, one a purely Player game and the other with purely Opponent moves.  Strategies between such games yield models for {\em Geometry of Interaction} built on stable functions and stable spans~\cite{GirardGoI1,nfgoi,joyal-street-verity,AbramskyHaghverdiScott}. 
\item
Adjoining winning conditions and  imperfect information to these games, so Opponent can see the moves of Player but not the converse, we recover a {\em dialectica category}~\cite{valeria}, so G\"odel's dialectica interpretation~\cite{dialecticaInt}, from deterministic strategies.  We obtain from G\"odel's work an interpretation of proofs in first-order arithmetic as winning strategies.  Dialectica categories, studied by Valeria de Paiva in her Cambridge PhD~\cite{valeria}, mark an early occurrence of {\em lenses} used in functional programming, where they were invented independently to make  composable local changes on data-structures~\cite{oles,pierce}.  
 \item
 The newer paradigm of optics appears in characterising arbitrary, not just deterministic, strategies between dialectica games and when we move to more general container games, associated with container types~\cite{containers}.  Deterministic strategies between container games amount to {\em dependent lenses} and nondeterministic strategies to a form of {\em dependent optics}.  The definition of dependent optic is derived as a characterisation of general strategies between container games; it appears to be new~\cite{hedges-depoptics}. 
\end{itemize}

 {\sl After the basics on event structures, the tools of stable families, and concurrent strategies, the new contribution comes  in three parts which can roughly be described as:
how to describe strategies by functions;
how to describe functions and functional paradigms by strategies; and, how enriched strategies describe interacting patterns of functions. The first, rather technical part, Section~\ref{sec:part1},  introduces a powerful method for lifting a very broad class of functions to strategies, turning functional into causal dependency. It makes essential use of stable families and the Scott order 
  intrinsic to a concurrent game.  The second part, Section~\ref{sec:part2},  
  concerns how causal dependency 
  determines functional dependency, and shows how many 
paradigms discovered in 
making functions interactive arise as subcategories of concurrent 
games.  
The final, much shorter, third part, Section~\ref{sec:enrichment},  shows how to enrich strategies in a symmetric monoidal category.  
An enriched  strategy imposes a dynamic pattern of interaction  between arrows in the monoidal category; the pattern of interaction has the form of an event structure.    Such patterns of interaction compose well and won't contain loops of functional dependency because they are determined by strategies. }

\section{Event structures}
An {\em  event structure}~\cite{evstrs} comprises $(E, \leq, \Con)$, consisting of a set $E$ of {\em events}  
which are
partially ordered by $\leq$, the {\em causal dependency
relation},
and a  nonempty {\em consistency} relation $\Con$ consisting of finite subsets of $E$.  The relation
$e'\leq e$ expresses that event $e$ causally depends on the previous occurrence of event $e'$; the consistency relation, those events which may occur together.   We insist that the partial order is {\em finitary}, \ie
\begin{itemize}
\item $[e]\eqdef \set{e'}{e'\leq e}\hbox{ is finite for all } e\in E$\,,
\end{itemize}
and that consistency satisfies 
\begin{itemize}
\item 
$\setof{e}\in\Con \hbox{  for all } e\in E$\,,
\item 
$Y\subseteq X\in\Con \hbox{ implies }Y\in \Con,\ \hbox{ and}$
\item
$
X\in\Con \ \&\  e\leq e'\in X \hbox{ implies } 
X\cup\setof{e}\in\Con$\,.
\end{itemize}
There is an accompanying notion of state or history. A {\em configuration} is a, possibly infinite, subset 
$x\subseteq E$ which is:  
\begin{itemize} 
\item
{\em consistent,} $X\subseteq x 
\ \&\ X \hbox{ is finite}  \hbox{ implies } X\in\Con$\,; 
and
\item
{\em down-closed, }
$ e'\leq e\in x  \hbox{ implies } e' \in x$\,.
 \end{itemize}

Two events $e, e'$ are  called {\em concurrent} if  the set $\setof{e,e'}$ is in $\Con$ and neither event is causally dependent on the other; then we write $e\co e'$.  In games the relation of {\em immediate} dependency $e\imc e'$, meaning $e$ and $e'$ are distinct with $e\leq e'$ and no event in between,  plays a very important role.
  We write $[X]$ for the down-closure of a subset of events $X$.  
Write $\iconf E$ for the configurations of $E$ and $\conf E$ for its {\em finite configurations}.   (Sometimes we shall need to distinguish the precise event structure to which a relation is associated and write, for instance, $\leq_E$,  $\imc_E$ or $\co_E$.)

Let  $E$ and $E'$ be event structures.
A {\em  map} of event structures 
$f:E\to E'$ 
 is a partial function on events
$f:E\parrow E'$ such that 
for all   $x\in\iconf E$
its direct image $f x\in\iconf{E'}$  and 
$$
\hbox{if } e, e' \in x 
\hbox{ and } f(e) =f(e') \hbox{ (with both defined)}, \hbox{ then } e=e'\,.
$$

Maps of event structures compose as partial functions. 
Notice that  for a {\em total} map $f$, \ie~when the function $f$ is total,  the condition on maps 
says it is {\em  locally injective}, in the sense that w.r.t.~any configuration $x$ of the domain the restriction of $f$ to a function from $x$   is injective; the restriction of total $f$ to a function from $x$ to $f x$ is thus bijective.  

Although a map $f:E\to E'$ of event structures does not generally preserve causal dependency, it does reflect   
causal dependency  locally:  whenever $e, e'\in x$, a configuration of $E$, and $f(e)$ and $f(e')$ are both defined with $f(e')\leq f(e)$, then $e'\leq e$.  Consequently,
$f$  preserves the concurrency relation: if $e\co e'$ in $E$  
then $f(e)\co f(e')$, when defined.

A total map of event structures is {\em rigid} when it preserves causal dependency.  Rigid  maps induce  discrete fibrations:

\begin{proposition}
A total map $f:E\to E'$ of event structures is rigid iff  for all   $x\in\fconf E$ and $y\in\fconf{E'}$,
$$
y\subseteq f x \implies \exists z\in\fconf{E}.\ z\subseteq x \hbox{ and } f z = y\ .
$$
The configuration $z$ is necessarily unique by 
local injectivity
.  
\end{proposition}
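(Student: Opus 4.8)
The plan is to prove the two implications separately, using in both directions the explicit candidate $z=\set{e\in x}{f(e)\in y}$ --- which will also turn out to be the unique possible choice. For the forward implication, assume $f$ is rigid and take $x\in\fconf E$, $y\in\fconf{E'}$ with $y\subseteq fx$. Put $z\eqdef\set{e\in x}{f(e)\in y}$; then $z\subseteq x$ is immediate and $z$ is finite as a subset of the finite set $x$. One has $fz=y$ because $fz\subseteq y$ by construction, while each $e'\in y\subseteq fx$ has a preimage in $x$, which necessarily lies in $z$, giving $y\subseteq fz$. Finally $z$ is a configuration: consistency is inherited from $x$, and for down-closure, if $e''\leq e\in z$ then $e''\in x$ (as $x$ is down-closed) and $f(e'')\leq f(e)$ by rigidity; since $f(e)\in y$ and $y$ is down-closed, $f(e'')\in y$, so $e''\in z$.

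For the converse, assume the lifting property and suppose $e'\leq_E e$; the goal is $f(e')\leq_{E'}f(e)$. Take $x\eqdef[e]$, which is a finite configuration by finitariness (the consistency axioms make $[e]\in\Con$), so that $e'\in x$, and take $y\eqdef[f(e)]$. Since $fx$ is a configuration containing $f(e)$, it is down-closed, hence $y=[f(e)]\subseteq fx$. The lifting property then supplies $z\in\fconf E$ with $z\subseteq x$ and $fz=y$. Now $f(e)\in y=fz$, so $f(\tilde e)=f(e)$ for some $\tilde e\in z\subseteq[e]=x$; since $\tilde e$ and $e$ both lie in the configuration $x$, local injectivity forces $\tilde e=e$, so $e\in z$. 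As $z$ is down-closed, $e'\in z$, whence $f(e')\in fz=y=[f(e)]$, i.e.\ $f(e')\leq f(e)$, as required.

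For uniqueness, suppose $z,z'\subseteq x$ are finite configurations with $fz=fz'=y$. Any $e\in x$ with $f(e)\in y$ then has a preimage of $f(e)$ in each of $z$ and $z'$, and such a preimage, lying with $e$ in the configuration $x$, must equal $e$ by local injectivity; hence $z=\set{e\in x}{f(e)\in y}=z'$. I expect no real difficulty beyond one subtlety in the converse direction: the reflection property noted earlier in the excerpt runs the opposite way and is of no direct help, so one must relocate the question to the downsets $[e]$ and $[f(e)]$ and then invoke local injectivity to certify that $e$ itself --- not merely some event mapping onto $f(e)$ --- is recovered inside the lifted configuration $z$.
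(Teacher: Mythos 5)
Your proof is correct, and it is the standard argument one would give: the paper states this proposition without proof, and your construction of $z=\set{e\in x}{f(e)\in y}$ for the forward direction, together with the lift of $[f(e)]\subseteq f[e]$ plus local injectivity for the converse and for uniqueness, is exactly the intended reasoning. Nothing to add.
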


 \section{Stable families
 }
 In an 
 event structure, defined above, an event $e$ has a unique causal history, the prime configuration  $[e]$.  Constructions directly on such event structures can be unwieldy, as often an event is more immediately associated with several mutually inconsistent causal histories. In this case the broader model of stable families is apt, especially so, as any stable family yields an event structure~\cite{icalp82,evstrs}.
 
 A subset $X$ of a family of sets $\F$ 
 is  {\em compatible} if there is an element of $\F$ which includes all elements of $X$; we say  $X$ is {\em finitely compatible} if every finite subset of $X$ is compatible. \\
  A  {\em stable family} 
is a non-empty family of sets $\F$ which is\\
$\bullet$  {\em  Complete:}    
 $\all Z\subseteq\F.  \   \hbox{if } Z\hbox{ is finitely compatible, } 
 \bigcup Z \in \F$\,;\\
$\bullet$ 
{\em  Stable:}
$
\all Z\subseteq\F.\ Z\not= \emptyset  \ \& \    Z\hbox{ is compatible } 
\implies 
\bigcap Z\in \F$;\\
$\bullet$ 
{\em  Finitary:}
$
\all x \in\F, e\in x\exists x_0\in\F.\ x_0 \hbox{ is finite }\ \&\ e\in x_0\subseteq x$; \\
$\bullet$ {\em  Coincidence-free:}   For all
$x \in \F$,  $ e, e'
\in x$ with $e \not=  e'$,
$$
 \exists x_0 \in \F. \   
 x_0  \subseteq  x \ \& \  (e\in  x_0  \iff e' \notin  x_0)\, .$$ 
We call elements of $\F$ its {\em configurations}, $\bigcup\F$ its {\em events} and write $\F^o$ for its {\em finite configurations}.   

A {\em map} $f:\F\to \G$  between stable families $\F$ and $\G$ is a partial function $f$ from the events of $\F$ to those of $\G$ such that 
for all   $x\in\F$
its direct image $f x\in\G$  and 
if  $e, e' \in x$ and $f(e) =f(e')$ then  
$e=e'$. The choice of map ensures an inclusion functor from the category of event structures to that of stable families.  The inclusion functor has a right adjoint $\Pr$ giving a coreflection (an adjunction with unit an isomorphism).  The construction $\Pr(\F)$ essentially replaces  the original events of a stable family $\F$ by the minimal, prime configurations at which they occur
.  
Let $x$ be a configuration of a stable family $\Fam$. 
Define the {\em prime} configuration of $e$ in $x$ by
$$[ e ]_x \eqdef \hbox{$\bigcap$} \set{y\in \Fam}{  e\in y  \ \& \  y\subseteq x}\ .
$$
By  coincidence-freeness, the  function 
  $\max:\iconf{\Pr({\F})}\to \F$ which takes  
a prime configuration $[e]_x$ to 
$e$ is well-defined;   it is  the 
counit  of the adjunction~\cite{icalp82,evstrs}. 

\begin{theorem}\label{thm:Pr}
Let $\Fam$ be a stable family.   Then,
$\Pr(\Fam) \eqdef (P, \Con, \leq)$ is an event structure where
$$
\eqalign{
&P= \set{[ e ]_x }{e\in x\ \&\ x\in \Fam}\ ,\cr
&
Z \in \Con  \hbox{ iff }  Z \subseteq P \ \&\  \hbox{$\bigcup$} Z \in\Fam\,,  \hbox{ and}\cr
&
p\leq p' \hbox{ iff }  p, p'\in P\ \&\ p\subseteq p'\,.}
$$
There is an order isomorphism $\theta: (\iconf{\Pr(\Fam)},\subseteq) \iso (\Fam, \subseteq)$
where $\theta(y)\eqdef\max\, y = 
\bigcup y$ for $y\in \iconf{\Pr({\F})}$
; its  mutual inverse is $\varphi$ where $\varphi(x) = \set{[e]_x}{e\in x}$ for  $x\in\Fam$.
\end{theorem}
 
 The partial orders represented by configurations under inclusion 
 are the same whether for event structures or stable families. They are 
 G\'erard Berry's {\em dI-domains}~\cite{berry,icalp82,evstrs}.

\subsection{Hiding---the defined part of a map}
Let $(E,\leq, \Con)$ be an event structure.  Let $V\subseteq E$ be a subset of `visible' events.
Define   
the {\em projection} 
on $V$, by 
$
E{\mathbin\downarrow} V\eqdef
(V, \leq_V, \Con_V)
$, 
where
$v \leq_V v' \hbox{ iff } v\leq v' \ \&\ v,v'\in V$ and $X\in\Con_V \hbox{ iff }  X\in\Con\ \&\ X\subseteq V$. 
The operation {\em hides} all events outside $V$.  It is associated with a {\em partial-total factorization system}. 
Consider a partial map of event structures $f:E\pto E'$.  Let 
$$V\eqdef \set{e\in E}{ f(e) \hbox{ is defined}}\,.$$
Then $f$ clearly factors into the composition 
$$
\xymatrix{
E\ar@^{->}[r]^{f_0}& E{\mathbin\downarrow} V \ar[r]^{f_1}& E'\\}
$$
of $f_0$, a partial map of event structures taking $e\in E$ to itself if $e\in V$ and undefined otherwise, and $f_1$, a total map of event structures acting like $f$ on $V$. 
Note that any $x\in \iconf{E{\mathbin\downarrow} V}$ is the image under $f_0$ of a {\em minimum} configuration, \viz~$[x]_E\in\iconf E$.
We call $f_0$ a {\em projection} and $f_1$ the {\em defined part} of the 
map 
$f$.    
 
\subsection{Pullbacks}
The coreflection from event structures to stable families is a considerable aid in constructing limits in the former from limits in the latter. 
The {\em pullback} of total maps of event structures is essential in composing strategies. We can define it via the pullback of stable families, obtained as
a  stable family of {\em secured bijections}. Let $\sig:S\to B$ and $\tau:T\to B$ be total maps of event structures. There is a composite bijection 
$$
\theta: x\iso \sig x = \tau y \iso y\,,
$$
between $x\in\iconf S$ and 
 $y\in \iconf T$ such that $\sig x = \tau y$; because $\sig$ and $\tau$ are total they induce bijections between configurations and their image. The bijection is {\em secured}  when
the transitive relation generated on $\theta$ by  $(s, t) \leq (s', t')$ if $s\leq_S s'$ or $t\leq_T t'$ is a finitary partial  order.

\begin{theorem}\label{thm:securedbijns}
Let $\sig:S\to B$, 
$\tau:T\to B$ be total maps of event structures.  The family $\mathcal R$ of secured bijections between $x\in\iconf S$ and 
 $y\in \iconf T$ such that $\sig x = \tau y$ is a  stable family.  
 The functions $\pi_1: \Pr({\mathcal R})\to S$, 
 $\pi_2:\Pr({\mathcal R})\to T$, taking a secured bijection with top to, respectively, the left and right components of its top, are maps of event structures.
$ \Pr({\mathcal R})$ with $\pi_1$, 
$\pi_2$ is the pullback of $\sig$, 
$\tau$ in the category of event structures.
 \end{theorem}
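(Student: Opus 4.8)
The plan is to realise $\mathcal R$ concretely as a family of sets, isolate one structural lemma that makes it behave like the configurations of an event structure, read off the stable-family axioms and that the projections are maps, and finally check the universal property directly in the category of event structures. First I would identify a secured bijection with its graph: a set $\theta$ of pairs $(s,t)$ --- $s$ an event of $S$, $t$ an event of $T$, $\sigma(s)=\tau(t)$ --- whose first projection $x$ is a configuration of $S$, whose second projection $y$ is a configuration of $T$, which is injective on each side, which has $\sigma x=\tau y$ with the projections realising the composite bijection $x\iso\sigma x=\tau y\iso y$, and for which the relation generated on $\theta$ by ``$s\leq_S s'$ or $t\leq_T t'$'' is a finitary partial order. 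Since $\sigma,\tau$ are total they restrict to bijections on configurations, so $\theta$ is already determined by its projections, $\theta=\set{(s,t)}{s\in x,\ t\in y,\ \sigma(s)=\tau(t)}$ --- call this the $B$-match property. Order $\mathcal R$ by inclusion; $\emptyset\in\mathcal R$, and $\mathcal R^o$ consists of the $\theta$ with finite projections.

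The crux will be an \emph{inheritance lemma}: if $\theta\subseteq\theta'$ in $\mathcal R$ then $\theta=\theta'\cap(x\times y)$ (immediate from $B$-match), the secured order of $\theta$ is the restriction of that of $\theta'$, and $\theta$ is down-closed in $\theta'$ for the secured order of $\theta'$. For the down-closure I would walk a generating chain of $\theta'$ that ends at a pair of $\theta$ backwards: at an $S$-step $s''\leq_S s$ with $s\in x$, down-closure of $x$ puts $s''\in x$, and then totality of $\tau$ (a bijection from $y$ onto $\tau y=\sigma x$) forces the matching second component into $y$; symmetrically for $T$-steps; so the whole chain stays in $\theta$. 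The lemma says each $\theta$ carries an order whose principal down-sets $[p]_\theta$ are finite and computed identically in any larger member of $\mathcal R$ --- exactly the regularity one wants from configurations of an event structure. Granting it, the stable-family axioms reduce to the standard closure properties of the dI-domains $\iconf S$ and $\iconf T$: for finitely compatible $Z\subseteq\mathcal R$, the projections of its members are finitely compatible so $\bigcup Z$ projects to configurations, it is a bijection because two overlapping pairs already lie in a common member of $Z$, and it is secured because a finite generating chain in $\bigcup Z$ lies in one member of $Z$ (so its down-sets are finite there) with antisymmetry inherited; for nonempty compatible $Z$ with upper bound $\theta^{\ast}$, $\bigcap Z\subseteq\theta^{\ast}$ projects to intersections of configurations and its secured order is a sub-relation of the finitary partial order of $\theta^{\ast}$; finitariness and coincidence-freeness follow because $[p]_\theta$ is a finite member of $\mathcal R$ for $p\in\theta\in\mathcal R$ and because $p\neq p'$ in $\theta$ cannot satisfy both $p\leq_\theta p'$ and $p'\leq_\theta p$, so one of $[p]_\theta,[p']_\theta$ separates them.

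Next I would check that the functions taking a prime $[(s,t)]_\theta$ of $\mathcal R$ to $s$, respectively to $t$, are (total) maps of event structures $\pi_1:\Pr(\mathcal R)\to S$ and $\pi_2:\Pr(\mathcal R)\to T$: via the configuration isomorphism $\iconf{\Pr(\mathcal R)}\iso(\mathcal R,\subseteq)$ of Proposition~\ref{prop:Pr} they act on configurations as the projections $\theta\mapsto x$ and $\theta\mapsto y$, which land in $\iconf S,\iconf T$ by construction and are injective on each $\theta$ --- equivalently, they are $\Pr$ applied to the evident projections of stable families; and $\sigma\pi_1=\tau\pi_2$ since $\sigma(s)=\tau(t)$ for each $(s,t)\in\theta\in\mathcal R$. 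For the universal property, let $C$ be an event structure with $f:C\to S$, $g:C\to T$ and $\sigma f=\tau g$; totality of $\sigma,\tau$ forces $f,g$ to share a domain of definition $V$. For $c\in\iconf C$ I set $\theta_c\eqdef\set{(f(e),g(e))}{e\in c\cap V}$; the substantial point is that $\theta_c$ is secured, and here I would use that a generating chain $(f(e_0),g(e_0)),\dots,(f(e_n),g(e_n))$ lifts step by step to a $\leq_C$-chain $e_0\leq_C\cdots\leq_C e_n$, because a map of event structures reflects causal dependency locally inside a configuration (inside $c$) --- so antisymmetry of $\leq_{\theta_c}$ follows from that of $\leq_C$, and finitariness of $[p]_{\theta_c}$ from that of $[e]_C$. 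Then $\theta_c\in\mathcal R$, the monotone assignment $c\mapsto\theta_c$ corresponds to the event map $h:C\to\Pr(\mathcal R)$ sending $e\in V$ to $[(f(e),g(e))]_{\theta_{[e]_C}}$ (independent of the ambient configuration, by the inheritance lemma), and one verifies $h$ is a map with $\pi_1 h=f$, $\pi_2 h=g$. For uniqueness, any $h'$ with $\pi_1 h'=f$, $\pi_2 h'=g$ has domain $V$, and on each configuration $c$ its image corresponds to some $\vartheta_c\in\mathcal R$ with projections $fc,gc$, whence $\vartheta_c=\theta_c$ by $B$-match; so $h'$ and $h$ agree on configurations, and on a single $e\in V$ they agree because $h'(e),h(e)$ lie in the configuration $h([e]_C)$ and have equal image $f(e)$ under the locally injective $\pi_1$.

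I expect the genuine obstacle to be the ``secured'' clause throughout: showing that the union in the completeness axiom, and the mediating bijections $\theta_c$, again yield a \emph{finitary partial order} from the two orders imported from $S$ and $T$. Both reduce to local reflection of causal dependency by maps together with the inheritance lemma; the rest is bookkeeping with configurations of dI-domains. A small trap in the uniqueness step is that an event of $\Pr(\mathcal R)$ is not determined by its top alone, which is why I invoke local injectivity of $\pi_1$ rather than comparing tops directly.
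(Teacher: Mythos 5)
Your proposal is correct. Note that the paper itself gives no proof of this theorem---it is recalled as background (the appendix only proves results for Sections~\ref{sec:part1} and~\ref{sec:part2})---so there is no in-paper argument to diverge from; what you give is a sound, self-contained version of the standard construction. The two load-bearing ideas are exactly the right ones: (i) because $\sig$ and $\tau$ are total and locally injective, a secured bijection is determined by its two projections (your ``$B$-match'' property), which yields the inheritance lemma that members of $\mathcal R$ are down-closed in larger members with the secured order restricting correctly; and (ii) local reflection of causal dependency by maps of event structures, used both to secure $\theta_c$ for a cone $f,g$ out of $C$ (lifting a generating chain to a $\leq_C$-chain) and, via local injectivity of $\pi_1$, to get uniqueness of the mediating map without comparing tops of primes directly. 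One phrase is slightly off as literally stated: in the completeness check, a finite generating chain in $\bigcup Z$ need not lie in a single member of $Z$ (the family $Z$ is only finitely compatible, not directed); for antisymmetry you should place the chain in a common upper bound of the finitely many members involved, which exists in $\mathcal R$, and for finiteness of principal down-sets you should run your inheritance/down-closure argument with the big set taken to be $\bigcup Z$ itself (its projections are already known to be configurations, which is all that argument needs), concluding $[p]_{\bigcup Z}\subseteq [p]_{\theta_0}$ for any $\theta_0\in Z$ containing $p$. Since your closing paragraph already attributes both secured-clause obligations to the inheritance lemma, this is a matter of wording rather than a gap. A last cosmetic point: as stated the theorem takes $x\in\conf S$, $y\in\conf T$ finite, while the completeness axiom forces infinite unions; your convention of admitting infinite secured bijections and writing $\mathcal R^o$ for the finite ones is the right reading and matches how the paper treats its other stable families.
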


\begin{notation}{\rm W.r.t.~$\sig:S\to B$ and $\tau:T\to B$,  define $x\wedge y$ to be the configuration of their pullback which corresponds via this isomorphism to a secured bijection between $x\in\iconf S$ and $y\in \iconf T$, necessarily with $\sig x = \tau y$; any configuration of the pullback takes the form  $x\wedge y$  for unique $x$ and $y$. } 
 \end{notation} 

\section{Concurrent games and strategies}
The driving idea is to replace the traditional role of game trees by that of event structures. 
Both games and strategies will be represented by
  {\em event structures with polarity}, which  comprise $(A,\pol_A)$ where $A$ is an event
structure and a polarity function $\pol_A:A\to \{+,-,0\}$ ascribing a
polarity + (Player) or $-$ (Opponent) or 0 (neutral) to its events. The events
correspond to (occurrences of) moves. It will be technically useful to
allow events of neutral polarity; they arise, for example, in a play
between a strategy and a counterstrategy. Maps are those 
of event structures which preserve polarity. A {\em game }is represented by an event structure with polarities restricted to + or $-$, with no neutral events. 

\begin{definition}\label{def:scottorder} {\rm  In an event structure with polarity, with
configurations $x$ and $y$, write $x \subseteq^-y$ to mean inclusion in which all the
intervening events $y\setdif x$ are Opponent moves. Write $x  \subseteq^+ y$ for inclusion in
which the intervening events are neutral or Player moves. 
For a subset of events $X$ we write $X^+$ and $X^-$ for its restriction to Player and Opponent moves, respectively. 
The {\em Scott order}
  will play a central role: between $x,y\in\iconf A$, where $A$  is a game, define
  $$
   y\bel_A x \iff  \exists z\in \iconf A.\  y\supseteq^- z \ \&\ z\subseteq^+ x\,
   $$
  ---it is not hard to show that $z$ is unique and equal to $x\cap y$. 
The Scott order is also characterised by  
$$ y\bel_A x \iff  y^-\supseteq x^-\ \&\  y^+\subseteq x^+\,,
 $$
 which makes clear why it is a partial order.
 The Scott order is 
so named because it reduces to Scott's order on functions in special cases and plays a central role in relating games to Scott domains and ``generalised domain theory''~\cite{hyland-gendomthy}.
} \end{definition}

There are two
fundamentally important operations on games. One is that of
forming the {\em dual game}. On a game $A$ this amounts to reversing the polarities
of events to produce the dual $A^\perp$. The other operation, a {\em simple parallel
composition} $A\vvbar B$, is achieved on games $A$ and $B$ by simply juxtaposing them,
ensuring a finite subset of events is consistent if its overlaps with the
two games are individually consistent; any configuration $x$ of $A\vvbar B$ decomposes into
 $x_A\vvbar x_B$ where $x_A$ and $x_B$  are configurations of $A$ and $B$ respectively.
  
A {\em strategy} {\em in} a game $A$ is a total map $\sig: S \to A$ of \esswp~such that 
 \begin{itemize}
 \item[{(i)}] if $\sigma x  \subseteq^- y$, for $x\in\iconf S, y\in\iconf A$,  
 there is a unique $x'\in\iconf S$ with $x\subseteq x'$ and $\sig x' = y$;  
 \item[{(ii)}] if 
$s\imc_S s'$ and ($\pol(s) = +$ or $\pol(s') = -$ )\,, then 
$ \sigma(s)\imc_A \sigma(s')$.
\end{itemize}
The conditions 
 prevent Player from constraining Opponent’s behaviour 
 beyond the constraints of the game. 
Condition (i)  is  {\em receptivity}, ensuring that the strategy is open to all moves of Opponent permitted by the game. Condition (ii), called {\em innocence} in~\cite{FP}, ensures that the only additional immediate causal dependencies a strategy can enforce beyond those of the game are those  in which a Player move awaits moves of Opponent.  A map $f:\sig \Rightarrow \sig'$ of
strategies $\sig:S\to A$ and $\sig' :S' \to A$ is a map $f:S\to S'$ 
such that $\sig=\sig'f$; this determines when 
strategies are isomorphic.  

Following~\cite{conway, joyal}, a {\em strategy  from} a
game $A$
{\em to} a game $B$ is a strategy in the game $A^\perp \vvbar B$. 
Given a
 strategy from $B$ to a game $C$, so in $B^\perp\vvbar C$, we compose the two strategies essentially by playing them against each other in the common game $B$, where if one strategy makes a Player move the other sees it as a move of Opponent. 
The conditions of receptivity and innocence 
precisely 
ensure that the copycat strategy behaves as 
identity w.r.t.~composition, detailed below~\cite{lics11}. 


\subsection{Copycat}
 Let $A$ be a game. The copycat strategy $\cc_A:\CC_A\to A^\perp\vvbar A$ is an instance of a
strategy from $A$ to $A$. The event
structure $\CC_A$ is based on the idea that Player moves in one
component of the game $A^\perp\vvbar A$ always copy 
corresponding moves of
Opponent in the other component. For $c \in A^\perp\vvbar A$ we use $\bar c$ to mean the corresponding copy of $c$, of opposite polarity, in the
alternative component. The event structure $\CC_A$ comprises   $A^\perp\vvbar A$  with extra causal dependencies 
$\bar c\leq c$ for all  
events $c$ with $\pol_{A^\perp\vvbar A}(c) = +$; with the original causal dependency they generate a partial order; 
a finite subset 
is consistent in $\CC_A$
iff its down-closure w.r.t.~
$\leq$ is consistent in $A^\perp\vvbar A$. 
The map $\cc_A$ acts as  the identity function. 
 In characterising the configurations of $\CC_A$ we recall the Scott order of Defn
 ~\ref{def:scottorder}.
  \begin{lemma}\label{lem:CCbel} Let $A$ be a game.  Let
  $x\in\iconf{A^\perp}$ and $y\in\iconf A$.  Then  
 $$x\vvbar y \in\iconf{\CC_A} \ \hbox{ iff }\ y \bel_A  x\,.$$
   \end{lemma}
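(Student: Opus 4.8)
The plan is to unfold the definition of $\iconf{\CC_A}$ and read the Scott-order condition directly off it, handling the two implications separately; the whole argument is a definition chase once the extra causal links of $\CC_A$ are made explicit.

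First I would fix coordinates, writing each event of $A^\perp\vvbar A$ as $(1,a)$ for the copy in the $A^\perp$-component and $(2,a)$ for the copy in the $A$-component, with $a$ an event of $A$; thus $x\vvbar y=\set{(1,a)}{a\in x}\cup\set{(2,a)}{a\in y}$, and $\pol_{A^\perp\vvbar A}(1,a)=-\pol_A(a)$ while $\pol_{A^\perp\vvbar A}(2,a)=\pol_A(a)$. Unwinding the construction of $\CC_A$, the order pairs added to $\leq_{A^\perp\vvbar A}$ are precisely $(1,a)\leq(2,a)$ when $\pol_A(a)=+$ and $(2,a)\leq(1,a)$ when $\pol_A(a)=-$ (each $\bar c\leq c$ for positive $c$), and $\leq_{\CC_A}$ is their transitive closure with $\leq_{A^\perp\vvbar A}$. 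Consequently a subset of $\CC_A$ is $\leq_{\CC_A}$-down-closed iff it is $\leq_{A^\perp\vvbar A}$-down-closed and closed under those added pairs.

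For the direction $y\bel_A x\Rightarrow x\vvbar y\in\iconf{\CC_A}$: down-closure of $x\vvbar y$ under $\leq_{A^\perp\vvbar A}$ is immediate, since $x\in\iconf{A^\perp}$, $y\in\iconf A$, $A^\perp$ carries the same order as $A$, and $\vvbar$ adds no cross-component causality; and closure under the added pairs is exactly the two inclusions $y^+\subseteq x^+$ and $x^-\subseteq y^-$ that make up $y\bel_A x$ (if $(2,a)\in x\vvbar y$ with $\pol_A(a)=+$ then $a\in y^+\subseteq x^+$ so $(1,a)\in x\vvbar y$, and dually). For consistency, given a finite $X\subseteq x\vvbar y$, its $\leq_{\CC_A}$-down-closure $[X]$ already lies inside $x\vvbar y$ and is finite; its restrictions to the two components are finite subsets of $x$ and of $y$, hence consistent in $A^\perp$ and in $A$ respectively, so by the definition of $\vvbar$ the set $[X]$ is consistent in $A^\perp\vvbar A$, i.e.\ $X\in\Con_{\CC_A}$. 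Hence $x\vvbar y$ is a configuration of $\CC_A$. Conversely, if $x\vvbar y\in\iconf{\CC_A}$ it is in particular closed under the added pairs, and instantiating $(1,a)\leq(2,a)$ for $\pol_A(a)=+$ and $(2,a)\leq(1,a)$ for $\pol_A(a)=-$ gives respectively $y^+\subseteq x^+$ and $x^-\subseteq y^-$, that is, $y\bel_A x$.

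There is no deep obstacle; the point requiring care is the polarity bookkeeping---$\bel_A$ is defined for configurations of $A$, but $x$ here lives in $A^\perp$, so $x^\pm$ must be read throughout via $\pol_A$---together with identifying correctly which dependencies copycat inserts. It is also worth carrying out the two halves of the forward direction in the stated order: down-closure must be in hand before the consistency step, which uses that $x\vvbar y$ already contains the $\leq_{\CC_A}$-down-closure of each of its finite subsets.
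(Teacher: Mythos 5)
Your proof is correct: the paper states Lemma~\ref{lem:CCbel} without giving a proof (it is imported from the earlier work on concurrent strategies), and your definition-chase---identifying the added dependencies $(1,a)\leq(2,a)$ for $\pol_A(a)=+$ and $(2,a)\leq(1,a)$ for $\pol_A(a)=-$, checking down-closure componentwise plus closure under these pairs, and reducing consistency in $\CC_A$ to componentwise consistency of the down-closure---is exactly the standard argument. The only implicit point is that $\leq_{\CC_A}$ is finitary (so $[X]$ is finite), which holds since $\CC_A$ is an event structure, so there is no gap.
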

\subsection{Composition}
Two strategies $\sig:S\to A^\perp\vvbar B$ and $\tau:T\to B^\perp \vvbar C$
compose via pullback and hiding, summarised below.
$$\small\xymatrix@R=12pt@C=10pt{
 &\ar[dl]_{\pi_1
 }T\sncirc S \ar@^{-->}[rr]
\ar@{..>}[dd]|{\tau\sncirc\sig}
\pb{270}\ar[dr]^{\pi_2
 }&  & T\scirc S \ar@{-->}[dd]^{\tau\scirc \sig}\\
S\vvbar C\ar[dr]_{\sig\vvbar C}&&\ar[dl]^{A\vvbar \tau}A\vvbar T\\
 &A\vvbar B\vvbar C\ar@^{->}[rr]&
 &A\vvbar C}$$
Ignoring polarities, by forming the pullback of $\sig\vvbar C$ and $A\vvbar \tau$ we obtain the synchronisation of
complementary moves of $S$ and $T$ over the common game $B$;  subject to the causal constraints of $S$ and $T$, the effect is to instantiate the Opponent moves of $T$ in $B^\perp$ by the corresponding Player moves of $S$ in $B$, and {\it vice versa}.  Reinstating polarities we obtain the {\em interaction}  of $\sig$ and $\tau$  
$$
\tau\sncirc \sig: T\sncirc S \to A^\perp \vvbar B^0 \vvbar C\,,
$$
where we assign neutral polarities to all moves in or over $B$. Neutral moves over the common part $B^0$ remain unhidden. The map $A^\perp \vvbar B^0\vvbar C\pto A^\perp\vvbar C$  is undefined on $B^0$ and otherwise mimics the identity. Pre-composing this map with $\tau\sncirc \sig$ we obtain a partial map $T\sncirc S\pto A^\perp\vvbar C$; it is undefined on precisely the neutral events of $T\sncirc S$.  The defined parts of its partial-total factorization yields 
 $$\tau\scirc \sig: T\scirc S\to A^\perp\vvbar C\,$$ 
 ---this is the {\em composition} of $\sig $ and $\tau$. 
 
 \begin{notation}{\rm For $x\in\iconf S$ and $y\in\iconf T$, 
let $\sig x = x_A\vvbar x_B$ and $\tau y = y_B\vvbar y_C$ where $x_A\in \iconf A$, $x_B, y_B\in\iconf B$, $y_C\in\iconf C$.  
Define
$
y\sncirc x = (x\vvbar y_C) \wedge (x_A\vvbar y)
$.
This is a partial operation only  defined if the $\wedge$-expression is. 
It is defined and glues configurations $x$ and $y$ together at their common overlap over $B$ provided $x_B=y_B$ and a finitary partial order of causal dependency results. 
Any 
 configuration of $T\sncirc S$ has the form 
$
y\sncirc x
$,
for unique $x\in\iconf S, y\in\iconf T$.   
 } \end{notation}

\subsection{A bicategory of strategies}
 We obtain a bicategory $\Strat$ for which the objects are games,  the arrows $\sig: A\profto B$  are strategies $\sig:S\to A^\perp\vvbar B$;
 with 2-cells $f:\sig \Rightarrow \sig'$ maps of strategies. The vertical composition of 2-cells is the usual composition of maps.  Horizontal composition 
is 
the composition of strategies $\scirc$ (which extends to a functor 
via the universality of pullback and partial-total factorisation).  We can restrict the 2-cells to be rigid maps and still obtain a bicategory. 
The bicategory of strategies is compact-closed, 
though with the addition 
of winning conditions---Section~\ref{sec:winningcond}---this weakens to $*$-autonomous.   

A strategy  $\sig:S\to A$ is {\em deterministic} if $S$ is deterministic, \viz 
$$
\all X\fsubseteq S.\ [X]^-\in \Con_S \implies X\in\Con_S\,,
$$
where
$[X]^- \eqdef \set{s'\in S}{\exists s\in X.\ \pol_S(s')=- \ \&\ s'\leq s}$.
So, a strategy is deterministic if consistent behaviour of Opponent is answered by consistent behaviour of Player.
Copycat $\cc_A$ is deterministic iff the game $A$ is {\em race-free}, \ie~if $x\subseteq^- y$ and  
$x\subseteq^+ z$ in $\iconf A$
then $y\cup z\in \iconf A$. 
    The bicategory of strategies restricts to a bicategory of deterministic strategies between race-free games.

    There are several ways to reformulate strategies.  Deterministic strategies coincide with the {\em receptive} ingenuous strategies of Melli\`es and Mimram based on asynchronous transition systems~\cite{Asgames,DBLP:journals/fac/Winskel12}. Via the Scott order, we can see strategies  as a refinement of  profunctors:  a strategy in a game $A$  induces a discrete fibration, so presheaf, on $(\conf A, \bel_A)$, a construction which extends to strategies between games~\cite{fossacs13}.      
 
 
 \subsection{Winning conditions}\label{sec:winningcond}
 
{\em Winning conditions} of a game $A$ specify 
a subset  $W$ of  its  configurations, an outcome in which 
is a win for Player.  Informally, a
 strategy (for Player) is  {\em winning}    if it   always prescribes moves for  Player to  end up in a winning configuration, no matter what the activity or inactivity of Opponent. 

 Formally, a game with winning conditions $(A, W_A)$ comprises a concurrent game $A$ with winning conditions $W_A\subseteq \iconf A$. 
 A strategy  $\sig:S\to A$   is {\em winning}  if  $\sig x$ is in $W_A$ for all   +-maximal configurations $x$ of $S$; in general, a configuration is +-maximal if 
 no additional Player, or neutral, moves can occur from it. That $\sig$ is winning can be shown  equivalent to:  all plays of $\sig$ against any counterstrategy of Opponent result in a win for Player~\cite{lics12,ecsym-notes}.

As the dual of a game with winning conditions $(A, W_A)$ we again reverse the roles of Player and Opponent, 
and take its  winning conditions to be the  
set-complement  of $W_A$, \ie~$(A,W_A)^\perp = (A^\perp, \iconf A \setdif W_A)$. 

In a 
parallel composition  
of games with winning conditions, 
we deem
a configuration $x$ of ${A\vvbar B}$ 
winning if  its component  $x_A$ is winning in $A$ or its  component  $x_B$ is winning in $B$: 
$(A, W_A) \vvbar (B, W_B) \eqdef (A\vvbar B, W)$
where 
$W= \set{x\in\iconf{A\vvbar B}}{x_A\in W_A \hbox{ or } x_B \in W_B}$.  

With these extensions, we take a winning strategy from a game $(A,W_A)$ to a game $(B,W_B)$, 
to be a winning strategy in the game $A^\perp\vvbar B$ ---its winning conditions form the set $$\set{x\in\iconf{A^\perp\vvbar B}}{x_A\in W_A \Rightarrow x_B \in W_B}\,.$$  When games are race-free, copycat will  be a winning strategy.
The composition of winning strategies is winning~\cite{lics12,ecsym-notes}.  In the proof the following lemma is 
 critical:  
\begin{lemma}\label{lem:+max} Let $\sig:S\to A^\perp \vvbar B$ and 
$\tau:T\to B^\perp\vvbar C$ be strategies.  Suppose  
$y\sncirc x\in \iconf{T\sncirc S}$ where 
$x\in\iconf S$ and $y\in\iconf T$.  Then,
$y\sncirc x$
 is +-maximal iff both  
$x$ and $y$ are +-maximal.  
\end{lemma}
One can extend winning conditions to payoff functions~\cite{payoff} or to allow draws, where neither player wins~\cite{dexter}.

\subsection{Imperfect information}
 In  a game of  {\em imperfect information}  
 some moves are masked, or inaccessible,  
 and strategies with dependencies on unseen moves are ruled out. One can 
 extend   games with imperfect information  in a way that respects the operations of concurrent games and strategies~\cite{dexter}. 
 Each move of a game is assigned a level in a global order of access levels;  moves of the game or its strategies can only causally depend on moves at equal or lower levels.  
 
 In more detail, a fixed preorder of {\em access levels}
  $(\Lambda, \preceq)$ is pre-supposed.   
  A   {\em $\Lambda$-game} comprises a game $A$  with a {\em level function} $l:A\to \Lambda$ such that if
$a\leq_A a' $ then  
$ l(a)\preceq l(a')
$ 
 for all moves $a, a'$ in $A$.  A  {\em $\Lambda$-strategy} in the $\Lambda$-game is a   strategy $\sig:S\to A$ for which 
if $
s\leq_S s' $
then 
$l\sig(s) \preceq l\sig(s')
$
  for all $s, s'$ in $S$.  
 The access levels of moves in a game are left undisturbed in forming the dual and parallel composition of $\Lambda$-games. 
 As before, a $\Lambda$-strategy from a  $\Lambda$-game $A$ to a  $\Lambda$-game $B$ is a  $\Lambda$-strategy in the game $A^\perp\vvbar B$.  It can be shown that $\Lambda$-strategies compose~\cite{dexter}.

\subsection{A language for strategies}\label{sec:corelang}
We recall briefly the language for strategies introduced in~\cite{stratasproc}.  
Games  $A, B, C, \cdots$ play the role of types.  Operations on games include forming the dual $A^\perp$, 
simple parallel composition $A\vvbar B$, a
sum $\Sigma_{i\in I} A_i$ as well as recursively-defined games. 

Terms, denoting strategies, 
have typing judgements
\begin{center}$
x_1:A_1, \cdots, x_m:A_m \vdash\  t \ \dashv y_1:B_1, \cdots, y_n:B_n\ ,
$
\end{center}
where all the variables are distinct,
interpreted as a strategy from  
$\vec A=
 A_1 \vvbar \cdots \vvbar A_m$   to  
 $\vec B= B_1 \vvbar \cdots \vvbar B_n$.
 We can picture the term $t$ as a box with input and output wires for the  variables $\vec x$ and $\vec y$:
 {\small \begin{center}
 \setlength{\unitlength}{2947sp}%
\begingroup\makeatletter\ifx\SetFigFont\undefined%
\gdef\SetFigFont#1#2#3#4#5{%
  \fontfamily{#3}\fontseries{#4}\fontshape{#5}%
  \selectfont}%
\fi\endgroup%
\begin{picture}(2124,699)(889,-598)
\thinlines
{\put(1501,-586){\framebox(900,675){}}
}%
{\put(901,-61){\vector( 1, 0){600}}
}%
{\put(2401,-511){\vector( 1, 0){600}}
}%
{\put(901,-511){\vector( 1, 0){600}}
}%
{\put(2401,-61){\vector( 1, 0){600}}
}%
\put(976,0){\makebox(0,0)[lb]{\smash{{\SetFigFont{9}{14.4}{\rmdefault}{\mddefault}{\updefault}{$A_1$}%
}}}}
\put(976,-451){\makebox(0,0)[lb]{\smash{{\SetFigFont{9}{14.4}{\rmdefault}{\mddefault}{\updefault}{$A_m$}%
}}}}
\put(2626,0){\makebox(0,0)[lb]{\smash{{\SetFigFont{9}{14.4}{\rmdefault}{\mddefault}{\updefault}{$B_1$}%
}}}}
\put(2626,-451){\makebox(0,0)[lb]{\smash{{\SetFigFont{9}{14.4}{\rmdefault}{\mddefault}{\updefault}{$B_n$}%
}}}}
\put(2476,-361){\makebox(0,0)[lb]{\smash{{\SetFigFont{12}{14.4}{\rmdefault}{\mddefault}{\updefault}{$\vdots$}%
}}}}
\put(1301,-361){\makebox(0,0)[lb]{\smash{{\SetFigFont{12}{14.4}{\rmdefault}{\mddefault}{\updefault}{\vdots}%
}}}}
\end{picture}%
\end{center} }
 
 The term  
$t$ denotes a strategy $\sig: S\to \vec A^\perp \vvbar \vec B$. It does so by
 describing witnesses,  configurations of $S$, to a
relation between  configurations $\vec x$ of $\vec A$ and $\vec y$ of
$\vec B$.  
For example, the term 
$$x:A \vdash y\bel_A x \dashv y:A$$ 
denotes the copycat strategy on a game $A$; it describes configurations of copycat, $\CC_A$,  
as 
witnesses, \viz~ 
those 
configurations $x\vvbar y$ of $\CC_A$ 
for which $y\bel_A x$ in the Scott order. 
There are other operations, such as sum $\fatbar$ and pullback $\wedge$ on strategies of the same type.

 \noindent
 Duality 
  is caught by the rules
$$
\pfrule{\Gamma, x:A \vdash t \dashv  \Delta}{ \Gamma \vdash t \dashv x:{A}^\perp, \Delta}
 \qquad
\pfrule{\Gamma \vdash t \dashv x:A, \Delta}{ \Gamma, x: {A}^\perp \vdash t \dashv \Delta}
$$
and composition of   strategies by 
$$
\pfrule{{\Gamma \vdash t \dashv \Delta \qquad \Delta \vdash u \dashv \Eta} }{\Gamma \vdash 
\stcomp\Delta t u
 \dashv \Eta}
$$
which, in the picture of strategies as boxes, joins the output wires of one strategy to input  wires of the other.   Simple parallel composition of strategies arises 
when $\Delta$ is empty.

\section{From functions to strategies}\label{sec:part1}
The language for strategies in~\cite{stratasproc} included a judgement
$$
x:A \vdash g (y) \bel_{C} f (x) \dashv y:B
$$
for building strategies out of expressions $f(x)$ and $g(y)$ denoting ``affine functions.''   It breaks down into a composition 
$$
x:A \vdash 
\stcomp{z:C }{ g (y)\bel_C  z}{  z\bel_{C} f (x) }
\dashv y:B\,.
$$ 
Here we present a considerably broader class of affine-stable functions $f$  and ``co-affine-stable'' functions $g$ with which to define strategies in this manner. 
It hinges on the Scott order to 
convert functional dependency to
causal dependency, in the sense captured by Theorem~\ref{thm:affstmaptostrats} below. 
 

\subsection{Affine-stable maps and their strategies}\label{sec:affstab}

\begin{definition}\label{def:affine-stable}{\rm 
An {\em affine-stable map} between games from $A$ to $B$, is 
a  function $f: \iconf A \to \iconf B$ which is\\
$\bullet$ 
{\em polarity-respecting:} for $x, y\in\iconf A$, 
$$
x\subseteq^- y\Rightarrow f(x)\subseteq^- f(y)
\ \hbox{ and } \   
x\subseteq^+ y\Rightarrow f(x)\subseteq^+ f(y)\,;
$$
$\bullet$ 
{\em +-continuous:} for $x\in\iconf A$, 
$$
b\in f(x) \ \&\ \pol_B(b) = + \Rightarrow \exists x_0\in\conf A.\  x_0\subseteq x\ \&\ b\in f(x_0)\,;
$$
$\bullet$ 
 {\em $-$-image finite:} for all finite configurations 
$
x\in\conf A$    the set  $ f(x)^-$   is finite; \\
$\bullet$ 
 {\em affine:} for all compatible  families $\set{x_i}{i\in I}$ in $ \iconf A$, 
$$
\hbox{$\bigcup$}_{i\in I} f (x_i) \subseteq^+ f( \hbox{$\bigcup$}_{i\in I} x_i)\,
$$
---when $I$ is empty this  amounts to $\emptyset \subseteq^+ f(\emptyset)$; and \\
$\bullet$ 
 {\em stable:}  for all 
 compatible  families $\set{x_i}{i\in I}\neq \emptyset$ in $ \iconf A$, 
$$
 f( \hbox{$\bigcap$}_{i\in I} x_i) \subseteq^-  \hbox{$\bigcap$}_{i\in I} f ( x_i)\,.
$$
}\end{definition}

When all the moves of games $A$ and $B$ are those of Player, the definition reduces to that of 
stable function.
If 
all moves are those of Opponent, it becomes that of demand maps---see Section~\ref{sec:stabspans}~\cite{Visions}.  Affine-stable maps include maps of event structures with polarity, including partial maps between games, and the affine maps of~\cite{stratasproc}.   They are 
the most general maps out of which we can construct a corresponding strategy, in a  way we now describe.

\begin{theorem}\label{thm:affstmaptostrats}
Let $f: \iconf A \to \iconf B$ be an affine-stable map between games $A$ and $B$.    
Then 
$$
\Fam \eqdef \set{
x\vvbar y\in \iconf{A^\perp\vvbar B}}{y\bel_B f(x)}
$$
is a 
stable family.  
The map $\max:\Pr(\Fam)\to A^\perp\vvbar B$ is a strategy $f_!:A\profto B$.  The strategy $f_!$ is deterministic if  $A$ and $B$ are race-free and $f$ reflects $-$-compatibility, \ie~$x\subseteq^- x_1$ and  $x\subseteq^-  x_2$ in\,   $\iconf A$
 and
 $f x_1 \cup f x_2\in \iconf B$ implies  
 $x_1\cup x_2\in \iconf A$.  
\end{theorem}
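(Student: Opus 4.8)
The plan is to check that $\Fam$ satisfies the four axioms of a stable family, then that $\max\colon\Pr(\Fam)\to A^\perp\vvbar B$ is a total polarity-preserving map which is receptive and innocent, and finally to treat determinism. Two facts should be set down first. (a) $f$ is monotone --- this follows from polarity-respecting together with $+$-continuity --- and, in particular, the unions in the affine clause are genuine configurations. (b) Since $B$ is a game it has no neutral events, so $u\subseteq^+ v$ forces $v^-=u^-$ and $u\subseteq^- v$ forces $v^+=u^+$ in $\iconf B$; hence the condition $y\bel_B f(x)$ defining $\Fam$ reads simply as $f(x)^-\subseteq y^-$ and $y^+\subseteq f(x)^+$. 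I will use (a) and (b) throughout.

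For the stable-family axioms: non-emptiness is the empty instance of affineness, giving $\emptyset\subseteq^+ f(\emptyset)$, so $f(\emptyset)^-=\emptyset$ and $\emptyset\vvbar\emptyset\in\Fam$. Completeness: writing a finitely compatible $Z\subseteq\Fam$ as $\setof{x_i\vvbar y_i}$, we have $\bigcup Z=(\bigcup_i x_i)\vvbar(\bigcup_i y_i)$, and affineness $\bigcup_i f(x_i)\subseteq^+ f(\bigcup_i x_i)$ together with (b) gives $f(\bigcup_i x_i)^-=\bigcup_i f(x_i)^-\subseteq\bigcup_i y_i^-$ and $\bigcup_i y_i^+\subseteq\bigcup_i f(x_i)^+\subseteq f(\bigcup_i x_i)^+$, i.e.\ $\bigcup Z\in\Fam$; stability is dual, using the stable clause $f(\bigcap_i x_i)\subseteq^- \bigcap_i f(x_i)$. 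Finitariness and coincidence-freeness both rest on one observation: for $x\vvbar y\in\Fam$ and any configuration $x_0\subseteq x$ of $A$, the pair $x_0\vvbar[f(x_0)^-]$, with $[f(x_0)^-]$ the down-closure in $B$ of the Opponent part of the configuration $f(x_0)$, again lies in $\Fam$ and is contained in $x\vvbar y$ --- because $f(x_0)$ is itself a configuration of $B$, hence down-closed, so $[f(x_0)^-]\subseteq f(x_0)$ and therefore $[f(x_0)^-]^+\subseteq f(x_0)^+$, while $f(x_0)^-\subseteq f(x)^-\subseteq y^-$ by (a). For finitariness, given an event $e$ of $x\vvbar y$ in the $A$-component one takes $x_0=[e]$, its down-closure in $A$, and uses $-$-image-finiteness together with finitariness of $B$ to see that $[f([e])^-]$ is finite; an event $e$ in the $B$-component is handled by first picking, via $+$-continuity, a finite $x_0\subseteq x$ whose $f$-image produces every Player move of $B$ below $e$, then taking $[e]\cup[f(x_0)^-]$ on the $B$-side. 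For coincidence-freeness I would compute the prime configuration $[e]_{x\vvbar y}$ in $\Fam$: for $e$ in the $A$-component it is $[e]\vvbar[f([e])^-]$, and for $e$ in the $B$-component it is $x_0^{\ast}\vvbar\big([e]\cup[f(x_0^{\ast})^-]\big)$, where $x_0^{\ast}$ is the least configuration of $A$ below $x$ with $[e]^+\subseteq f(x_0^{\ast})^+$ (such a least configuration exists because the stable clause makes those configurations closed under intersection); distinctness of $[e]_{x\vvbar y}$ and $[e']_{x\vvbar y}$ for $e\neq e'$ then reduces to coincidence-freeness of $\iconf A$ and of $\iconf B$, provided one uses polarity-respecting to argue that any $B$-move which at first sight seems forced into a prime configuration alongside a second move is in fact already produced by, or already below a demand created by, strictly less, so has a strictly smaller prime configuration.

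That $\max$ is total and polarity-preserving is immediate from Proposition~\ref{prop:Pr} and the definition of polarities on $\Pr(\Fam)$. For receptivity, note that an inclusion $x\vvbar y\subseteq^- x'\vvbar y'$ in $\iconf{A^\perp\vvbar B}$ unpacks to $x\subseteq^+ x'$ in $A$ and $y\subseteq^- y'$ in $B$; then polarity-respecting gives $f(x)\subseteq^+ f(x')$, so $f(x')^-=f(x)^-\subseteq y^-\subseteq y'^-$ and $y'^+=y^+\subseteq f(x)^+\subseteq f(x')^+$, hence $x'\vvbar y'\in\Fam$, and the required extension is the unique one obtained through the order-isomorphism of Proposition~\ref{prop:Pr}. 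For innocence, the computation of prime configurations above shows that the only immediate dependencies in $\Pr(\Fam)$ not already present in $A^\perp\vvbar B$ are those of a $B$-Opponent move lying below an Opponent move of $A$ whose adjunction created that demand --- an ``Opponent before Player'' dependency in $A^\perp\vvbar B$, which innocence permits; the potentially forbidden combinations are excluded because along a $\subseteq^-$ step the Player part of the $f$-image cannot grow and along a $\subseteq^+$ step the Opponent part cannot, so any $B$-move that would witness such a dependency was available strictly earlier and so is not an immediate predecessor.

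Finally, for determinism, when $A$ and $B$ are race-free and $f$ reflects $-$-compatibility I would take a finite $X\fsubseteq\Pr(\Fam)$ with $[X]^-\in\Con$, read consistency through Proposition~\ref{prop:Pr} as ``the union of the primes lies in $\Fam$'', and show that consistency of the Opponent down-closure propagates: race-freeness of $B$ amalgamates the $\subseteq^-$ part contributed by the demands with the $\subseteq^+$ part contributed by the Player behaviour, and $f$ reflecting $-$-compatibility carries the resulting $B$-consistency back across $f$ to consistency of the $A$-components, so $\bigcup X\in\Fam$. The step I expect to be the real obstacle is coincidence-freeness, and the parallel check buried inside innocence: completeness, stability and finitariness are routine, but ruling out that the construction silently identifies two distinct moves into a single causal atom requires using the polarity-respecting hypothesis in exactly the right places, and a naive argument at that point does not close.
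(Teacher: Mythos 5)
Your skeleton is essentially the paper's: the completeness and stability checks, the least witness $x_0^\ast$ obtained by intersection via the stable clause, the candidate least subconfigurations $[a]\vvbar[f([a])^-]$ and $x_0^\ast\vvbar([b]\cup[f(x_0^\ast)^-])$ (which, since $y$ is down-closed, coincide with the paper's $[a]\vvbar(y\cap[f([a])^-])$ and $x_0\vvbar(y\cap([f(x_0)^-]\cup[b]))$), receptivity from polarity-respecting, and the increment-by-increment treatment of determinism. But the step you flag as ``not closing'' is exactly where the paper does its real work, and it is a genuine gap rather than a routine check. The missing lemma is: for $x_0^\ast$ the least subconfiguration of $x$ with $[b]^+\subseteq f(x_0^\ast)$, every $\leq$-maximal event of $x_0^\ast$ is $+$ve in $A$ (drop a maximal $-$ve event and polarity-respecting would still yield $[b]^+$, contradicting minimality), and consequently $b\notin[f(x_0^\ast)^-]$, proved by cases on $\pol_B(b)$ using polarity-respecting to pass to some $x_0'\subsetneq^+ x_0^\ast$ and the empty instance of affinity when $x_0^\ast=\emptyset$. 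Only with this does equality of the primes of two $B$-events force $b_1\in[b_2]$ and $b_2\in[b_1]$, hence $b_1=b_2$; and the mixed case of one $A$-event against one $B$-event is not covered by ``coincidence-freeness of $\iconf A$ and $\iconf B$'' at all --- the paper separates it by cases on whether $b\in f([a])$, $b\in f([a))$, and on $\pol_A(a)$. (A small repair elsewhere: monotonicity of $f$ does not follow from polarity-respecting plus $+$-continuity; it follows from the stable clause applied to the compatible pair $\{x',x\}$, giving $f(x')\subseteq^- f(x')\cap f(x)$.)

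The innocence paragraph needs the same lemma and currently gets the polarity bookkeeping wrong. The principal new immediate dependencies introduced by $f_!$ are of $+$ve $B$-moves on $A$-moves --- that is the whole point of turning functional into causal dependency (compare the detector example) --- and innocence demands that those immediate $A$-predecessors, i.e.\ the maximal events of $x_0^\ast$, be $+$ve in $A$ (so Opponent in $A^\perp\vvbar B$); nothing in your sketch establishes this. There are also new dependencies of $A$-events on the Opponent demands $f(\cdot)^-$ in $B$: when $a$ is $-$ve in $A$ these are Opponent-before-Player and harmless, but when $a$ is $+$ve in $A$ such a dependency, if immediate, would be Opponent-before-Opponent in $A^\perp\vvbar B$, which innocence forbids --- here one must argue via polarity-respecting that $f([a])^-=f([a))^-$ and via affinity (including its empty case) that $f([a))^-=\bigcup_{a'<a}f([a'])^-$, so every such demand is mediated by a strictly smaller $A$-event and never an immediate predecessor. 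Your ``was available strictly earlier'' remark is the right instinct, but as stated it neither identifies these two classes of dependency correctly nor supplies the minimality/affinity arguments that discharge them; together with the acknowledged failure at coincidence-freeness, the proof does not go through as written.
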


The theorem above explains how to convert functional dependency, expressed as $y \bel_B f(x)$, to causal dependency between moves $\Pr(\Fam)$, obtained as primes of the stable family $\Fam$. 
The expression of functional dependency as causal dependency is quite subtle; the direction of causal dependency hinges critically on the polarities of events.

For $f$  an affine-stable map from $A$ to $B$ we can write $f_!$ as 
$$
x:A \vdash y \bel_{B} f(x) \dashv y:B\,.
$$
Through suitable 
$f$ we can create strategies from structural maps,  injections and projections as strategies,  for 
conditional and case statements and, generally, much of 
the causal wiring that is often explained informally in 
diagrammatic reasoning. 
If $\sig$ is a strategy in $A$ then $f_!\scirc \sig$ is its
``pushforward'' to a strategy in $B$.
Some 
basic examples:

\begin{example}\label{ex:projection}{\rm ({\it Projectors})  
Let $f:A\vvbar B \to B$ be the function undefined on game $A$ but acting as identity on game $B$.  Let $\sig$ be a strategy in the game $A\vvbar B$.  The  strategy $f_!\scirc \sig$ is its projection to a strategy in $B$. 
}\endex\end{example}

\begin{example}\label{ex:duplication}{\rm ({\it Duplicators})  
Let $A$ be a game.  Consider the function $d_A: x\mapsto x\vvbar x$ from $\iconf A$ to $\iconf{A\vvbar A}$.  It is easily checked to be affine-stable.  Hence there is a {\em duplicator} strategy $\delta_A={d_A}_!: A\profto A\vvbar A$.  (The strategy $\delta_A$ is not natural in $A$ 
 as $\vvbar$ is not a product, except in subcategories.) 
}\endex\end{example}

\begin{example}\label{ex:detector}{\rm {\it (Detectors)}
Let $A$ be a game.  Let $X\in\Con_A$ with $X\subseteq A^+$.  Let $\plmove$ be a single ``detector'' event, of +ve polarity.  Let 
$$
d_X:\iconf A \to \iconf\plmove
$$
be the function such that
$$
d_X(x) =
\begin{cases}
\plmove &  \hbox{ if } X\subseteq x\,,\\
\emptyset &  \hbox{ otherwise. }
\end{cases}
$$
The function $d_X$ is affine-stable. There is a {\em detector} strategy
$$
{d_X}_!: A \profto \plmove\,.
$$
The strategy 
simply adjoins extra causal dependencies $a\imc \plmove$ from $a\in X$.  It detects the presence of $X$.  In a similar way, one can extend detectors to detect the occurrence of one of a family $\langle X_i\rangle_{i\in I}$ of $X_i\in\Con_A$ provided 
$
X_i\cup X_j \in \Con_A \implies i=j
$
for $i,j\in I$.  
}\endex\end{example}

\begin{example}\label{ex:inhibitor}{\rm ({\it Blockers})
Let $A$ be a game and $Y\subseteq A^-$. Let 
$$
h_Y : \iconf A \to \iconf\opmove
$$
be the function which acts so
$$
h_Y(x) = 
\begin{cases}\opmove &\hbox{ if } x\cap Y\neq \emptyset\,,\\
\emptyset & \hbox{ otherwise.}
\end{cases}
$$
$h_Y$ is a map of event structures so affine-stable.  
The {\em blocker} strategy ${h_Y}_!$ 
adjoins causal dependencies $\opmove \imc a$ from $\opmove$ to each 
$a\in Y$.  The absence of move $\opmove$ blocks 
all moves $Y$.  
}\endex\end{example}
 

  
\begin{theorem}\label{thm:functor}
The operation $(\_)_!$ is a (pseudo) functor from the category of affine-stable maps  to concurrent strategies $\Strat$.  
\end{theorem}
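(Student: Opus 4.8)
The plan is to verify the two clauses of "(pseudo) functor": that $(\_)_!$ preserves identities up to isomorphism and composition up to isomorphism, having already by Theorem~\ref{thm:affstmaptostrats} that each affine-stable map $f:A\to B$ gives a strategy $f_!:A\profto B$. For identities, I would take the identity affine-stable map $\id_A:\iconf A\to\iconf A$ (the identity function on configurations, which is trivially polarity-respecting, $+$-continuous, $-$-image finite, affine, and stable), and show $(\id_A)_!$ is isomorphic to copycat $\cc_A$. By definition $(\id_A)_!$ is $\max:\Pr(\Fam)\to A^\perp\vvbar A$ with $\Fam=\set{x\vvbar y\in\iconf{A^\perp\vvbar A}}{y\bel_A x}$; by Lemma~\ref{lem:CCbel} these are exactly the configurations $x\vvbar y\in\iconf{\CC_A}$, so $\iconf{\Pr(\Fam)}\cong\Fam=\iconf{\CC_A}$ as orders (Proposition~\ref{prop:Pr}), and since both strategies act as the identity function on events, this order-isomorphism of configuration posets is induced by an isomorphism of event structures with polarity commuting with the maps to $A^\perp\vvbar A$. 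That gives $(\id_A)_!\cong\cc_A$.

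For composition, given affine-stable $f:A\to B$ and $g:B\to C$, I would show $(g\circ f)_!\cong g_!\scirc f_!$. The natural route is to compute the configurations of $g_!\scirc f_!$ via the interaction/hiding description and the $y\sncirc x$ notation: a configuration of the composite over $A^\perp\vvbar C$ should, after unwinding the pullback of $f_!\vvbar C$ and $A\vvbar g_!$ and hiding the neutral $B$-events, be exactly a pair $x\vvbar z$ with $z\bel_C g(f(x))$. Concretely, a secured bijection in the interaction pairs a configuration $x\vvbar w$ with $w\bel_B f(x)$ (a configuration of $f_!$, using $\theta$ of Proposition~\ref{prop:Pr}) against a configuration $w'\vvbar z$ with $z\bel_C g(w')$ (a configuration of $g_!$), synchronised so that $w=w'$; hiding the $B$-part leaves $x\vvbar z$, and one must check that ranging $w$ over all configurations with $w\bel_B f(x)$ and $z\bel_C g(w)$ yields precisely the $z$ with $z\bel_C g(f(x))$. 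Monotonicity of $g$ with respect to $\bel$ (which follows from $g$ being polarity-respecting: $w\bel_B f(x)$ means $w^-\supseteq f(x)^-$ and $w^+\subseteq f(x)^+$, and polarity-respecting plus affine/stable control how $g$ moves these inclusions) gives one direction via $w=f(x)$; the converse uses $+$-continuity and stability of $g$ to realise any $z\bel_C g(f(x))$ through an intermediate $w$. Then Proposition~\ref{prop:Pr} transports this bijection of configuration posets to an isomorphism of the prime event structures, and one checks it commutes with the maps to $A^\perp\vvbar C$, giving the required $2$-cell iso. Finally, naturality/coherence of these isos (the pseudofunctor coherence conditions) follows because all the isomorphisms are canonically induced by $\theta$ and $\varphi$ of Proposition~\ref{prop:Pr} and the universal properties of pullback and partial-total factorisation, so no nontrivial coherence check remains beyond bookkeeping.

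The main obstacle I expect is the composition clause, specifically the verification that hiding the intermediate $B$-events in the interaction of $f_!$ and $g_!$ exactly reproduces the stable family $\set{x\vvbar z}{z\bel_C g(f(x))}$ — in particular showing that securedness of the combined bijection corresponds to the finitary partial order always available in $(g\circ f)_!$ (which is a strategy by Theorem~\ref{thm:affstmaptostrats}, hence its underlying order is automatically fine), and that the "glueing over the common $B$-overlap" in the $y\sncirc x$ operation does not create spurious configurations or omit legitimate ones. This is where affineness and stability of $f$ and $g$ (as opposed to mere monotonicity) are essential: they are exactly what make the composite $g\circ f$ affine-stable again and what make the intermediate choices of $w$ cohere into a well-defined prime event structure after hiding. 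I would handle it by appealing to Proposition~\ref{prop:Pr} and the pullback theorem to reduce everything to an order-isomorphism of finite configuration posets, where the argument becomes a finite combinatorial check driven by the defining inequalities of $\bel$ and the five clauses of Definition~\ref{def:affine-stable}.
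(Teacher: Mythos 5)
Your overall plan---identities via Lemma~\ref{lem:CCbel} and Proposition~\ref{prop:Pr}, composition by analysing the interaction of $f_!$ and $g_!$, hiding the $B$-events, and using $w=f(x)$ as a canonical intermediate witness---is the same route the paper takes, but the two steps you flag as the expected obstacles are precisely the technical content of the paper's proof, and your proposed way past each of them does not work as stated. First, securedness: you suggest that securedness of the combined bijection ``corresponds to the finitary partial order always available in $(g\circ f)_!$''. This is circular: $(gf)_!$ is the target of the isomorphism you are trying to establish, and its being a strategy says nothing about whether a given pair $x\vvbar w$ in the stable family $\F$ of $f_!$ and $w\vvbar z$ in the family $\G$ of $g_!$ glue to a causal-loop-free configuration of the interaction of $f_!$ with $g_!$. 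The paper resolves this with a structural fact about strategies of the form $g_!$: the component of $g_!$ over its input game is partial rigid, i.e.\ $g_!$ introduces no new causal dependencies among its input moves (Proposition~\ref{prop:obs}), and when one side of an interaction is partial rigid over the common game no causal loop can arise (Lemma~\ref{lem:partlrigid-y*x}); only with this in hand is the interaction family exactly $\set{x\vvbar w\vvbar z}{x\vvbar w\in\F \ \&\ w\vvbar z\in\G}$, which your subsequent analysis presupposes.

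Second, the hiding step. Showing that the visible images of interaction configurations are exactly the pairs $x\vvbar z$ with $z\bel_C gf(x)$ (your monotonicity-plus-witness argument) only gives surjectivity of the map from configurations of $g_!\scirc f_!$ onto the stable family of $(gf)_!$; it does not rule out that two distinct configurations of the composite, arising from different hidden $B$-witnesses $w$, map to the same $x\vvbar z$. So it does not yet yield the order-isomorphism of configuration posets that you want to feed into Proposition~\ref{prop:Pr}. The paper proves this collapse at the level of prime events: it shows $[d]_{x\vvbar y\vvbar z}=[d]_{x\vvbar f(x)\vvbar z}$ for an arbitrary witness $y$, by verifying that $x\vvbar (y\cap f(x))\vvbar z$ again lies in the interaction family---and this is exactly where the stability of $g$ (applied to the compatible pair $f(x)$, $y$) is used. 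Your proposal names stability as ``essential'' but never performs this step, so the composition clause remains unproved as written. A minor further point: your two directions are swapped---the witness $w=f(x)$ is what shows every $z\bel_C gf(x)$ is realised, while the converse inclusion follows from monotonicity of $g$ with respect to $\bel$, itself a consequence of polarity-respecting via the decomposition $y\supseteq^- (y\cap f(x))\subseteq^+ f(x)$; no appeal to $+$-continuity is needed there.
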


\subsection{co-Affine-stability}
We examine the dual, or co-notion, to affine-stability.  
An affine-stable map $f$ from $A^\perp$ to $B^\perp$ yields a strategy  $f_!: A^\perp\profto B^\perp$, so by duality a strategy 
$f^*:B\profto A$.  We obtain 
the dual to Theorems~\ref{thm:affstmaptostrats},\ref{thm:functor}  as a corollary: 
\begin{corollary} Let $g:\iconf A\to \iconf B$ be such that 
 $g:\iconf{A^\perp}\to \iconf{B^\perp}$ is affine-stable, then
$$
{\mathcal G} \eqdef \set{y\vvbar x\in \iconf{B^\perp\vvbar A}}{g(x)\bel_B y}
$$
is a 
stable family.  
The map $\max:\Pr({\mathcal G})\to B^\perp\vvbar A$ is a strategy $g^*:B\profto A$.  The strategy $g^*$ is deterministic if  $A$ is race-free and $g$ reflects $+$-compatibility. The operation 
$(\_)^*$ is a contravariant (pseudo) functor from the category of affine-stable maps  to $\Strat$. 
\end{corollary}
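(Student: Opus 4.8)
The plan is to derive everything from the corresponding results for affine-stable maps (Theorems~\ref{thm:affstmaptostrats} and~\ref{thm:functor}) by conjugating with the duality operation $(\_)^\perp$ on games and strategies, which is a (contravariant) isomorphism of the relevant structures. The key observation is that duality on games $A\mapsto A^\perp$ swaps $\subseteq^-$ with $\subseteq^+$ and reverses the Scott order, so that it turns a strategy $\sig:S\to A^\perp\vvbar B$ (a strategy from $A$ to $B$) into a strategy $\sig^\perp:S^\perp \to (A^\perp)^\perp \vvbar B^\perp = A\vvbar B^\perp$, which is exactly a strategy from $B$ to $A$ once we identify $A\vvbar B^\perp$ with $(B^\perp)^\perp\vvbar\, A^\perp{}^\perp$ up to the symmetry $\vvbar$ isomorphism.

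First I would set up the notation: given $g:\iconf A\to\iconf B$ with $g:\iconf{A^\perp}\to\iconf{B^\perp}$ affine-stable, write $f$ for this affine-stable map from $A^\perp$ to $B^\perp$ (so $f$ and $g$ are the same function on the same underlying sets of configurations, only the polarities of the ambient games differ). By Theorem~\ref{thm:affstmaptostrats} applied to $f:\iconf{A^\perp}\to\iconf{B^\perp}$, the family $\Fam_f=\set{u\vvbar v\in\iconf{(A^\perp)^\perp\vvbar B^\perp}}{v\bel_{B^\perp} f(u)}$ is stable and $\max:\Pr(\Fam_f)\to (A^\perp)^\perp\vvbar B^\perp$ is a strategy $f_!:A^\perp\profto B^\perp$. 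Now $(A^\perp)^\perp = A$, and by Definition~\ref{def:scottorder} the Scott order in $B^\perp$ is the reverse of that in $B$: $v\bel_{B^\perp} f(u)$ iff $f(u)\bel_B v$, since reversing polarities exchanges the two conditions $y^-\supseteq x^-$ and $y^+\subseteq x^+$. Hence $\Fam_f$, rewritten over $A\vvbar B^\perp$, is precisely $\set{x\vvbar y}{g(x)\bel_B y}$; reindexing by the symmetry isomorphism $A\vvbar B^\perp\iso B^\perp\vvbar A$ gives the stated family $\G=\set{y\vvbar x\in\iconf{B^\perp\vvbar A}}{g(x)\bel_B y}$. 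Since stability of a family is preserved under such a relabelling of the underlying event structure, $\G$ is a stable family, and $\max:\Pr(\G)\to B^\perp\vvbar A$ is the strategy obtained from $f_!:A^\perp\profto B^\perp$ by the duality functor, i.e.\ $g^* = (f_!)^\perp : B\profto A$.

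Next, for the determinism clause: Theorem~\ref{thm:affstmaptostrats} says $f_!$ is deterministic when $A^\perp$ and $B^\perp$ are race-free and $f$ reflects $-$-compatibility. A game is race-free iff its dual is (the condition is symmetric in the $+/-$ roles), so race-freeness of $A^\perp$ is race-freeness of $A$; here $B$ need not be assumed race-free because $\G$ only constrains configurations of $B^\perp$ on the "output" side via $\bel_B$ and the relevant copycat that must stay deterministic lives over $A$. Under duality, "$f$ reflects $-$-compatibility" (for $f$ viewed over $A^\perp,B^\perp$) becomes exactly "$g$ reflects $+$-compatibility" (for $g$ viewed over $A,B$), since $\subseteq^-$ in $A^\perp$ is $\subseteq^+$ in $A$; and determinism of a strategy is preserved by the duality isomorphism since $S^\perp$ is deterministic iff $S$ is. This yields the determinism statement. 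Finally, functoriality of $(\_)^*$ follows from Theorem~\ref{thm:functor}: $(\_)_!$ is a pseudofunctor into $\Strat$, duality $(\_)^\perp$ is a pseudofunctor $\Strat\to\Strat^{\mathrm{op}}$ (it is a compact-closed, hence self-dual, structure), and $(\_)^*$ is the composite of $(\_)_!$ with $(\_)^\perp$ after restricting along $A\mapsto A^\perp$, $B\mapsto B^\perp$ on objects; composing pseudofunctors gives a pseudofunctor, contravariant because of the $(\_)^\perp$ factor.

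I expect the only genuine subtlety—everything else being bookkeeping—to be verifying cleanly that duality really does interchange the two determinism side-conditions and the two race-freeness hypotheses in exactly the asymmetric way the statement asserts (race-freeness of $A$ alone, reflection of $+$-compatibility alone), rather than symmetrically; this comes down to tracking which copycat appears in the composition test for determinism and checking that the $B$-side imposes no constraint because $\G$ records $g(x)\bel_B y$ rather than an equality. Once the $\bel_{B^\perp}\!=\,\geq_B$ translation and the $A^\perp$-vs-$A$ race-freeness equivalence are pinned down, the corollary is immediate.
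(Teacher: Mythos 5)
Your route is the paper's own: instantiate Theorem~\ref{thm:affstmaptostrats} and Theorem~\ref{thm:functor} at $g$ regarded as an affine-stable map from $A^\perp$ to $B^\perp$, observe that $(A^\perp)^\perp=A$ and that $y\bel_{B^\perp}z$ iff $z\bel_B y$, and transport the resulting strategy in $A\vvbar B^\perp$ across the symmetry $A\vvbar B^\perp\iso B^\perp\vvbar A$; contravariant pseudofunctoriality of $(\_)^*$ then comes from that of $(\_)_!$. One small slip in your set-up: no polarity reversal of the strategy's event structure is involved (your ``$\sig^\perp:S^\perp\to\cdots$''); the event structure $\Pr(\G)$ already carries the correct polarities, inherited from $B^\perp\vvbar A$, and ``duality'' here is only a re-reading of the same map across the isomorphism of games---flipping the polarities of $S$ would in general destroy receptivity---but nothing in your actual construction depends on that remark.

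The genuine gap is in the determinism clause. Dualizing Theorem~\ref{thm:affstmaptostrats} gives: $g^*$ is deterministic when $A^\perp$ \emph{and} $B^\perp$ are race-free (equivalently $A$ and $B$, race-freeness being self-dual) and $g$ reflects $+$-compatibility. Your reason for discarding the hypothesis on $B$ (``the relevant copycat \ldots lives over $A$'', ``the $B$-side imposes no constraint because $\G$ records $g(x)\bel_B y$ rather than an equality'') does not hold up: determinism of $g^*$ is a property of $\Pr(\G)$ itself, not of any copycat, and race-freeness of $B$ is exactly what the theorem's proof uses, dually, to combine two Player increments lying over $B^\perp$ (two $-$ve-in-$B$ extensions of $y$) with the $+$-extension $y\cap g(x)\subseteq^+ y$. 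Indeed it cannot be dropped: take $A$ empty and $B$ consisting of two conflicting initial events, $b$ negative and $o$ positive (so $B$ is not race-free), and $g(\emptyset)=\setof{b}$; this $g$ is affine-stable from $A^\perp$ to $B^\perp$ and reflects $+$-compatibility vacuously, yet in $\Pr(\G)$ the Player event $[b]$ and the Opponent event $[o]$ are inconsistent while the $-$ve down-closure of the pair is just $\setof{[o]}$, so $g^*$ is not deterministic. So the race-freeness hypothesis of the corollary has to be read as the dual of the theorem's, i.e.\ as applying to both games; the asymmetric reading you tried to justify is not provable, and your explanation for it is the step that fails.
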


For $g$  an  affine-stable map from $A^\perp$ to $B^\perp$ we can write $g^*$ as 
$$
y:B \vdash g (y) \bel_{B} x \dashv x:A\,.
$$
For a strategy $\sig$ in game $B$ the operation $g^*\scirc \sig$ yields the strategy in $A$ got as the pullback of $\sig$ along $g$. In particular, if $A$ prefixed game $B$  by some initial move,  $g^*\scirc \sig$ would be a prefix operation on strategies.  
 
\subsection{An adjunction}\label{sec:adjn}
An affine-stable map $f$ from $A$ to $B$ is not generally 
an affine-stable map from $A^\perp$ to $B^\perp$.
 The 
 next definition, of an {\em additive-stable} map  $f$ from $A$ to $B$, bluntens affine-stability to ensure $f$ is  also a additive-stable map from $A^\perp$ to $B^\perp$; and hence is associated with both a strategy
$
f_!:A\profto B
$
and a converse strategy
$
f^*:B\profto A
$.  
Together they form an adjunction.

\begin{definition}{\rm 
A {\em additive-stable map} between \esswp, from $A$ to $B$, is 
a  function $f: \iconf A  \to \iconf B$ which is\\
$\bullet$ 
{\em polarity-respecting:} for $x, y\in\iconf A$, 
$$
x\subseteq^- y \Rightarrow f(x)\subseteq^- f(y)
\  \hbox{ and } \   
x\subseteq^+ y\Rightarrow f(x)\subseteq^+ f(y)\,;
$$
$\bullet$ 
 {\em  image finite:} if  
$
x\in\conf A$    then  $ f(x)\in\conf B$; \\
$\bullet$ 
 {\em additive:} for all  compatible  families $\set{x_i}{i\in I}$ in $ \iconf A$, 
$$
 \hbox{$\bigcup$}_{i\in I} f (x_i)= f(\hbox{$\bigcup$}_{i\in I} x_i)\,; 
$$
$\bullet$ 
  {\em stable:} for all compatible  families $\set{x_i}{i\in I}\neq\emptyset$ in $ \iconf A$, 
$$
  f(\hbox{$\bigcap$}_{i\in I} x_i) = \hbox{$\bigcap$}_{i\in I} f ( x_i)\,.
$$
}\end{definition}
The usual maps  of games are additive-stable, including those which are partial, 
as are Girard’s linear maps. 
Additive-stability is indifferent to a switch of polarities:

\begin{proposition}
An additive-stable map $f$ from $A$ to $B$ is an additive-stable map $f$ from $A^\perp$ to $B^\perp$ and {\it vice versa}.
\end{proposition}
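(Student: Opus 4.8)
The plan is to check the four defining clauses of additive-stability directly, observing that each one is phrased in a way that is manifestly symmetric under swapping polarities. Write $f$ for the function $\iconf A \to \iconf B$, and recall that $A^\perp$ has the same underlying event structure and configurations as $A$, with $\pol_{A^\perp} = -\,\pol_A$; likewise for $B$. So the set-theoretic data $\iconf A$, $\iconf B$ and the function $f$ itself are literally unchanged when we pass to $A^\perp$ and $B^\perp$; only the two relations $\subseteq^-$ and $\subseteq^+$ are affected, and they simply swap: $x\subseteq^- y$ in $A$ iff $x\subseteq^+ y$ in $A^\perp$, and dually, with the same holding for $B$. (This swap should be noted explicitly, since it is the only nontrivial input; it follows immediately from Definition~\ref{def:scottorder}, as $\subseteq^-$ means all intervening events are Opponent and $\subseteq^+$ means they are neutral or Player, and reversing polarities exchanges these—modulo neutral events, which do not occur in games.)

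First I would handle \emph{polarity-respecting}: the conjunction ``$x\subseteq^- y\Rightarrow f(x)\subseteq^- f(y)$ and $x\subseteq^+ y\Rightarrow f(x)\subseteq^+ f(y)$'' is symmetric in the two superscripts, so applying the swap to both the hypothesis and conclusion of each implication turns the statement for $A,B$ into the statement for $A^\perp, B^\perp$ with the two conjuncts interchanged. Next, \emph{image finiteness} refers only to finite configurations, $x\in\conf A$ and $f(x)\in\conf B$, and finiteness is polarity-blind, so this clause transfers verbatim. The \emph{additivity} clause, $\bigcup_{i\in I} f(x_i) = f(\bigcup_{i\in I} x_i)$ over compatible families, mentions only unions and equality of configurations—again no polarities—so it is unchanged; note compatibility is a statement about inclusion in a common configuration, also polarity-free. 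Finally \emph{stability}, $f(\bigcap_{i\in I} x_i) = \bigcap_{i\in I} f(x_i)$ over nonempty compatible families, is similarly expressed purely in terms of intersections and equality, hence transfers directly.

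Putting these together shows every additive-stable map $f$ from $A$ to $B$ is an additive-stable map from $A^\perp$ to $B^\perp$; and since $(A^\perp)^\perp = A$ and $(B^\perp)^\perp = B$, applying the same argument again gives the converse, establishing the ``vice versa.'' There is essentially no obstacle here: the content of the proposition is precisely the observation that the definition of additive-stability was engineered to be self-dual, in contrast to affine-stability where the asymmetric clauses (+-continuity, $-$-image finiteness, the one-sided $\subseteq^+$ in affineness and $\subseteq^-$ in stability) break the symmetry. The only point requiring a moment's care is the polarity-swap lemma for $\subseteq^-$ and $\subseteq^+$, and in particular confirming that games (having no neutral events) make this swap an exact bijection rather than merely an inclusion; once that is in hand the four clauses fall out by inspection.
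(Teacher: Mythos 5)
Your proof is correct and is exactly the argument the paper intends (it states the proposition without proof as an immediate consequence of the definition being self-dual): the only polarity-sensitive clause is the symmetric conjunction in polarity-respecting, which is preserved because dualising swaps $\subseteq^-$ and $\subseteq^+$ exactly in games, while image-finiteness, additivity and stability mention no polarities, and the converse follows from $(A^\perp)^\perp=A$. Your explicit flagging of the $\subseteq^\pm$-swap lemma and the absence of neutral events is the right point of care, but there is nothing further to add.
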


Given an additive-stable map $f$ from $A$ to $B$ we obtain a strategy $f_!:A\profto B$ and, via $f$ from $A^\perp$ to $B^\perp$, 
$f^*:B\profto A$.   

\begin{theorem}\label{thm:adjn} Let $f$ be an additive-stable map from $A$ to $B$ between  \esswp. 
The strategies $f_!$ and $f^*$ form an adjunction $f_!\dashv f^*$ in the bicategory 
$\Strat$. 
\end{theorem}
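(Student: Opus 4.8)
The plan is to exhibit a unit 2-cell $\eta\colon\cc_A\Rightarrow f^*\scirc f_!$ and a counit 2-cell $\epsilon\colon f_!\scirc f^*\Rightarrow\cc_B$ in $\Strat$; the triangle identities will then come for free. Everything runs on the Scott order. Recall the data: by Theorem~\ref{thm:affstmaptostrats}, $f_!\colon A\profto B$ is carried by the family $\set{x\vvbar y\in\iconf{A^\perp\vvbar B}}{y\bel_B f(x)}$; by its dual corollary, $f^*\colon B\profto A$ is carried by $\set{y\vvbar x\in\iconf{B^\perp\vvbar A}}{f(x)\bel_B y}$; and by Lemma~\ref{lem:CCbel}, $\cc_A$ is carried by $\set{x\vvbar x'\in\iconf{A^\perp\vvbar A}}{x'\bel_A x}$. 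In each case the event structure is the prime completion of the family and its configurations are in order-isomorphism with it, so each strategy is \emph{functional}: over any configuration of the ambient game there is at most one witnessing configuration. Between two functional strategies $\sigma,\tau$ over one game a 2-cell $\sigma\Rightarrow\tau$ exists iff the carrier of $\sigma$ is contained in that of $\tau$ (both carriers being automatically down-closed for the Scott order), and then it is unique. Hence it suffices to check two carrier inclusions, and---provided all composites in sight stay functional---the triangle identities follow automatically, since parallel 2-cells between functional strategies coincide.

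The one new ingredient is a monotonicity lemma: if $f$ is additive-stable then $y\bel_A x$ implies $f(y)\bel_B f(x)$. Put $z\eqdef x\cap y$, a configuration since configurations of an event structure are closed under intersection. From $y\bel_A x$ every Opponent event of $x$ lies in $y$, so $x\setminus z$ consists of Player moves and $z\subseteq^+ x$; dually $z\subseteq^- y$. The polarity-respecting clause then gives $f(z)\subseteq^+ f(x)$ and $f(z)\subseteq^- f(y)$, whence $f(x)^-\subseteq f(z)^-\subseteq f(y)^-$ and $f(y)^+\subseteq f(z)^+\subseteq f(x)^+$, \ie~$f(y)\bel_B f(x)$. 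Only polarity-respecting is used here; additivity and stability are what make $f_!,f^*$ strategies in the first place (via Theorem~\ref{thm:affstmaptostrats} and the switch-of-polarities proposition) and what make the interactions below well-formed.

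For the composites, unwind composition as interaction followed by hiding. A configuration of $f^*\scirc f_!$ over $x\vvbar x'$ in $A^\perp\vvbar A$ glues a configuration of $f_!$ over $x\vvbar w$ to one of $f^*$ over $w\vvbar x'$ and hides $w$; the side conditions force $w\bel_B f(x)$ and $f(x')\bel_B w$, hence $f(x')\bel_B f(x)$ by transitivity. Conversely, whenever $x'\bel_A x$ we have $f(x')\bel_B f(x)$ by the monotonicity lemma, and then $w\eqdef f(x)$ yields such a gluing---finitary by additive-stability of $f$, just as in Theorem~\ref{thm:affstmaptostrats}. Because hiding identifies configurations sharing a visible part, the various legal $w$ collapse and $f^*\scirc f_!$ is again functional, with carrier containing $\set{x\vvbar x'}{x'\bel_A x}$, the carrier of $\cc_A$; this down-closed inclusion is $\eta$. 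Dually, any configuration of $f_!\scirc f^*$ over $y\vvbar y'$ has $y'\bel_B w\bel_B y$ for its hidden $w$, so $y'\bel_B y$, and its carrier lies inside that of $\cc_B$; this inclusion is $\epsilon$. The same reasoning shows $f_!\scirc f^*\scirc f_!$ and $f^*\scirc f_!\scirc f^*$ are functional, so the triangle-identity composites are 2-cells between functional strategies parallel to the identities, hence equal to them modulo the coherence cells of $\Strat$; this gives the adjunction. (In the term language of Section~\ref{sec:part1} one may think of $\eta$ as the entailment $x'\bel_A x\vdash f(x')\bel_B f(x)$ and of $\epsilon$ as $y'\bel_B w\bel_B y\vdash y'\bel_B y$.)

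The step I expect to be the real obstacle is the one glossed above: proving that $f_!$, $f^*$ and the composites formed from them are genuinely functional, and that the interactions $\wedge$ occurring in them are finitary partial orders. This needs a look inside the prime completion---using that in $f_!$ the adjoined causal links run from $A^\perp$-moves into $B$-moves, and dually from $B^\perp$-moves into $A$-moves in $f^*$, so that hiding the common game really does collapse the choice of intermediate configuration, and using additive-stability of $f$ to exclude cyclic dependencies. The alternative, checking the triangle identities directly as equations of maps of event structures through pullback and hiding, is considerably more laborious---which is precisely why one routes through the Scott order.
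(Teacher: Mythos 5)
Your overall route is the paper's own: both construct the unit and counit as inclusions of the carrier stable families along the Scott order --- $x'\bel_A x \Rightarrow f(x')\bel_B f(x)$ for $\eta:\cc_A\Rightarrow f^*\scirc f_!$, and transitivity $y'\bel_B f(x)\bel_B y \Rightarrow y'\bel_B y$ for $\epsilon: f_!\scirc f^*\Rightarrow \cc_B$ --- and your monotonicity lemma is a correct proof of the step the paper dismisses as ``clear''. Your treatment of the triangle identities is a mild but legitimate shortcut: since $f_!$ and $f^*$ are injective on configurations (their configurations are order-isomorphic to their carrier families), any 2-cell $f_!\Rightarrow f_!$ or $f^*\Rightarrow f^*$ is the identity, so the triangle composites are forced; the paper instead checks the identities explicitly, by computing that the relational composites of families coincide (e.g.\ $F^*\circ C_B\subseteq C_A\circ F^*$ and conversely).

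The genuine gap is exactly the step you flag and postpone: that the compositions $f^*\scirc f_!$ and $f_!\scirc f^*$ (interaction via $\wedge$ followed by hiding) really are carried by the \emph{relational} compositions of the carrier families --- so in particular are again ``functional'', with the hidden middle configuration collapsing. Without this you do not yet know the target of $\eta$ or the source of $\epsilon$, so neither 2-cell is constructed, and the functionality you invoke for the triangle shortcut is likewise unestablished. This is not automatic: a priori different hidden witnesses $w$ over $B$ could produce distinct configurations of the composite with the same visible image, and the secured/finitary condition on the interaction must be verified. The paper closes this by Proposition~\ref{lem:stfamcmp}, whose justification is the machinery developed for functoriality of $(\_)_!$: the ``input'' component of such a strategy is partial rigid (Proposition~\ref{prop:obs}), hence no causal loops can arise in the interaction (Lemma~\ref{lem:partlrigid-y*x}), and hiding collapses the middle via the identity $[d]_{x\vvbar y\vvbar z}=[d]_{x\vvbar f(x)\vvbar z}$ proved as in Lemma~\ref{lem:affstcompeqrelcomp} (which applies here since additive-stable maps are in particular affine-stable). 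Your parenthetical reason --- adjoined causal links only run from the input side into the output side --- is indeed the right intuition, but it is precisely the content of those lemmas and needs to be carried out (or cited) for your argument to be complete.
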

 This says $\Strat$ forms a {\em pseudo double category}~\cite{2DCat,hugo-thesis}.
In Section~\ref{sec:GoIadjns} we apply Theorem~\ref{thm:adjn} to relate deterministic strategies in general, to those of Geometry of Interaction. 
The adjunction in 
$\Strat$ of  Theorem~\ref{thm:adjn} yields a traditional adjunction:  

\begin{corollary}
 Let $f$ be an additive-stable map from game $A$ to game $B$.  Let $\Strat_A$ be the 
 category of strategies in the game $A$, and $\Strat_B$ that in $B$.  Then there are functors $f_!\scirc (\_):\Strat_A\to \Strat_B$ and $f^*\scirc (\_):\Strat_B\to \Strat_A$ with 
$f_!\scirc (\_)$ left adjoint to $f^*\scirc (\_)$.  
\end{corollary}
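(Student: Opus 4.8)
The plan is to deduce this directly from Theorem~\ref{thm:adjn}, which already provides an adjunction $f_!\dashv f^*$ in the bicategory $\Strat$, together with the fact that horizontal composition $\scirc$ is functorial in $\Strat$ (stated in the section on the bicategory of strategies). The standard categorical principle at work is that a pair of adjoint $1$-cells in a bicategory induces, by whiskering, an ordinary adjunction between the corresponding hom-categories. So first I would recall that $\Strat_A$ is exactly the hom-category $\Strat(I,A)$ of strategies from the empty game $I$ (equivalently, strategies \emph{in} $A$, since a strategy in $A$ is a strategy from $I$ to $A$), with $2$-cells as morphisms, and likewise $\Strat_B=\Strat(I,B)$.

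Next I would define the two functors. Post-composition (whiskering) with the $1$-cell $f_!:A\profto B$ gives a functor $f_!\scirc(\_):\Strat_A\to\Strat_B$: on objects a strategy $\sigma:S\to A$ is sent to $f_!\scirc\sigma$, and on a $2$-cell $h:\sigma\Rightarrow\sigma'$ we take $f_!\scirc h$, using the functoriality of $\scirc$ established for the bicategory. Symmetrically, $f^*\scirc(\_):\Strat_B\to\Strat_A$. Functoriality (preservation of identity $2$-cells and of vertical composition) is immediate from the corresponding properties of horizontal composition in $\Strat$.

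Then I would produce the unit and counit of the hom-category adjunction by whiskering those of the bicategorical adjunction. Theorem~\ref{thm:adjn} supplies $2$-cells $\eta:\cc_A\Rightarrow f^*\scirc f_!$ and $\varepsilon:f_!\scirc f^*\Rightarrow\cc_B$ satisfying the triangle identities up to the coherence isomorphisms of the bicategory. For $\sigma\in\Strat_A$, whiskering $\eta$ on the right by $\sigma$ and using the unit isomorphism $\cc_A\scirc\sigma\cong\sigma$ yields a $2$-cell $\sigma\Rightarrow (f^*\scirc f_!)\scirc\sigma\cong f^*\scirc(f_!\scirc\sigma)$; naturality in $\sigma$ follows from the interchange law. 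Dually for $\varepsilon$ and $\tau\in\Strat_B$. I would then check the two triangle identities for these whiskered transformations; they reduce to the triangle identities of Theorem~\ref{thm:adjn} after transporting along the associativity and unit coherence isomorphisms of $\Strat$, which is a diagram chase using naturality of those coherences. This establishes $f_!\scirc(\_)\dashv f^*\scirc(\_)$.

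The only genuinely non-routine point—and the one I would be most careful about—is the bookkeeping with the bicategorical coherence isomorphisms: $\Strat$ is a genuine bicategory (composition of strategies is associative and unital only up to isomorphism, via the universal properties of pullback and partial--total factorisation), so ``whiskering an adjunction'' is not quite on-the-nose and the triangle identities only hold after inserting the associators and unitors. I would either invoke the general lemma that adjoint $1$-cells in any bicategory whisker to adjunctions of hom-categories, or spell out the coherence diagram once; everything else (functoriality of the two functors, naturality of unit and counit) is a direct consequence of functoriality of $\scirc$ and the interchange law already available in $\Strat$.
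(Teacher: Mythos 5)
Your proposal is correct and matches the paper's intent: the corollary is stated as an immediate consequence of Theorem~\ref{thm:adjn}, relying on exactly the standard fact you invoke, that an adjunction of $1$-cells in a bicategory whiskers to an adjunction between hom-categories (with $\Strat_A$ identified as strategies in $A$, i.e.\ strategies from the empty game). Your care with the coherence isomorphisms and the triangle identities is the right level of detail for what the paper leaves implicit.
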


\section{From strategies to functions}\label{sec:part2}
We recover familiar notions of games from those based on event structures. 
A game is {\em tree-like} when any two events are either inconsistent or causally dependent.    When such a game is
race-free,   at any finite configuration, the next  possible moves, if there are any, belong purely to Player, or purely to Opponent.
Then, 
at each position where Player may move, a  deterministic strategy either chooses a unique move or to stay put.  In contrast to many presentations of games, in a concurrent strategy Player isn't forced to make a move, though that can be encouraged through suitable winning conditions.  
A counterstrategy, as a strategy in the dual game, picks moves for Opponent at their configurations.  
The interaction $\tau\sncirc\sig$ of a deterministic strategy $\sig$ with a deterministic counterstrategy $\tau$  determines a finite or infinite branch in the tree of configurations, which in the presence of winning conditions will be 
a win for one of the 
players.   

On tree-like games we recover familiar notions. 
More surprising is that by exploiting the 
richer structure of concurrent games we can recover other familiar paradigms, not traditionally tied to games, or if so only somewhat informally.
 We start by 
 rediscovering Berry's stable domain theory, of which Jean-Yves Girard's qualitative domains and coherence spaces are special cases.  
 The other examples, from dataflow, logic and functional programming, 
 concern ways
of handling  interaction within a functional approach.   
We shall
restrict to race-free games, so guaranteeing that deterministic strategies have an identity w.r.t.~composition, given by copycat. 


\subsection{Stable functions} 

Consider games in which all 
moves are Player moves. 
Consider a strategy $\sig$ from one such purely Player game $A$ to another $B$.  This is a map $\sig: S\to A^\perp\vvbar B$ which is receptive and innocent.  
Notice that  in $A^\perp\vvbar B$  all the Opponent moves are in $A^\perp$ and all the   Player moves are in $B$.  By receptivity any configuration of $A$ can be input.  The only new immediate causal connections, beyond those in $A^\perp$ and  $B$,  that can be introduced in a strategy are those from 
Opponent moves of $A^\perp$ to a Player move in $B$.  Beyond the causal dependencies of the games, a  strategy $\sig$ can only make a Player move in $B$ causally depend on  a finite subset of 
moves in $A^\perp$.  

When $\sig$ is deterministic, all conflicts are inherited from  conflicts between Opponent moves.  Then the strategy $\sig$ gives rise to a stable function from  the 
configurations of $A$ to the 
configurations of $B$.  Conversely, such a stable function $f$ yields a deterministic strategy $f_!:A\profto B$,  by Theorem~\ref{thm:affstmaptostrats}.  

\begin{theorem}
The category ${\bf dI}$ of dI-domains and stable functions, enriched by the stable order, is equivalent to the bicategory of deterministic strategies between purely Player games 
with rigid 2-cells. 
(The bicategory of deterministic strategies between purely Player games  
with {\em all} 2-cells is equivalent to the category of dI-domains enriched by the Scott---or pointwise---order.)
\end{theorem}

The category of dI-domains and stable functions is well-known to be cartesian-closed; its function space and product are 
realised by constructions $[A\to B]$ and   $A\vvbar B$ on event structures. 
When the games are further restricted to have trivial 
causal dependency we recover Girard's {\em qualitative domains}  and, with
conflict determined in a binary fashion, his {\em coherence spaces}.  
Girard's models for polymorphism there generalise to dI-domains, with
dependent types
$\Pi_{x:A} B(x)$ and $\Sigma_{x:A} B(x)$ on event structures~\cite{evstrs,dIPoly, CGW}---see Appendices~\ref{sec:Stfnspace}, \ref{app:deptypes}.

\subsection{Stable spans} \label{sec:stabspans}
When between games in which all the moves are Player moves, a general, 
nondeterministic, strategy corresponds to a {\em stable span}, a form of many-valued  stable function which has been discovered, and rediscovered, in giving semantics to higher-order processes and especially nondeterministic dataflow~\cite{mikkel,Visions,SEW}; the trace of strategies, derived from their compact closure, specialises to the  feedback operation of dataflow.  
Recall a {\em stable span} comprises 
$$
 \spano A\dem {E}\out{B\,,}
 $$
 with  event structure $E$ relating input given by an event structure $A$ and output by an event structure $B$. 
   The map
$\out: E\to B$ is a {\em rigid} map.  The map $\dem:E\to A$, associated to input, is of a different character.  It is a {\em demand} map, \ie, a function from $\iconf E$ to $\iconf A$ which preserves unions and finite configurations;  $\dem(x)$ is the minimum input for $x$ to occur and is the union of the demands of its events.  The occurrence of an event $e$ in $E$ demands  minimum input $\dem([e])$ and  is observed as  the output event $\out(e)$.  
Spans  from 
$A$ to 
$B$ are related by the usual 2-cells
, here  (necessarily) rigid maps $r$ making the diagram below commute:
$$
\xymatrix
{
 &\ar[ld]_{\dem'}E'\ar@{..>}[d]^r\ar[rd]^{\out'}&\\
 A&\ar[l]^\dem E\ar[r]_\out&B
 }
 $$
 
Stable spans compose via the usual pullback construction of spans
, as both demand and output maps extend to functions between configurations.
A stable span $E$ corresponds to a (special) profunctor $$\tilde E(x,y) =\set{w\in\conf E}{\dem(w) \subseteq x \ \&\ \out\,w= y}\,,$$
between the partial-order  categories $\conf A$ and  $\conf B$ ---a correspondence that respects composition.   
Recalling the view of profunctors as Kleisli maps w.r.t.~the presheaf construction~\cite{Kleisli}, we 
borrow from Moggi~\cite{Moggi} and 
describe the composition of stable spans
$
F:A\profto B$,  
$G:B\profto C$ 
as 
$$
G\stcirc F (x)
= 
\plet y \larrow F(x) \pin  G(y)  \,
$$
---which, via the correspondence with profunctors, stands for the coend
$
\int^{y\in\conf B} \tilde F(x,y) \times\tilde G(y,\_)$. In using let-notation we can take account of the shape of the configuration $y$ in the definition of $G$, 
in effect 
an informal pattern matching.  

Stable spans are monoidal closed~\cite{mikkel,lics02}: 
 w.r.t.~an event structure $A$, the functor $(\_\vvbar A)$ has a right adjoint, the function space $[A\multimap\_]$. The construction $[A\multimap B]$  is recalled in Appendix~\ref{app:stspans}
 along with a more general dependent product  $\sPi_{x:A}B(x)$ for stable spans:  the type of stable spans which on input $x:A$ yield output  $y:B(x)$ nondeterministically.   Stable spans are trace monoidal closed; their trace is described in~\cite{SEW}.

 Let $A$ and $B$ be purely Player games.  A strategy $\sig:S\to A^\perp\vvbar B$ gives rise to a stable span
 $$
 \xymatrix{
  A&\ar[l]_\dem{E}\ar[r]^\out&{B\,,}
  }
 $$
where $E= S^+$, and  
$\out$ gives the image of its events in $B$ and $\dem$ those events in $A$ on which they casually depend.  
Conversely given a stable span, as above, we obtain a strategy as the composition $\out_! \scirc \dem^*
$, by the results of Section~\ref{sec:part1}; as regarding $E$ and $A$ as purely Player games, both $\out$ and $\dem:\iconf{E^\perp}\to \iconf{A^\perp}$ are affine-stable. In the strategy $\out_! \scirc \dem^*:S\to A^\perp\vvbar B$ so obtained, $S$ comprises the disjoint union of $A$ and $E$  with the additional causal dependencies of $e\in E$ on $a\in A^\perp$ prescribed by 
 $\dem$. 

\begin{theorem}
The bicategory ${\bf Stab}$ of stable spans is equivalent to the bicategory of strategies between purely Player games with rigid  
2-cells. 
\end{theorem}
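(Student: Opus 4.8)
The plan is to establish the equivalence by exhibiting the two functors already hinted at in the text and checking that they are mutually pseudo-inverse. In one direction, given a strategy $\sig:S\to A^\perp\vvbar B$ between purely Player games, I would send it to the stable span with $E=S^+$ (the Player events of $S$, which all lie over $B$), output map $\out=\sig{\mathbin\downarrow}S^+$ followed by the evident projection to $B$, and demand map $\dem$ sending $x\in\iconf{E}$ to the restriction to $A^\perp$ of the down-closure $[x]_S$ (equivalently, the union over $e\in x$ of the $A^\perp$-part of $[e]_S$). The first task is to verify this is a well-defined stable span: $\out$ is rigid because innocence forces immediate dependencies among Player moves to come from the game $B$ itself, so causal dependency among events of $S^+$ is exactly reflected and preserved into $B$; $\dem$ preserves unions and finite configurations because $[\,\cdot\,]_S$ does and because, by local injectivity and determinism-free reasoning, the $A^\perp$-part of a configuration of $S$ is determined configuration-wise. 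In the other direction I would use exactly the construction at the end of the excerpt: a stable span $\spano{A}{\dem}{E}{\out}{B}$ gives affine-stable maps $\out:\iconf{E}\to\iconf{B}$ (a rigid map of event structures, viewing $E,B$ as purely Player) and $\dem:\iconf{E^\perp}\to\iconf{A^\perp}$ (a demand map, hence affine-stable between the dualised games by the remark in Section~\ref{sec:affstab}), and I set the strategy to be $\out_!\scirc\dem^*$, whose carrier $S$ is, as stated, $A\uplus E$ with the extra dependencies of $e\in E$ on the $a\in A^\perp$ prescribed by $\dem$.

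Next I would check the two round-trips. Starting from a stable span and forming $\out_!\scirc\dem^*$, its carrier $S=A\uplus E$ has $S^+=E$, and one reads back $\out$ and $\dem$ on the nose from the description of $S$; the only point needing care is that the causal order recovered on $E$ (inherited from $\out_!\scirc\dem^*$) agrees with the original order on $E$, which follows because $\out$ is rigid and $\dem$ adds only cross-dependencies into $A^\perp$, never among $E$-events. Starting from a strategy $\sig:S\to A^\perp\vvbar B$ and forming the span $(E=S^+,\dem,\out)$ and then $\out_!\scirc\dem^*$, I would produce an isomorphism of strategies $S\iso A\uplus S^+$: the Opponent part of $S$ is, by receptivity and purely-Player-ness of $A$, isomorphic to $A^\perp$ itself (receptivity makes $\sig$ restrict to an iso on Opponent events into $A^\perp$), the Player part is $S^+$ by definition, consistency on both sides is generated the same way, and the causal order matches because innocence says the only added immediate dependencies are from Opponent moves in $A^\perp$ to Player moves in $B$, which is precisely what $\dem$ records. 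Finally I would verify functoriality and that the 2-cell structures correspond: a rigid 2-cell of spans is a rigid map $r:E'\to E$ commuting with $\dem$ and $\out$, and under the translation this is exactly a rigid map $S'\to S$ of strategies over $A^\perp\vvbar B$ (identity on the Opponent part, $r$ on the Player part), so the functors are locally full and faithful; compatibility with composition uses that both sides compose by the same pullback-and-hide recipe—the span pullback over $B$ versus the strategy interaction over $B$ followed by hiding the neutral events—so these agree up to canonical iso by the universal property of pullback together with Proposition~\ref{prop:Pr}.

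The main obstacle I expect is the compatibility of composition up to coherent isomorphism, i.e.\ showing $(G\stcirc F)$ of spans translates to $(\out_{G!}\scirc\dem_G^*)\scirc(\out_{F!}\scirc\dem_F^*)$ of strategies, since on the strategy side composition passes through the interaction $\sncirc$ with its neutral events over $B^0$ and then a partial-total factorisation to hide them, whereas on the span side one takes a single pullback of $\out_F$ against $\dem_G$ in event structures. Reconciling these requires unwinding the ``secured bijection'' presentation of the pullback (the theorem on secured bijections) and checking that the secured bijections surviving the hiding step are exactly the configurations $w'\in\conf{E'}$ with $\dem_{F}(w\text{-part})\subseteq$ the input and $\out_F(w) = \dem_G(w'\text{-part})$ matched up—which is the pullback $\tilde G\circ\tilde F$ of the associated profunctors. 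I would organise this using the profunctor correspondence already noted (strategies as discrete fibrations over $(\conf{-},\bel)$, stable spans as the profunctors $\tilde E$), reducing the composition-compatibility to the fact, recalled in the text, that the profunctor associated to a strategy respects composition; everything else is bookkeeping about which events are Player, Opponent, or neutral.
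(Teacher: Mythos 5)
Your two constructions are exactly the ones the paper intends (take $E=S^+$ with $\out$ the image in $B$ and $\dem$ the $A$-part of the down-closure; conversely form $\out_!\scirc\dem^*$), and your round-trip and 2-cell arguments are sound, granted the verification details you indicate (the key facts being that innocence plus the parallel-composition structure of $A^\perp\vvbar B$ rule out any immediate dependency from a Player move over $B$ to an Opponent move over $A^\perp$, and that receptivity makes the Opponent part of $S$ isomorphic to $A^\perp$ and forces consistency in $S$ to be generated componentwise).

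The one step that would fail as you have organised it is the final reduction of composition-compatibility to ``the fact, recalled in the text, that the profunctor associated to a strategy respects composition.'' The text recalls this only for the stable-span-to-profunctor correspondence; for strategies in general the passage to profunctors is merely lax, precisely because the coend composition of profunctors admits matching pairs of configurations that deadlock causally and so do not survive the secured-bijection/hiding construction. Moreover, even granted preservation, an isomorphism of associated profunctors does not by itself yield an isomorphism of strategies unless you also show the translation reflects isomorphisms. So the burden cannot be shifted: you must show directly that, for strategies between purely Player games, every matching pair over $B$ is secured. This does hold, and the argument is the crux: by innocence, immediate dependencies into Opponent moves must come from the game, so in $T$ (for $\tau:T\to B^\perp\vvbar C$) the events over $B$ depend only on events over $B$ with the order inherited from $B$ --- i.e.\ the component of $\tau$ over $B$ is partial rigid --- and likewise the $B$-events of $S$ only acquire dependencies on $A$-events and game-$B$ dependencies; a putative causal loop in the interaction could then be excised as in the paper's Lemma~\ref{lem:partlrigid-y*x} (via Proposition~\ref{prop:obs}-style reasoning), giving a contradiction. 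With deadlock-freeness established, the hidden interaction is computed by the same pullback-of-configurations recipe as the span composition, and the comparison isomorphism follows; your secured-bijection route, completed by this no-causal-loop argument, is the correct one.
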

 
We show that in a similar way, we obtain geometry of interaction, dialectica categories, containers, lenses, open games and learners, optics and dependent optics by moving to slightly more complicated subcategories of  
games, sometimes with winning conditions and  imperfect information.

\subsection{Geometry of Interaction} \label{sec:GoI}
Let's now consider slightly more complex games.  A {\em GoI game} comprises a parallel composition $A:=A_1\vvbar A_2$  of  a purely Player game $A_1$ with a purely Opponent game  $A_2$.  Consider a strategy $\sig$ from a GoI game $A:= A_1\vvbar A_2$ to a GoI game $B:= B_1\vvbar B_2$.   Rearranging the parallel compositions, 
$$
 A^\perp\vvbar B
= 
A_1^\perp \vvbar A_2^\perp \vvbar B_1\vvbar B_2
\iso
  (A_1 \vvbar B_2^\perp)^\perp \vvbar (A_2^\perp\vvbar B_1)\,.
$$
So $\sig$, as a strategy in $A^\perp\vvbar B$,
 corresponds to 
a strategy from the purely Player game $A_1 \vvbar B_2^\perp$ to the purely Player game $A_2^\perp\vvbar B_1$.  We are back to the simple situation considered in the previous section, 
of strategies between purely Player games.

 Strategies between GoI games, from $A$ to $B$, correspond to stable spans from ${A_1 \vvbar B_2^\perp}$ to ${A_2^\perp\vvbar B_1}$.  The maps are familiar from models of geometry of interaction built as free compact-closed categories from  traced monoidal categories~\cite{joyal-street-verity,AbramskyHaghverdiScott},
though here lifted to the {\em bicategory} ${\bf Stab}$ of stable spans.
 
\begin{theorem}
 The bicategory of strategies on GoI games with rigid 2-cells  is equivalent to 
the free compact-closed bicategory built 
on the trace monoidal bicategory ${\bf Stab}$.
\end{theorem} 
 
 When deterministic, strategies from GoI game $A$ to GoI game $B$ 
 correspond to a stable function from $\iconf{A_1 \vvbar B_2^\perp}$ to $\iconf{A_2^\perp\vvbar B_1}$.
 Note that a configuration of a parallel composition of games splits into a pair of configurations: 
$$
{
\iconf{A_1 \vvbar B_2^\perp} 
\iso  \iconf{A_1}\times\,\iconf{B_2}, \ 
\iconf{A_2^\perp\vvbar B_1} 
\iso  \iconf{A_2}\times\,\iconf{B_1}.
}$$
Thus 
 deterministic strategies from  $A$ to $B$ correspond to  stable functions
 $$
 S=\langle g, f\rangle :  \iconf{A_1}\times\iconf{B_2} \to  \iconf{A_2}\times  \iconf{B_1}\,,
 $$
 associated with a pair of stable functions 
 $g:  \iconf{A_1}\times\iconf{B_2} \to \iconf{A_2}$   and  $f:  \iconf{A_1}\times\, \iconf{B_2} \to \iconf{B_1}$,    
summarised diagrammatically by:  
 $$
\xymatrix{
A_1\ar@/^1.1pc/[d]\ar@/_1.0pc/[r]_S&B_1\ar@{}[l]|f\\
 A_2&B_2\ar@/^1.1pc/[u]\ar@/_1.0pc/[l] \ar@{}[l]|g
 }
 $$
 Such maps are obtained by Abramsky and Jagadeesan's GoI construction, 
 here starting from {\em stable} domain theory~\cite{nfgoi}.  
 
 The composition of deterministic strategies between GoI games, $\sig$ from $A$ to $B$ and $\tau$ from $B$ to $C$ coincides with the composition of GoI  given  by ``tracing out'' $B_1$ and $B_2$.  Precisely, 
 supposing $\sig$ corresponds to the stable function
 $$
 S:  \iconf{A_1}\times\,\iconf{B_2}  \to   \iconf{A_2}\times\,\iconf{B_1}\,
 $$
 and $\tau$ to the stable function
 $$
 T:  \iconf{B_1}\times\,\iconf{C_2}  \to   \iconf{B_2}\times\,\iconf{C_1}\,,
 $$
 we see a loop in the functional dependency at $B$:
$$
\xymatrix{
A_1\ar@/^1.1pc/[d]\ar@/_1.0pc/[r]_S&{B_1}\ar@/^1.1pc/[d]\ar@/_1.0pc/[r]_T&C_1& \\
 A_2&B_2\ar@/^1.1pc/[u]\ar@/_1.0pc/[l]&C_2\ar@/^1.1pc/[u]\ar@/_1.0pc/[l]& 
 }
 $$
Accordingly, the composition $\tau\scirc \sig$ corresponds to the stable function taking $(x_1,z_2)\in  \iconf{A_1}\times\,\iconf{C_2}$ to $(x_2, z_1) \in  \iconf{A_2}\times\,\iconf{C_1}$  in the 
  least solution to the equations
$$
(x_2, y_1) = S(x_1, y_2)
\ \hbox{ and }\ 
(y_2, z_1) = T(y_1, z_2)\,
$$
---given, as in Kahn 
networks, by taking a least fixed point.

\begin{theorem} 
 The bicategory of deterministic strategies on GoI games with rigid 2-cells  is equivalent to 
the free compact-closed category $ {\rm Int}({\bf dI})$ of~\cite{joyal-street-verity} and
 the  Geometry of Interaction category $\G({\bf dI})$ of~\cite{AbramskyHaghverdiScott} 
 built on the category $\bf dI$ of dI-domains and stable functions. 
  \end{theorem}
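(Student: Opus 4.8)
The plan is to assemble the two correspondences already established for GoI games into a single biequivalence, reading $\G({\bf dI})$ as the category whose hom-posets carry the Berry stable order (as in the purely-Player equivalence, which was phrased ``enriched by the stable order''); passing to isomorphism classes of strategies then recovers the plain category $\G({\bf dI})$. I would begin with the objects: a GoI game $A = A_1 \vvbar A_2$ determines the pair $(\iconf{A_1}, \iconf{A_2})$ of dI-domains, an object of $\G({\bf dI})$, and since every dI-domain is (isomorphic to) the configuration domain of an event structure --- which may be taken purely Player or purely Opponent at will --- every object $(X^+, X^-)$ of $\G({\bf dI})$ arises this way up to isomorphism in ${\bf dI}$. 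So $A \mapsto (\iconf{A_1}, \iconf{A_2})$ is essentially surjective.

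Next, the hom-categories. The rearrangement $A^\perp \vvbar B \iso (A_1 \vvbar B_2^\perp)^\perp \vvbar (A_2^\perp \vvbar B_1)$ displayed above is an isomorphism of \esswp, hence of games, so it induces an isomorphism between the bicategory of (deterministic) strategies in $A^\perp \vvbar B$ with rigid $2$-cells and that of (deterministic) strategies from the purely Player game $A_1 \vvbar B_2^\perp$ to the purely Player game $A_2^\perp \vvbar B_1$. Feeding this into the equivalence for purely Player games identifies the hom-poset of deterministic strategies from $A$ to $B$ with the stable functions $\iconf{A_1 \vvbar B_2^\perp} \to \iconf{A_2^\perp \vvbar B_1}$ ordered stably; and, since $\iconf{A_1 \vvbar B_2^\perp} \iso \iconf{A_1} \times \iconf{B_2}$ and $\iconf{A_2^\perp \vvbar B_1} \iso \iconf{A_2} \times \iconf{B_1}$, that is precisely $\G({\bf dI})\big((\iconf{A_1}, \iconf{A_2}), (\iconf{B_1}, \iconf{B_2})\big)$, the stable functions $\iconf{A_1} \times \iconf{B_2} \to \iconf{A_2} \times \iconf{B_1}$.

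It then remains to check that assembling the object assignment with these hom-equivalences yields a pseudofunctor $F$ --- that it preserves identities and composition up to coherent isomorphism. For identities, copycat $\cc_A$ on the GoI game $A$ adjoins only links from an Opponent move to its Player copy in the alternative component, so under the two translations it becomes the symmetry $(x_1, x_2) \mapsto (x_2, x_1)$ on $\iconf{A_1} \times \iconf{A_2}$, which is exactly the identity of $(\iconf{A_1}, \iconf{A_2})$ in $\G({\bf dI})$. For composition, the computation already recalled for GoI games shows $\tau \scirc \sig$ corresponds to the stable function sending $(x_1, z_2)$ to $(x_2, z_1)$ in the least solution of $(x_2, y_1) = S(x_1, y_2)$, $(y_2, z_1) = T(y_1, z_2)$; on the other side, composition in $\G({\bf dI})$ is by definition the trace of the evident composite of $S$ and $T$, and for the canonical least-fixed-point trace on ${\bf dI}$ this unwinds --- via the simultaneous fixed point over $(y_1, y_2)$ obtained by tracing out $B_1 \vvbar B_2$ at once, equivalently from the nested one by Beki\'{c}'s identity --- to exactly that function. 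Hence $F$ is essentially surjective and a local equivalence, so a biequivalence. (Here one also uses that ${\bf dI}$ is cartesian, with $\times$ realised by $\vvbar$ on event structures, and traced by the least fixed point operator of stable functions --- which is what makes $\G({\bf dI})$ defined --- and that the GoI games, being closed under $\vvbar$ and $(\_)^\perp$, span a compact-closed sub-bicategory of $\Strat$ matched by the compact closure of $\G({\bf dI})$.)

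The step I expect to be the real obstacle is this composition check: one must verify that hiding the neutral events over $B$ in the interaction $T \sncirc S$, once transported through the game isomorphism and the purely-Player correspondence, literally computes the least fixed point of the feedback equations at $B$ --- in particular handling the case where that fixed point is an infinite configuration, where the argument does genuine work rather than bookkeeping. Most of this is already established in the preceding discussion of composition on GoI games, so the remainder of the proof is mostly routine assembly of the pieces.
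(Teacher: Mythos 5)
Your proposal is correct and follows essentially the same route as the paper: the rearrangement $A^\perp\vvbar B \iso (A_1\vvbar B_2^\perp)^\perp\vvbar(A_2^\perp\vvbar B_1)$ reduces to the purely-Player equivalence with dI-domains and stable functions, copycat becomes the symmetry, and composition is matched with the GoI trace computed as a least fixed point of the feedback equations at $B$ — exactly the argument the paper assembles in the discussion preceding the theorem. The fixed-point/hiding step you flag as the real work is likewise the crux in the paper's account, which treats it at the same level of detail (via the Kahn-network least-solution description), so your outline is faithful to the intended proof.
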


Geometry of Interaction started as an investigation of the nature of proofs of linear logic, understood as networks~\cite{GirardGoI1}.
It has subsequently been tied to 
optimal reduction in the $\lambda$-calculus~\cite{GoI-optred}, and inspired implementations via token machines on networks~\cite{mackie,ghica}; when the two components of a GoI game match 
events of exit and entry of a token at a link.  
 
It is straightforward to extend GoI games with winning conditions.
A winning condition on a GoI game $A=A_1\vvbar A_2
$ picks out a subset of the configurations $\iconf A$, so amounts to specifying a property $W_A(x_1,x_2)$ of pairs $(x_1,x_2)$  in $\iconf{A_1}\times  \iconf{A_2}$.  That a deterministic strategy from GoI game $A$ to GoI game $B= B_1\vvbar B_2$ is winning means 
$$
W_A(x, g(x,y)) \implies W_B(f(x,y), y)\,,
$$
for all $x\in \iconf{A_1}, y\in \iconf{B_2}$,   when expressed in terms of the pair of stable functions the strategy determines. In particular,
a deterministic winning strategy in the individual GoI game $B$, with winning conditions $W_B$, corresponds to a stable function $f:
\iconf{B_2} \to \iconf{B_1}$ such that $\all y\in\iconf{B_2}.\ W_B(f(y), y)$.  

With stable spans, unlike with dI-domains with stable functions,  the operation of parallel composition $\vvbar$ is no longer a product; 
stable spans are monoidal-closed and 
not a cartesian-closed.  
While 
general, not just deterministic, strategies $\sig:A\profto B$  between  GoI games are expressible 
as stable spans
$
A_1\vvbar B_2^\perp \profto A_2^\perp \vvbar B_1
$, 
their expression doesn't 
project to 
an equivalent pair of
separate components as with lenses. 

\subsubsection{The GoI adjunctions}\label{sec:GoIadjns}
%
For any game $A$ there is a map of \esswp
$$
f_A: A \to A^+ \vvbar A^-\,,
$$
where $A^+$ is the projection of $A$ to its $+$ve events and $A^-$ is the projection to its $-$ve events:  the map $f_A$ acts as the identity function on events;  
it sends a  configurations $x\in\iconf A$ to $f_A x = x^+\vvbar x^-$.   It determines an adjunction
$
f_! \dashv f^*$ from $A $ to $A^+ \vvbar A^-$.
Because the game $A$ is race-free,  both ${f_A}_!$ and ${f_A}^*$ are deterministic strategies.
This provides a lax functor from  deterministic strategies in general, to those between GoI games.
Let $\sig:A\profto B$ be a deterministic 
strategy between 
games $A$ and $B$.  Defining
$\goi(\sig) = {f_B}_! \scirc \,\sig\, \scirc {f_A}^*$
we obtain a deterministic 
strategy $$\goi(\sig): A^+\vvbar A^- \profto B^+\vvbar B^-\,.$$  Then, the strategy $\goi(\sig)$ corresponds  to a stable function from 
$A^+\vvbar B^-$ to  $A^- \vvbar B^+$, so to a GoI map. 
The operation $\goi$ only forms a lax functor however: 
for $\sig:A\profto B$ and $\tau:B\profto C$, there is, in general, a nontrivial 2-cell $ \goi(\tau\scirc \sig) \Rightarrow  \goi(\tau)\scirc \goi(\sig)$. This puts pay to $\goi$ being right adjoint to the inclusion functor in a pseudo adjunction from the category of GoI games to deterministic strategies. 
But, there is a {\em lax} pseudo adjunction, of potential use in 
abstract interpretation.
 
   
 \subsection{Dialectica games}
 
 Dialectica categories were devised in the late 1980's by Valeria de Paiva in her Cambridge PhD work with Martin Hyland~\cite{valeria}.  The motivation then was 
 to provide a model of linear logic underlying Kurt G\"odel's {\em dialectica interpretation} of first-order logic~\cite{dialecticaInt}.  They have come to prominence again recently because of a renewed interest in their maps in a variety of contexts, in formalisations of reverse differentiation and back propagation, open games and learners,  and as an early occurrence of maps as lenses.   The dialectica interpretation underpins most proof-mining techniques~\cite{kreisel,kohlenbach}. 
 
 We obtain a particular dialectica category, based on Berry's stable functions, as a full subcategory of deterministic strategies on dialectica games.  Dialectica games are obtained as GoI games of imperfect information, intuitively by not allowing Player to see the moves of Opponent. 
 
A  {\em dialectica game} is   
a GoI game $A= A_1\vvbar A_2$ with winning conditions, and with imperfect information given as follows.
The imperfect information is determined by particularly simple order of access levels: $\exi \prec \foral$.  All Player moves, those in $A_1$, are assigned to $\exi$ and all Opponent moves, those in $A_2$, are assigned to $\foral$. 
It is helpful to think of the access levels $\exi$ and $\foral$ as representing two rooms separated by a one-way mirror allowing anyone in room  $\foral$ to see through to room $\exi$.  In a dialectica game, Player is in room $\exi$  and Opponent in room $\foral$.  Whereas Opponent can see the moves of Player, and in a counterstrategy make their moves dependent  on those of Player, the moves of Player are made blindly, in that they cannot depend on Opponent's moves.   

Although we are mainly interested in strategies {\em between} dialectica games it is worth pausing to think about strategies {\em in} a single dialectica game  $A= A_1\vvbar A_2$ with winning conditions $W_A$.   Because Player moves cannot causally depend on Opponent moves, a
 deterministic strategy in $A$ corresponds to a configuration $x\in\iconf{A_\exi}$; that it is winning means $\forall y\in \iconf{A_\foral}.\ W_A(x,y)$.  So to have a winning strategy for the dialectica game means
$$
\exists x\in \iconf{A_\exi}\forall y\in \iconf{A_\foral}.\ W_A(x,y)\,.
$$

Consider now a deterministic winning strategy $\sig$ from a dialectica game $A=A_1\vvbar A_2$ with winning conditions $W_A$ to another 
$B=B_1\vvbar B_2$ with winning conditions $W_B$.  Ignoring  access levels,   $\sig$ is also a deterministic strategy between GoI games, so corresponds to a pair of stable functions
$$f:  \iconf{A_1}\times\iconf{B_2} \to \iconf{B_1}\hbox{  and } g:  \iconf{A_1}\times\iconf{B_2} \to \iconf{A_2}\,.$$
But moves in $B_2$ have access level $\foral$, moves of $B_1$ access level $\exi$;  
a causal dependency in the strategy $\sig$ of a move in $B_1$ on a move in $B_2$ would violate the access order  $\exi \prec \foral$.  That no move in $B_1$ can causally depend on a move in $B_2$ is reflected in the functional independence of $f$ on its second argument.  As a deterministic strategy between dialectica categories, $\sig$ corresponds to a pair of stable functions
$$f:  \iconf{A_1} \to \iconf{B_1} \hbox{  and } g:  \iconf{A_1}\times \iconf{B_2} \to  \iconf{A_2}\,,$$
which we can picture as:
$$
\xymatrix@R=10pt
{
A_1\ar@{-}@/^2.2pc/[d]^g\ar[r]^f&B_1\\
 A_2&B_2\ar@/_0.3pc/[l]  
 }
 $$
 That $\sig$ is winning means, for all $x\in \iconf{A_1}, y\in \iconf{B_2}$, 
$$
W_A(x, g(x,y)) \implies W_B(f(x), y)\,.
$$
 Pairs of functions $f,g$ satisfying this winning condition are precisely
 the maps of de Paiva's construction of a dialectica category from Berry's stable functions.  

Such pairs of functions are the {\em lenses} of functional programming where they were invented to make  composable local changes on data-structures~\cite{oles,pierce}.    
We recover their at-first puzzling composition from the composition of strategies.  Let 
$\sig$ be a deterministic strategy from dialectica game $A$ to dialectica game $B$; and $\tau$ a deterministic strategy from $B$ to another dialectica game $C$.  Assume $\sig$ corresponds to a pair of stable functions $f$ and $g$, as above, and analogously that   $\tau$ corresponds to stable functions $f'$ and $g'$.  Then,  the composition of strategies $\tau\scirc\sig$ corresponds to the composition of lenses:
 with first component  $f'\circ f$
and second component taking $x\in\iconf{A_1}$ and $y\in \iconf{C_2}$ to 
$g(x, g'(f(x), y))$.

\begin{theorem} 
 The bicategory of deterministic strategies on dialectica games with rigid 2-cells  is equivalent to 
the dialectica category of~\cite{valeria} 
 built on dI-domains and stable functions. 
  \end{theorem}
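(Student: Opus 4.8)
The plan is to realise the claimed equivalence as a refinement of the Geometry-of-Interaction biequivalence proved just above, carrying along winning conditions and cutting down to $\Lambda$-strategies. Recall that an object of de Paiva's dialectica category on $\bf dI$ is a triple $(U,X,\alpha)$ with $U,X$ dI-domains and $\alpha\subseteq U\times X$, and that a map $(U,X,\alpha)\to(V,Y,\beta)$ is a pair of stable functions $f:U\to V$ and $g:U\times Y\to X$ satisfying $\alpha(u,g(u,y))\Rightarrow\beta(f(u),y)$. On objects I would send a dialectica game $A=A_1\vvbar A_2$ with winning condition $W_A$ to the triple $(\iconf{A_1},\iconf{A_2},W_A)$; this assignment is essentially surjective, since every dI-domain is the domain of configurations of a purely Player (or purely Opponent) event structure, so every triple is hit up to isomorphism.

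On hom-categories, a deterministic strategy $\sig:A\profto B$ between dialectica games is in particular a deterministic strategy between the underlying GoI games, hence by the GoI theorem determines a pair of stable functions $f:\iconf{A_1}\times\iconf{B_2}\to\iconf{B_1}$ and $g:\iconf{A_1}\times\iconf{B_2}\to\iconf{A_2}$. I would then show that the extra requirement that $\sig$ be a $\Lambda$-strategy for $\exi\prec\foral$ is \emph{equivalent} to $f$ being independent of its $\iconf{B_2}$ argument---the forward direction because a move of $B_1$ (level $\exi$) may not causally depend on a move of $B_2$ (level $\foral$), the reverse because the span/pair reconstructs $S$ from exactly these causal links---so that $\sig$ yields a genuine dialectica map, and that $\sig$ is winning iff $W_A(x,g(x,y))\Rightarrow W_B(f(x),y)$ for all $x\in\iconf{A_1},y\in\iconf{B_2}$. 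Conversely, from a dialectica map $(f,g)$ I would recover a strategy by Theorem~\ref{thm:affstmaptostrats}: view $\langle f,g\rangle$ as an additive-stable map between the purely Player games $A_1\vvbar B_2^\perp$ and $A_2^\perp\vvbar B_1$, form the associated deterministic strategy, and transport it across $\iconf{A^\perp\vvbar B}\iso\iconf{(A_1\vvbar B_2^\perp)^\perp\vvbar(A_2^\perp\vvbar B_1)}$; the access discipline $\exi\prec\foral$ holds precisely because $f$ ignores $\iconf{B_2}$. The two passages are mutually inverse up to isomorphism of strategies because the underlying GoI correspondence already is.

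For functoriality, copycat on $A_1\vvbar A_2$ goes to the identity dialectica map $(\id_{\iconf{A_1}},\pi_2)$, and composition is preserved: the GoI composition ``traces out'' $B$ by taking a least fixed point over the loop at $B_1,B_2$, but under $\exi\prec\foral$ this loop is acyclic---no move contributing to the $C_2$-to-$B_2$-to-$B_1$ feedback has the access level needed to close it---so the fixed point is reached after a single unfolding and reduces exactly to de Paiva's lens composition, with first component $f'\circ f$ and second component $(x,z)\mapsto g(x,g'(f(x),z))$, as displayed in the discussion above. On 2-cells, rigid 2-cells between deterministic strategies are governed by the stable order, as in the purely Player case, and under the correspondence these match the order induced on dialectica maps from the stable order on their components, so each hom-functor is an equivalence; combined with essential surjectivity this gives the biequivalence (an equivalence of ordinary categories once isomorphic 1-cells are identified).

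The main obstacle will be the precise handling of the winning-condition and imperfect-information constraints. Translating ``$\sig$ is winning'' into the implication $W_A(x,g(x,y))\Rightarrow W_B(f(x),y)$ requires identifying the $+$-maximal configurations of $S$ with the graph of $\langle f,g\rangle$ over $+$-maximal inputs and invoking Lemma~\ref{lem:+max} together with the characterisation of winning in terms of plays against counterstrategies; and the access-level argument must be made to run in both directions, so that forbidding causal dependencies of $B_1$-moves on $B_2$-moves is not merely implied by but equivalent to functional independence of $f$ on $\iconf{B_2}$. The composition clause is the other delicate point, since one must argue that the GoI feedback genuinely degenerates under the $\exi\prec\foral$ discipline rather than only check the resulting formula componentwise.
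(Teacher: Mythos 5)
Your proposal follows essentially the same route as the paper: treat dialectica games as GoI games with imperfect information, so that deterministic strategies correspond via the GoI equivalence to pairs of stable functions, with the access order $\exi\prec\foral$ forcing $f$ to be independent of $\iconf{B_2}$, winning conditions translating to de Paiva's implication, and the GoI trace/least-fixed-point composition degenerating to lens composition. One small correction: a stable function between purely Player games is an \emph{affine}-stable map in the sense of Theorem~\ref{thm:affstmaptostrats} (the theorem you invoke), not in general an additive-stable one, since stable functions need not preserve unions.
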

 
\subsubsection*{Girard's variant} 
In the first half of de Paiva's thesis she concentrates on the construction of dialectica categories.  In the second half, she follows up on a suggestion of Girard to explore a variant.  This too is easily understood in the context of concurrent games: imitate the work of this section, with GoI games extended with imperfect information, but now with access levels modified to the {\em discrete} order on $\exi, \foral$.  Then the causal dependencies of strategies are further reduced and deterministic strategies from $A= A_1\vvbar A_2$  to $B= B_1\vvbar B_2$ correspond to pairs of stable functions
$$
f: \iconf{A_1} \to \iconf{B_1} \hbox{  and  }
g: \iconf{B_2} \to \iconf{A_2}\,.
$$

\subsubsection*{Combs}
Discussions of causality in science, and quantum information in particular, are often concerned with what causal dependencies are feasible;  
then structures  similar  to orders of access levels are used to capture {\em one-way signalling}, as in dialectica games, and {\em non-signalling}, as in Girard's variant. In this vein, through another variation of games with imperfect information, we obtain
the generalisation of lenses to {\em combs}, used in quantum architecture and  information~\cite{Chiribella2008,aleks}.  Combs provide a common method for imposing higher-order structure on quantum circuits or string diagrams.  

Combs arise as strategies between {\em comb games} which, at least formally, are an obvious generalisation of dialectica games; their name comes from their graphical representation
as structures that look like (hair) combs, with  teeth representing successive transformations from input to output. An {\em $n$-comb game}, for a natural number $n$, is an $n$-fold parallel composition $A_1\vvbar A_2 \vvbar \cdots\vvbar A_n$ of purely Player or purely Opponent games $A_i$ of alternating polarity; 
it is a game of imperfect information associated with access levels $1\prec 2\prec \cdots \prec n$ with moves of component $A_i$ having access level $i$.  Dialectica games are 2-comb games with winning conditions.  

\subsubsection*{Open games and learners}
{\em Open games and learners}~\cite{opengames,openlearners} have recently been presented as
{\em parameterised} 
lenses or optics, in the case of open games with some concept of  equilibrium or winning condition~\cite{capucci2021foundations}. As an example, we obtain a form of open game between dialectica games $A$ and $B$ as a strategy  
$
A\vvbar P \profto B
$, 
where  $P$ is a dialectica game 
of which the configurations 
 specify {\em strategy profiles}.  A variation based on optimal strategies between dialectica games with payoff, following~\cite{payoff}, 
 introduces Nash equilibria and takes us 
 into 
 game-theory territory, 
 and to a testing ground for open games and the notions being developed there. 
 
\subsection{Optics}
 Now 
 we show that
general, possibly nondeterministic, strategies between dialectica games are precisely 
{\em optics}~\cite{optics,optics2} based on stable spans~\cite{SEW,mikkel,Visions}.   Recall 
that a dialectica game comprises $A_1\vvbar A_2$ where $A_1$ is a purely Player game, all events of which have access level $\exi$ and $A_2$ is a purely Opponent game with all events of access level $\foral$, w.r.t.~access order $\Lambda$ specifying $1\prec 2$.  
We ignore winning conditions. 

Let $A$ and $B$ be dialectica games. Let $Q$ be a  purely Player $\Lambda$-game.  
Recall that nondeterministic strategies between purely Player games correspond to stable spans.  
Consider 
strategies
$$F: A_1 \profto B_1\vvbar Q \hbox{ and  }
G: Q \vvbar B_2^\perp \profto A_2^\perp\,.$$  
Then the strategies $F$ and $G$ are between purely Player games, so  correspond to 
stable spans---
Appendix~\ref{app:stspans}. As any causal dependencies of $F$ or $G$ 
respect $\Lambda$, they 
are $\Lambda$-strategies.  

Hence the composition
$$\xymatrix{
A_1\vvbar B_2^\perp \ar[r]|-{\! +\!}^{F\vvbar B_2^\perp}&
B_1\vvbar Q\vvbar B_2^\perp  \ar[r]|-{\! +\!}^{B_1\vvbar G}& 
B_1\vvbar A_2^\perp
}
$$
is also a $\Lambda$-strategy and, being between purely Player games, corresponds to a stable span.   The composition, rearranges to a strategy 
$$
\sig: A_1\vvbar A_2 \profto B_1\vvbar B_2\,,
$$
which is a $\Lambda$-strategy, 
so to a strategy between the original dialectica games $A$ and $B$.  
We call this strategy $\Optic(F,G)$ and call $(F,G)$ its {\em presentation} from $A$ to $B$ with {\em residual} $Q$. 
The terminology is apt, as we'll show
strategies obtained in this way 
 coincide with  {\em optics} as usually defined. 
 Presentations can be represented diagrammatically:  
$$
\small\xymatrix@R=5pt@C=10pt{
A_1 \ar[rrr]^F&\ar@/^0.5pc/[dr]&&B_1\\
&&Q\ar@{-}@/^0.5pc/[dl]\\
A_2&&&B_2\,,\ar[lll]^G
}
$$
 illustrating how  $F$ and $G$ are ``coupled'' via the 
 residual 
 $Q$.   
 
 As usually defined, an optic is an equivalence class of presentations.  Let $(F,G)$ and $(F', G')$ be  presentations  from $A$ to $B$ with   residuals  $Q$ and $Q'$ respectively.  The equivalence relation $\sim$ on presentations is that generated by taking $(F,G)\sim(F', G')$ if, for some $f: Q\profto Q'$, the following triangles commute
 $$
 \xymatrix
 {
 A_1\ar[r]|+^{F'}\ar[dr]|\times_{F}& B_1\vvbar Q' \!\!\!& Q'\vvbar B_2\ar[r]|+^{G'} & B_1\\
 &B_1\vvbar Q \ar[u]|+_{B_1\vvbar f} & Q\vvbar B_2\,.\ar[u]|+^{f\vvbar B_2}\ar[ru]|\times_{G}  \\
 } \eqno(\sim\hbox{def})
 $$

 Presentations of optics compose. 
 Let $A$, $B$ and $C$ be dialectica games.  Given a presentation $(F,G)$ from $A$ to $B$ with residual $Q$ and another  $(F',G')$ from $B$ to $C$ with residual $P$ we obtain a presentation   
 from $A$ to $C$
with residual $P\vvbar Q$ guided by the diagram  
$$
\small\xymatrix@R=5pt@C=10pt{
A_1 \ar[rrr]^F&\ar@/^0.5pc/[dr]&&B_1
\ar[rrr]^{F'}&\ar@/^0.5pc/[dr]
&&C_1
\\
&&Q\ar@{-}@/^0.5pc/[dl]&&&P\ar@{-}@/^0.5pc/[dl]
\\
A_2&&&B_2 \ar[lll]^G
&&&C_2\,, \ar[lll]^{G'}
}
$$
precisely, 
as $((F'\vvbar Q)\scirc F, \, G\scirc (Q\vvbar G')\scirc (s_{PQ}\vvbar C_2))$, where $s_{PQ}$ expresses the symmetry $P\vvbar Q\iso Q\vvbar P$
. 

Composition preserves $\sim$ and has the evident identity presentation, with residual the empty game.  
It follows 
 that $\Optic$ is functorial and that 
if $(F,G)\sim(F', G')$ then $$\Optic(F,G)\iso\Optic(F', G')\,.$$   

To show 
any strategy between container games is an optic, we exploit  the monoidal-closure of stable spans
---see Appendix~\ref{app:stspans}.  
A presentation
$(F,G)$ 
 is $\sim$-equivalent to a {\em canonical presentation} $(F', G')$ with residual  $Q' = [B_2^\perp \multimap A_2^\perp]$ and  $G'$ as {\em application} $\app$: in ($\sim$def), take $f = \curry G
$ 
and $F' = (B_1\vvbar f) \scirc F$
.  

Now, strategies $\sig:  A\profto  B$,   between dialectica games $A$ and $B$, correspond to canonical presentations. To see this,
ignoring the access levels for the moment, a general strategy 
$$\sig: A_1\vvbar A_2 \profto B_1\vvbar B_2$$
corresponds to a strategy between purely Player games  
$$\sig_1: A_1\vvbar B_2^\perp \profto B_1\vvbar A_2^\perp\,,$$
so to a stable span. 
From the monoidal-closure of stable spans we can curry 
$\sig_1$, to obtain a corresponding strategy  
$$\sig_2: A_1\profto  [B_2^\perp \multimap (B_1\vvbar A_2^\perp)]\,$$
with the property
$$
\sig_1 \iso \app_{B_1\vvbar A_2^\perp}\scirc (\sig_2\vvbar B_2^\perp)\,.
$$
Recalling the access levels, 
no event of $B_1$ can causally depend on an event of $B_2$, 
ensuring that $\sig_2$ corresponds to 
$$\sig^+: A_1\profto  B_1\vvbar [B_2^\perp \multimap  A_2^\perp]\,$$
where
$$
 \sig_1 \iso \app_{B_1\vvbar A_2^\perp}\scirc (\sig_2\vvbar B_2^\perp)\iso  (B_1\vvbar \app_{A_2^\perp}) \scirc  (\sig^+\vvbar B_2^\perp)\,.
$$
It follows that $(\sig^+, \app_{A_2^\perp})$ is a canonical presentation for which 
$$\sig\iso \Optic(\sig^+, \app_{A_2^\perp})\,,$$
giving a correspondence between strategies $\sig:A\profto B$ between dialectica games 
  and canonical presentations $(\sig^+, \app_{A_2^\perp})$. 
   
   Via canonical presentations we obtain a bicategory 
   of optics.  Its objects are dialectica games.  Its maps are stable spans 
    $
    A_1 
    \profto B_1\vvbar [B_2^\perp \multimap  A_2^\perp]
    \,,$
    with the associated 
    2-cells,  from dialectica game $A$ to dialectica game $B$. 
   
\begin{theorem} 
 The bicategories of strategies on dialectica games with rigid 2-cells 
and that 
of 
optics built on stable spans are equivalent. 
  \end{theorem}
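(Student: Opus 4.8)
The plan is to exhibit mutually quasi-inverse pseudofunctors between the two bicategories, both built from the correspondence established just above between a strategy $\sig:A\profto B$ on dialectica games (winning conditions ignored) and its canonical presentation $(\sig^+,\app_{A_2^\perp})$ with residual $[B_2^\perp\multimap A_2^\perp]$. Define $\Phi$ to fix dialectica games, to send $\sig:A\profto B$ to the optic $\sig^+:A_1\profto B_1\vvbar[B_2^\perp\multimap A_2^\perp]$ --- a stable span, via the equivalence between strategies between purely Player games and stable spans --- and to act on rigid 2-cells through that equivalence composed with currying. Dually, let $\Psi$ fix objects and send an optic $F:A_1\profto B_1\vvbar[B_2^\perp\multimap A_2^\perp]$ to the strategy $\Optic(F,\app_{A_2^\perp}):A\profto B$, acting on 2-cells via $\Optic$ and uncurrying. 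It remains to check that $\Phi$ and $\Psi$ are well-defined pseudofunctors, that each is a local equivalence, and that they are inverse to one another up to coherent invertible 2-cells.

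The substantial step is that $\Phi$ preserves composition and identities up to coherent isomorphism, reconciling the two compositions. Composition in the optic bicategory is, by construction, composition of canonical presentations followed by re-canonicalization. Given $\sig:A\profto B$ and $\tau:B\profto C$, with canonical residuals $Q=[B_2^\perp\multimap A_2^\perp]$ and $P=[C_2^\perp\multimap B_2^\perp]$, their composite presentation has residual $P\vvbar Q$, and applying $\Optic$ to it gives, by functoriality of $\Optic$ together with $\sig\iso\Optic(\sig^+,\app_{A_2^\perp})$ and the analogue for $\tau$, a strategy isomorphic to $\tau\scirc\sig$. Since $\Optic$ respects $\sim$ and this composite is $\sim$-equivalent to its canonical form --- obtained via the coupling map that curries the composite's output-side leg --- that canonical form must be the canonical presentation of $\tau\scirc\sig$; comparing residual stable spans gives a comparison isomorphism between $(\tau\scirc\sig)^+$ and the optic composite of $\sig^+$ and $\tau^+$. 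For units, copycat $\cc_A$ is $\Optic$ of the identity presentation with empty residual, which is $\sim$-equivalent to the canonical presentation with residual $[A_2^\perp\multimap A_2^\perp]$, \ie~to the identity optic on $A$; this supplies the unit comparison. One then checks that these comparison 2-cells satisfy the pseudofunctor coherence axioms, which reduces to coherence in the stable-span bicategory and of currying; $\Psi$ is treated symmetrically.

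It remains to see that $\Phi$ is a biequivalence and that $\Psi$ is quasi-inverse to it. On objects $\Phi$ is a bijection. On hom-categories, $\Phi$ restricted to strategies from $A$ to $B$ with rigid 2-cells is a composite of two equivalences --- the strategies-to-stable-spans equivalence for purely Player games and the currying equivalence from monoidal closure of stable spans --- hence an equivalence; so $\Phi$ is a local equivalence, and a pseudofunctor that is bijective on objects and a local equivalence is a biequivalence. The identity $\sig\iso\Optic(\sig^+,\app_{A_2^\perp})$ recorded above is precisely $\Psi\Phi(\sig)\iso\sig$; conversely $(F,\app_{A_2^\perp})$ is by construction a presentation of $\Optic(F,\app_{A_2^\perp})$ and is already canonical, so by essential uniqueness of canonical presentations $\bigl(\Optic(F,\app_{A_2^\perp})\bigr)^+\iso F$, \ie~$\Phi\Psi(F)\iso F$; promoting these to pseudonatural equivalences supplies the quasi-inverse data.

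I expect the main obstacle to be exactly this coherence bookkeeping: keeping straight the symmetry isomorphisms $P\vvbar Q\iso Q\vvbar P$, the repeated passages through currying and uncurrying, and the access-level constraint --- that no event of $B_1$ may causally depend on one of $B_2$, which forces the residual to split off $[B_2^\perp\multimap A_2^\perp]$ rather than $[B_2^\perp\multimap(B_1\vvbar A_2^\perp)]$ --- and verifying throughout that everything is compatible with the $\sim$-relation, so that the comparison 2-cells are well-defined (and themselves rigid) and satisfy the coherence laws. The restriction to rigid 2-cells is essential here: the stable-span presentation, the functoriality of $\Optic$, and the local equivalences above are all stated for rigid maps, and one should confirm at each stage that no non-rigid map is forced in.
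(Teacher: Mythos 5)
Your sketch is essentially correct given the machinery the paper has already set up (functoriality of $\Optic$, invariance under $\sim$, canonical presentations, and the currying correspondence $\sig\mapsto\sig^+$), but it takes a genuinely different route from the paper's own proof. You build explicit quasi-inverse pseudofunctors directly at the dialectica level: $\sig\mapsto\sig^+$ one way and $F\mapsto\Optic(F,\app_{A_2^\perp})$ the other, with composition and units reconciled through the monoidal closure of stable spans and re-canonicalization of composite presentations. The paper instead proves the theorem as a specialisation of the more general container-game statement (the equivalence with dependent optics, Theorem~\ref{thm:depop}): there the key device is the \emph{position function} of a strategy and the strategy $\posn$ associated with $\DOP[A,B]$, which is shown to be \emph{terminal} among strategies between container games from $A$ to $B$; the optic $\sig^+$ arises as the restriction to Player moves of the unique rigid 2-cell $\sig\Rightarrow\posn$, and composition is handled via $\dop[B,C]\scirc\dop[A,B]\iso\dop[A,C]$. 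Your approach buys a concrete, elementary argument that stays within the dialectica setting and leans only on monoidal closure; the paper's buys uniformity (dialectica and dependent optics in one stroke) and a universal-property formulation in which the action on 2-cells and the compatibility with composition come packaged with terminality rather than being checked by hand. One step in your argument deserves explicit support rather than being asserted: when you conclude that the canonical form of the composite presentation ``must be the canonical presentation of $\tau\scirc\sig$,'' you are using that canonical presentations are essentially unique, i.e.\ that $\Optic$ restricted to canonical presentations is locally full and faithful up to isomorphism; this does follow from currying/uncurrying being an equivalence of hom-categories (together with the access-level factorisation through $B_1\vvbar[B_2^\perp\multimap A_2^\perp]$ being full on the relevant spans), but it should be stated, since it is exactly what the paper's terminality argument delivers for free.
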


\subsection{Containers}\label{sec:containers}

A {\em container game} is a  
game of imperfect information $A$ w.r.t.~access levels $\exi \prec \foral$; each Player move of $A$ is sent to $\exi$ and each Opponent move to $\foral$.  
So in $A$ the only 
 causal dependencies between moves of different polarity 
 are $\plmove \leq \opmove$. 

The configurations of a container game $A$ have a dependent-type structure.  Opponent moves can causally depend on Player moves, but not conversely.  Let $A_1$ denote the subgame comprising the initial substructure of purely Player moves of $A$.  
A configuration $x\in\iconf{A_1}$  determines a subgame $A_2(x)$ comprising the substructure of $A$ based on all those Opponent moves for which all the Player moves on which they depend appear in $x$.
A configuration of  $A$ breaks down uniquely into a union $x\cup y$, so a pair $(x,y)$, where $x\in \iconf{A_1}$ and $y\in \iconf{A_2(x)}$.  We can see the configurations of a container game $A$ as  forming a dependent sum $\Sigma_{x:{A_1}} \,  {{A_2(x)}}$. In this way a container game represents  a container type, familiar from functional programming~\cite{containers};
configurations $x$ of $A_1$ are its ``shapes,''  indexing ``positions'' $y\in A_2(x)$.\footnote{
  We won't treat symmetry in concurrent games at all here, but 
  it is important in many applications. 
With the addition of symmetry
, configurations form a nontrivial category, not merely a partial order based on inclusion~\cite{lics14}.  
  } 
 
    We can of course extend a container game $A$  with winning conditions which we identify with a property $W_A$ of the dependent sum $\Sigma_{x:{A_1}} \,  {{A_2(x)}}$. 
A deterministic winning strategy in the container game corresponds to a configuration $x\in \iconf{A_1}$ such that $\forall y \in \iconf{{A_2(x)}}.\ W_A(x,y)$.

Strategies between container games respect 
$\preceq$ on access levels.  
A deterministic strategy $\sig$ from a container game $A$ to a container game $B$ corresponds to a map of container types, also called a {\em dependent lens}, 
having  type
$$
\Sigma_{f:[{A_1}\to {B_1}]} \Pi_{x:{A_1}}\  
        [ {B_2(f (x)) } \to {A_2(x)}]\,; \eqno(*)
$$
so $\sig$ corresponds to a pair of stable functions 
$$f:[{A_1} \to {B_1}] \ \hbox{  and  } \ g:\Pi_{x:{A_1}}  \, [{B_2(f(x))} \to {A_2(x)}]\,,$$  
where we  are using 
the function space, dependent sum and product of stable functions
---see Appendix~\ref{app:deptypes}.  
With  winning conditions $W_A$ and $W_B$
, the strategy from 
$A$ to $B$  would be winning iff, 
for all $x\in\iconf{A_1}$, $y\in \iconf{B_2(f(x))}$,
$$
W_A(x, g_x(y)) \implies W_B(f(x), y)\,.
$$
The correspondence respects composition.  Container types built on dI-domains and stable functions arise as a full subcategory of deterministic concurrent games.

\begin{theorem} 
 The bicategory of deterministic strategies on container games with rigid 2-cells is equivalent to a full subcategory of
containers of dI-domains and stable functions~\cite{containers}.
  \end{theorem}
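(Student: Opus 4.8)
The plan is to upgrade the recipe described just before the statement into a biequivalence, proving it $1$-cell by $1$-cell and $2$-cell by $2$-cell, reducing as much as possible to the equivalences already established for purely Player games and for dialectica games.

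First I would pin down the shape of a deterministic strategy $\sig:S\to A^\perp\vvbar B$ between container games. In $A^\perp\vvbar B$ the Player moves are those over $A_2$ (access level $\foral$) and over $B_1$ (level $\exi$), the Opponent moves are those over $A_1$ (level $\exi$) and over $B_2$ (level $\foral$), and the inherited causal order keeps the container dependencies $\plmove\le\opmove$ inside each of $A^\perp$ and $B$. By the strategy conditions (receptivity and innocence), determinism, and the fact that in a deterministic strategy all conflict is inherited from Opponent, $S$ is $A^\perp\vvbar B$ with only immediate dependencies adjoined in which a Player move awaits a finite set of Opponent moves; the $\Lambda$-constraint then forces a Player move over $B_1$ to wait only for Opponent moves over $A_1$, while a Player move over $A_2$ may wait for Opponent moves over $A_1$ and over $B_2$. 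Restricting to the $A_1$–$B_1$ interface and invoking the equivalence (proved above) between deterministic strategies on purely Player games and dI-domains with stable functions yields a stable function $f:\iconf{A_1}\to\iconf{B_1}$; fixing $x\in\iconf{A_1}$ and restricting to the $B_2(f(x))$ moves together with the $A_2(x)$ part enabled at $x$ yields, by the same equivalence applied fibrewise, a stable function $g_x:\iconf{B_2(f(x))}\to\iconf{A_2(x)}$, and one checks the family $\langle g_x\rangle_{x}$ is itself stable/continuous in $x$, so that $(f,g)$ is an element of the dependent-lens type $(*)$. This defines the pseudofunctor on objects (container games $=$ container types on ${\bf dI}$) and $1$-cells.

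For the converse I would build, from $(f,g)$ of type $(*)$, a strategy directly as an instance of the Section~\ref{sec:part1} constructions: the demand-style assembly $\out_!\scirc\dem^*$ used for stable spans, now over the dependent base --- $S$ is the disjoint union of $A_1$, of $B_1$ (with $B_1$-moves made to depend, through suitable wiring, on the $A_1$-moves dictated by $f$), of the enabled copies of $B_2$, and of $A_2$ (with $A_2$-moves made to depend on $A_1$- and $B_2$-moves through $g$), conflict inherited; receptivity, innocence, the $\Lambda$-condition and determinism are checked as in the purely-Player and dialectica cases using race-freeness of $A$ and $B$ and stability of $f,g$. (Alternatively one restricts the already-proven dialectica/optics equivalence to its deterministic fragment after noting that a container game is a ``dependent dialectica game''.) Then I would verify functoriality: that $\cc_A$ on a container game corresponds to the identity dependent lens --- immediate from Lemma~\ref{lem:CCbel} and the Scott-order presentation --- and, using Theorem~\ref{thm:functor} and the characterisation of the configurations of $T\sncirc S$ as $y\sncirc x$ together with Lemma~\ref{lem:+max}, that $\tau\scirc\sig$ corresponds to the composite dependent lens, with first component $f'\circ f$ and second component $x\mapsto g_x\circ g'_{f(x)}$ (reindexed along $f$). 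Finally, for $2$-cells I would argue that a rigid map $\sig\Rightarrow\sig'$ must be an isomorphism on the adjoined functional wiring, so the pseudofunctor is fully faithful on $2$-cells; hence it is a biequivalence onto its essential image, which is by construction a full subcategory of the containers of dI-domains and stable functions. Identifying that image --- checking which shape/position data on ${\bf dI}$ are realised by container games (those whose position family is itself stable) --- I would do last.

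The main obstacle I expect is the composition-matching computation. The pullback-then-hide composition genuinely mixes the forward dependency $f$ of $\sig$ with the backward dependency of $\tau$: after synchronising the $B_1$ Player moves of $S$ with the $B_2$ Opponent moves of $T$ (and conversely) over the common game $B$ and hiding them, a move over $A_2$ inherits a dependency on $C_2$-moves that has been reindexed along $f$, and one must verify both that no spurious causal cycle is introduced --- so that the secured-bijection hypothesis of the pullback theorem is met --- and that the reindexing is exactly the one appearing in the dependent-lens composite rather than along $f'$ or some uncurried variant. A secondary obstacle is making ``stable/continuous in $x$'' precise for the family $\langle g_x\rangle$, so that it names an element of the dependent function space $\Pi_{x:A_1}[\cdots]$ on ${\bf dI}$ (Appendix~\ref{app:deptypes}), and checking this property is preserved in both directions of the correspondence.
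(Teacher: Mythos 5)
Your proposal follows essentially the same route as the paper: use the access levels $\exi\prec\foral$ together with innocence to force the dependent-lens shape, extract $f$ on the $A_1$--$B_1$ interface and $g_x$ fibrewise via the already-established equivalence of deterministic strategies on purely Player games with stable functions (packaged as an element of $\Pi_{x:A_1}[B_2(f(x))\to A_2(x)]$ using the Appendix constructions), and match strategy composition with dependent-lens composition exactly as in the dialectica/lens case, so it is correct in approach and detail. One small over-statement: a deterministic strategy need not play every Player move, so $S$ is not literally $A^\perp\vvbar B$ with dependencies adjoined (it contains all Opponent moves by receptivity but only the Player moves the strategy actually makes); this does not affect your extraction of $(f,g)$, since $f$ and $g_x$ simply output smaller configurations.
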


\subsection{Dependent optics}
What about general, nondeterministic, strategies between container games?  
A way to motivate their characterisation  is to observe the isomorphism of the type of a dependent lens $(*)$ above with 
$$
\Pi_{x:{A_1}}\Sigma_{y:{B_1}} \  
[{B_2(y)} \to A_2(x) ]\,. 
$$
It is this nonstandard way to present the type of lenses that generalises to the monoidal-closed bicategory of stable spans, once we move to the dependent product $\sPi$ of  stable spans
---
Appendix~\ref{app:deptypes}. 
 
Ignoring winning conditions, a general strategy 
between container games  
corresponds to a new form of 
optic.   A {\em dependent optic} between  container games, from 
 $A$ to 
 $B$, is a 
 stable span of type 
$$
\DOP[A, B]  = \sPi_{x:{A_1}}\Sigma_{y:{B_1}} \   [{B_2(y)} \multimap {A_2(x)}]\,,
$$
so a rigid map into $\DOP[A,B]$.  
A 2-cell $f:F\Rightarrow F'$ between dependent optics $F, F':\DOP[A, B]$  is a 2-cell of stable spans.  
Composition of dependent optics is the stable span
$$
\comp: \DOP[B, C]\, \vvbar \, \DOP[A, B] \profto \DOP[A,C]
$$
described by
$$\eqalign{
G \comp F
\eqdef 
\lambda x:{A_1}. \ 
\plet 
&
(y, F') \larrow F(x) \pin \cr
\plet 
&
(z, G') \larrow G(y)  \pin (z, F'\scirc G')\,,
}
$$
where $F'\stcirc G': [{C_2(z)}  \multimap {A_2(x)}]$  is the composition of stable spans  $G':[{C_2(z)} \multimap {B_2(y)}]$ and $F':[{B_2(y)} \multimap {A_2(x)}]$.   The 
identity 
optic of container game $A$ acts on
$x:A_1$ to return the identity 
at the $x$-component of 
$\Sigma_{x:A_1}[A_2(x) \multimap A_2(x)]$.  

The equivalence of strategies between container games 
 with dependent optics, hinges on recasting $\DOP[A,B]$ as a strategy $\posn: A\profto B$ 
between container games  $A$ and $B$.
Any strategy 
between container games is of course a strategy  
where we forget the access levels. 
We can express that 
a strategy $\sig: A \profto B$ respects the access levels, so is truly a strategy between container games, precisely through the presence of a rigid 2-cell 
$$
\xymatrix@R=14pt@C=18pt{
A\ar@/^1pc/[rr]|-{\! +\!}^\sig
\ar@/_1pc/[rr]|-{\! +\!}_{\posn}&\Downarrow
r\!\!\!\!\!&B\,. \\
}$$  
The 2-cell $r$
 is unique, making the strategy $\posn$  terminal amongst strategies $\sig$ between  container games, from $A$ to $B$. By restricting $r$ to Player moves we obtain the dependent optic $\sig^+: \DOP[A,B]$ which corresponds to $\sig$. 
  
 \begin{theorem} \label{thm:depop}
 The bicategory of strategies between container games, with rigid 2-cells,  is equivalent to the bicategory of dependent optics. 
 \end{theorem}  
\begin{proof}(Sketch)
 
 The proof relies on position functions of strategies~\cite{ecsym-notes}---see \S4.5.1 of {\it op. cit.}.  
 The {\em position function} $d$ of a strategy $\sig:S\to A^\perp\vvbar B$  is given by $d(x) = \sig [x]_S$ for $x\in \conf{S^+}$; the event structure $S^+$ is the projection of $S$ to its Player moves. A position function $d$ is characterised as a union-preserving function $d:\conf{S^+}\to \conf A$ which  restricts to a map $f:S^+\to A^+$ of event structures that on $s\in S^+$ gives the unique event of Player amongst the $\leq_A$-maximal events in $d([s]_S)$. 
 
We describe  the strategy $\posn
$ associated with $\DOP[A,B]$ via its position function. 

Recall that $\DOP[A,B]$ is built using $\Pr$ out of primes of the stable family 
\[ \sPi_{x\in\iconf{A_1}}\Sigma_{y\in\iconf{B_1}} \   [\iconf{B_2(y)} \multimap \iconf{A_2(x)}]\,,\]
whose events take the form
$(x, b)$ or  $(x, (y,a))$ where $x\in\conf{A_1}$, $y\in\conf{B_2}$, $b\in B_1$ and $a\in A_2$.  The position function $d_o$ takes a prime with top element $(x, b)$ to $x\vvbar [b]_B$ and one with top element $(x, (y,a))$ to $x\vvbar (y\cup [a]_A)$.  
 
Showing $\posn$ is terminal amongst strategies $\sig:S\to A^\perp\vvbar B$ between container games, 
rests on there being a unique rigid map $\sig^+$ such that
$$
\xymatrix@R=14pt@C=18pt{
\ar[dr]_dS^+\ar[r]^{\sig^+}& \DOP[A,B]\ar[d]^{d_o}\\
&A^\perp\vvbar B}
$$
commutes, where 
 $d:S^+\to A^\perp\vvbar B$ is the position function of $\sig$. The map 
   $\sig^+ = \Pr(\sig_0)$, where $\sig_0$ is the map of stable families $$\sig_0:\iconf{S^+} \to \Pi_{x\in\iconf{A_1}}\Sigma_{y\in\iconf{B_1}} \   [\iconf{B_2(y)} \multimap \iconf{A_2(x)}]\,,$$
defined as follows.  
For notational simplicity, assume $A$ and $B$ have disjoint sets of events so we can regard the events of $A\vvbar B$ as $A\cup B$. 
Let $s\in S^+$.  Either $\sig(s)\in B_1$ or $\sig(s)\in A_2$.  Accordingly,  define  
$$
\sig_0(s) = 
\begin{cases}
(x,b) &\hbox{  if } \sig(s) = b\in B_1\ \&\ x= \sig[s]_S^-\,; \\
 (x,(y,a)) & \hbox{ if  } \sig(s) = a\in A_2\  \&\ x= \sig[s]_S^-\cap A_1\\
 & \&\ y= \sig[s]_S^-\cap B_2\,.
\end{cases}
$$ 
The fact that 
 $\dop[B,C] \scirc  \dop[A,B] \iso   \dop[A,C]$
 is key in showing the correspondence of $\sig$ with $\sig^+$ respects composition.
 \end{proof}

The results on optics for container games specialise  to those for dialectica games.

\section{Enrichment 
}\label{sec:enrichment}
Games and strategies support enrichments, to:  
probabilistic  strategies, also with continuous distributions~\cite{probstrats,mfps2018}; quantum strategies~\cite{popl19}; 
and strategies on the reals~\cite{aurore}. 
The enrichments specialise to the cases above. 
 Work on enriched concurrent strategies transfers to situations of interest in functional programming, domain theory and geometry of interaction. 
  In explaining how, we can take advantage of a general method for enriching strategies. 

The 
enrichments named above were developed individually and are not always the final story.  For instance,
the assignment of quantum operators to configurations of strategies in~\cite{popl19} is not functorial w.r.t.~inclusion on configurations, a defect when it comes to understanding how the 
operator of a configuration is built up. The authors' 
remedy 
also achieves all the enrichments just named, now uniformly by the same construction.

The construction is 
w.r.t.~a symmetric monoidal category $(\mathcal M, \otimes, {\rm I})$.  For example, $\mathcal M$ can be
the monoid  $([0,1], \cdot, 1)$ comprising the unit interval under multiplication (for probabilistic strategies); 
measurable spaces with Markov kernels (for probabilistic strategies with continuous distributions); 
CPM, finite-dimensional Hilbert spaces with completely positive maps (for quantum strategies); 
or Euclidean spaces with smooth maps, to support (reverse) differentiation. 

We first extend $\mathcal M$ to allow interaction beyond that from argument to result.  The  {\em parameterised category}   $\Para{\mathcal M}$  has the same objects, 
now with maps
$(P, f, Q):X\to Y$ consisting of $f:X\tensor P \to Q\tensor Y$ in $\mathcal M$; the parameters $P$  and $Q$ allow input and output with 
the environment. 
Composition accumulates parameters: $(R,g,S)\circ (P, f, Q) \eqdef (P\tensor R,\  (Q\tensor g)\circ (f\tensor R), \  Q\tensor S) $.
Then, 
 \begin{itemize}
\item[{\bf (1) }]
moves $a$ of a game $A$ are assigned objects $\Hilb(a)$ in $\mathcal M$, extended to $X\in\Con_A$ by $\Hilb(X) \eqdef \Tensor_{a\in X} \Hilb(a)$. 
(Neutral moves, appearing in interaction,  are assigned the tensor unit $\rm I$.)
%
\item[{\bf (2) }] an $\mathcal M$-enriched strategy $\sig:S\to A$ is accompanied by  a functor $\Q:(\conf S, \subseteq) \to \Para{\mathcal M}$.  To an interval $x\subseteq x'$ in
$\conf S$ this assigns a parameterised map 
$\Q(x\subseteq x')$  from $\Q(x)$ to $\Q(x')$ with input parameters  $\Hilb(\sig{(x'\setdif x)}^-)$ and output parameters  $\Hilb(\sig{(x'\setdif x)}^+)$. 
 \end{itemize}
The assignment in {\bf (2)} describes how the internal state is transformed in moving from $x$ to $x'$ under interaction with the environment 
through events $x'\setdif x$.   
 The assignment in {\bf (2)} is assumed
{\em oblivious}, \ie~$\Q(x\subseteq^- x')$ is always an isomorphism in $\mathcal M$, 
expressing that all the input from $x'\setdif x$ is adjoined to the internal state $\Q(x)$ to produce a new internal state $\Q(x')=\Q(x)\tensor \Hilb(\sig{(x'\setdif x)})$.  This is needed to ensure that the enriched version of copycat  acts as identity w.r.t.~composition.  

 In the  quantum and probabilistic cases, observation is contextual, reflected in the presence of an extra {\em drop condition}, a form of inclusion-exclusion principle~\cite{probstrats,popl19}; it requires 
 $\mathcal M$  be enriched over,  at least, cancellative commutative monoids. 

Moves, their positions, 
dependencies and polarities, 
orchestrate the functional dependency and dynamic 
linkage in composing enriched strategies.   Consider an enriched strategy  $\sig:S\to A^\perp\vvbar B$. In the enrichment  the interval  $x\subseteq x'$  of  $S$ is assigned a parameterised map $\Q(x\subseteq^- x')$ 
pictured below, in which the input parameters $P_A\otimes P_B$  and output parameters $Q_A\otimes Q_B$ have been factored into those over $A$ and those over $B$:
\begin{center}
\scalebox{0.7}{\begin{tikzpicture}[scale=0.7]
\node (qa) at (0,1) {$Q_A$};
\node (pa) at (0,-1) {$P_A$};
\node (qb) at (4,-1) {$Q_B$};
\node (pb) at (4,1) {$P_B$};
\node (qxp) at (2,3) {$\Q(x')$};
\node (qx) at (2,-3) {$\Q(x)$};
\node (plus1) at (1.3,1) {$\boxplus$};
\node (minus1) at (1.3,-1) {$\boxminus$};
\node (plus2) at (2.7,-1) {$\boxplus$};
\node (minus2) at (2.7,1) {$\boxminus$};
\node (xp) at (2,1.7) {$x'$};
\node (x) at (2,-1.7) {$x$};
\node (subset) at (2,0) {\rotatebox[origin=c]{90}{$\subseteq$}};
\draw (1,2) rectangle (3,-2);
\draw[->] (1,1) -- (qa);
\draw[->] (3,-1) -- (qb);
\draw[->] (pa) -- (1,-1);
\draw[->] (pb) -- (3,1);
\draw[->] (qx) -- (2,-2);
\draw[->] (2,2) -- (qxp);
\end{tikzpicture}
}
\end{center}
Consider now the interaction of enriched strategies $\sig:S\to A^\perp\vvbar B$  and $\tau:T\to B^\perp\vvbar C$. An interval $y\sncirc x \subseteq y'\sncirc x'$ in the interaction $T\sncirc S$ breaks down into two intervals $x \subseteq x'$ of $S$ and  $y \subseteq y'$ of $T$.  Their assignments compose together as shown to give the assignment to 
$y\sncirc x \subseteq y'\sncirc x'$:
\begin{center}
\scalebox{0.7}{
\begin{tikzpicture}[scale=0.7]
\node (qa) at (-2,1) {$ $};
\node (pa) at (-2,-1) {$ $};
\node (qb) at (2,-1) {$ $};
\node (pb) at (2,1) {$ $};
\node (qxp) at (0,3) {$ $};
\node (qx) at (0,-3) {$ $};
\node (plus1) at (-0.7,1) {$\boxplus$};
\node (minus1) at (-0.7,-1) {$\boxminus$};
\node (plus2) at (0.7,-1) {$\boxplus$};
\node (minus2) at (0.7,1) {$\boxminus$};
\node (xp) at (0,1.7) {$y' \circledast x'$};
\node (x) at (0,-1.7) {$y \circledast x$};
\node (subset) at (0,0) {\rotatebox[origin=c]{90}{$\subseteq$}};
\draw (-1,2) rectangle (1,-2);
\draw[->] (-1,1) -- (qa);
\draw[->] (1,-1) -- (qb);
\draw[->] (pa) -- (-1,-1);
\draw[->] (pb) -- (1,1);
\draw[->] (qx) -- (0,-2);
\draw[->] (0,2) -- (qxp);
\node(equal) at (4,0) {$=$};
\node (2qa) at (6,1) {};
\node (2pa) at (6,-1) {};
\node (2qb) at (10,-1) {};
\node (2pb) at (10,1) {};
\node (2qxp) at (8,3) {};
\node (2qx) at (8,-3) {};
\node (2plus1) at (7.3,1) {$\boxplus$};
\node (2minus1) at (7.3,-1) {$\boxminus$};
\node (2plus2) at (8.7,-1) {$\boxplus$};
\node (2minus2) at (8.7,1) {$\boxminus$};
\node (2xp) at (8,1.7) {$x'$};
\node (2x) at (8,-1.7) {$x$};
\node (2subset) at (8,0) {\rotatebox[origin=c]{90}{$\subseteq$}};
\draw (7,2) rectangle (9,-2);
\draw[->] (7,1) -- (2qa);
\draw[->] (2pa) -- (7,-1);
\draw[->] (2qx) -- (8,-2);
\draw[->] (8,2) -- (2qxp);
\node (2nqa) at (10,1) {};
\node (2npa) at (10,-1) {};
\node (2nqb) at (14,-1) {};
\node (2npb) at (14,1) {};
\node (2nqxp) at (12,3) {};
\node (2nqx) at (12,-3) {};
\node (2nplus1) at (11.3,1) {$\boxplus$};
\node (2nminus1) at (11.3,-1) {$\boxminus$};
\node (2nplus2) at (12.7,-1) {$\boxplus$};
\node (2nminus2) at (12.7,1) {$\boxminus$};
\node (2nxp) at (12,1.7) {$y'$};
\node (2nx) at (12,-1.7) {$y$};
\node (2nsubset) at (12,0) {\rotatebox[origin=c]{90}{$\subseteq$}};
\draw (11,2) rectangle (13,-2);
\draw[->] (13,-1) -- (2nqb);
\draw[->] (2npb) -- (13,1);
\draw[->] (2nqx) -- (12,-2);
\draw[->] (12,2) -- (2nqxp);
\draw[->] (9,-1) -- (11,-1);
\draw[->] (11,1) -- (9,1);
\end{tikzpicture}}
\end{center}
For this composition to be well-defined we need that it involves no functional loops.  But this is assured through the absence of causal loops in the interaction. 
Suppose $z\subseteq z'$ is an interval of ${T\scirc S}$.  The event structure $T\scirc S$ is the projection of $T\sncirc S$.  Take $y\sncirc x = [z]_{T\sncirc S}$ and $y'\sncirc x' = [z']_{T\sncirc S}$ the down-closures of $z$ and $z'$ in $T\sncirc S$.  By definition, the interval 
$z\subseteq z'$ of ${T\scirc S}$ is assigned the same parameterised map as $y\sncirc x  \subseteq y'\sncirc x'$ in ${T\sncirc S}$. 

Enrichments achieved in this way specialise automatically to sub(bi)categories, and  the functional cases  we have considered, without needing extra demands 
on the category  $\mathcal M$. 
Some of the specialisations are known.  For example, stable spans when enriched by probability, via the monoid  $([0,1], \cdot, 1)$, 
become Markov kernels, and this enrichment extends to the various forms of optics we have uncovered.  
Others deserve further exploration.  Enrichment w.r.t.~CPM, yielding quantum strategies, specialises to nondeterministic strategies between GoI games.   
This provides an enrichment of Geometry of Interaction with quantum effects, and a likely candidate with which to give a semantics for the more operational, multi-token machine  treatment of~\cite{quantumGoI}.
 Concurrent games and strategies enrich with  Euclidean spaces and (partial) smooth maps and via them connect with  forwards and reverse differentiation. An easier subcase to explore first is the enrichment of stable functions; here  one already encounters many of the issues of differential programming.

 Enriched strategies provide a general framework in which to explore the interaction patterns of ``functions'' (maps in $\mathcal M$) and realisations of approaches to causal inference through string diagrams~\cite{diagcausality,aleks}. 
 
\section{Conclusion}

Functional paradigms help tame the wild world of interactive computation. 
On the other hand, discovering the simplifying paradigms has often required considerable ingenuity, for example, by G\"odel in his Dialectica Interpretation, or Girard in Geometry of Interaction. 

  The challenges to a functional approach are even more acute with enrichments, say to  probabilistic, quantum or real number computation.  The traditional categories of mathematics do not often support all the features required by computation.  
They often don't have function spaces or 
support recursion. Their extension to computational features has often to be 
 dealt with separately, and ingeniously,
 for example, 
by replacing Borel spaces by quasi Borel spaces to support recursion and higher-order with probability~\cite{Heunen_2017} or the category CPM of completely positive maps by a completion for quantum lambda calculi~\cite{Selingeretal}. Concurrent games and strategies provide enough computational infrastructure that traditional symmetric monoidal categories suffice (the unit interval for probability, Markov kernels for general distributions, CPM for quantum, or smooth maps for differentiation).

As a
model of interaction, concurrent games and strategies are more technically challenging and require a new, more local, way of thinking.  But,  as has been demonstrated here, they can provide a broad general context for interaction which can be specialised to functional paradigms, also in providing enrichments to probabilistic, quantum and real number computation, without requiring clever extensions to the traditional categories of mathematics.  

Concurrent games and strategies can also provide a rationale for new definitions. The form of 
dependent optic described here appears to be new.  It is {\em derived} as a characterisation of nondeterministic strategies between container games.  Contrast this with the incomplete search for a categorical axiomatics of dependent optics described in the blog post~\cite{hedges-depoptics}.  This is not a criticism of axiomatisations but does make the obvious point that they are 
best guided by concrete examples---of which concurrent games and strategies and their enrichments are a rich source.  (A broader characterisation of dependent optics would ensue if  games carried symmetry; then the configurations of a game would form a proper category rather than a partial order.)

There 
is work to do, 
specifically in extending the work here to games with symmetry~\cite{lics14}.  But a lot can be said for a single, expressive, 
intrinsically higher-order framework which readily adapts to enrichments. 

One challenge is that of connecting 
concurrent games and strategies 
with the theory of effects~\cite{Moggi,powerplotkin}, specifically with understanding effect handlers~\cite{effecthandlers} as concurrent strategies.  Though superficially rather different, 
effect handlers and concurrent strategies have very similar roles: both are concerned with orchestrating the future of a computation contingent on its past and its environment.   Through the work of this paper, the language of strategies outlined in Section~\ref{sec:corelang} is able to express complex functional dependencies---see Section~\ref{sec:enrichment}: how can it best be extended to existing theories of effects and effect handlers? As a beginning, the ``detectors'' of Example~\ref{ex:detector}  extend to a form of ``event handler.''
In the other direction, such an investigation should suggest ways to enhance effects and effect handlers to
support richer forms of parallel computation. 

\begin{acks}  
    I'm grateful for discussions with Bob Atkey, Matteo Capucci, Fredrik Nordvall Forsberg,
Bruno Gavranovic,
Neil Ghani,
Dan Ghica,
Samuel Ben Hamou, 
Jules Hedges, Martin Hyland, Clemens Kupke,
J\'er\'emy Ledent, Simon Mirwasser, Valeria de Paiva, Hugo Paquet, Gordon Plotkin and James Wood; 
I learnt of optics and combs from Jules and J\'er\'emy.  Samuel Ben Hamou, ENS Paris-Saclay, 
verified the early part of the dialectica-games section for his student internship.  Section~\ref{sec:enrichment} is directly inspired by 
joint work with Pierre Clairambault and  Marc de Visme.
\end{acks}   

\bibliographystyle{ACM-Reference-Format}
\bibliography{biblio}


\appendix

\section{dI-domains and stable functions}
That dI-domains are exactly the partial orders of configurations of an event structure was first published in~\cite{icalp82}---see the extended version or~\cite{primealg} for the proof.  A {\em stable} function between dI-domains is a Scott continuous function (\ie~preserves least upper bounds of directed sets) which  preserves greatest lower bounds of compatible pairs of elements. G\'erard Berry developed stable domain theory axiomatically, following operational guidelines~\cite{berry}.
For the reader's convenience, we include the constructions on event structures which realise the cartesian-closure of dI-domains and their dependent types.
 
\subsection{Stable function space}\label{sec:Stfnspace}

Berry's cartesian-closed category of dI-domains\footnote{Strictly speaking, Berry defined dI-domains to have a countable basis of finite elements.  Countability plays no role in the work here and we shall not impose it.}  and stable functions can be presented as an equivalent category of event structures~\cite{evstrs}.  We summarise the product and stable function space constructions on stable families and event structures. 

Let $\A$ and $\B$ be stable families with events $A$ and $B$ respectively.   
The product of their domains of configurations is easily realised as a simple parallel composition:  
$
\A\vvbar\B \eqdef \set{x\vvbar y}{x\in \A \ \&\ y\in \B}$.
 
We construct the stable function space of domains 
 as a stable family $[\A\to \B]$. 
The stable family $[\A\to \B]$ comprises those $
f\subseteq \A^o \times B$ for which, for all $x\in \A$,  
\begin{itemize}
\item
 $\set{b}{\exists x'\subseteq x.\ (x',b) \in f} \in \B$\  and 
\item
if $(x',b), (x'', b)\in f$ with $x', x''\subseteq x$  then $x'=x''$.
\end{itemize}

 \begin{theorem}
The construction $[\A\to \B]$ above is a stable family with 
$([\A\to \B], \subseteq)$  order isomorphic to  $[(\A,\subseteq) \to (\B,\subseteq)]$, the stable function space of stable functions,  ordered by the stable order, between dI-domains $(\A,\subseteq)$ and $(\B, \subseteq)$.   \end{theorem}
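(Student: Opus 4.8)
The statement packages two claims: that $[\A\to\B]$ is a stable family, and that its inclusion order is (the stable order on) the stable function space. The plan is to route everything through the map $\widehat{(\cdot)}$ sending $f\in[\A\to\B]$ to the function $\hat f:\iconf\A\to\iconf\B$ with $\hat f(x)\eqdef\set{b}{\exists x'\subseteq x.\ (x',b)\in f}$. The first defining clause of $[\A\to\B]$ says exactly that $\hat f(x)\in\B$, and the second clause will turn out to be precisely what makes $\hat f$ stable. So I would (i) verify the stable-family axioms for $[\A\to\B]$, using only that $\widehat{(\cdot)}$ is monotone (clearly $f\subseteq f'$ gives $\hat f(x)\subseteq\hat{f'}(x)$) and that $\B$ is itself a stable family; then (ii) show $f\mapsto\hat f$ is a bijection onto the stable functions, with inverse given by reading off minimal witnesses; and (iii) check it carries $\subseteq$ to the stable order $g\sqsubseteq_s h \iff \forall x\subseteq y.\ g(x)=g(y)\cap h(x)$.

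For the easy axioms in (i): non-emptiness is witnessed by $\emptyset\in[\A\to\B]$ (with $\hat\emptyset$ constantly $\emptyset$). For completeness, given a finitely compatible $Z\subseteq[\A\to\B]$ and $x\in\iconf\A$, the set $\set{b}{\exists x'\subseteq x.\ (x',b)\in\bigcup Z}$ equals $\bigcup_{f\in Z}\hat f(x)$, and $\set{\hat f(x)}{f\in Z}$ is finitely compatible in $\B$ since a common extension $h$ of a finite $Z_0\subseteq Z$ supplies the bound $\hat h(x)$; so this union lies in $\B$ by completeness of $\B$, and the uniqueness clause for $\bigcup Z$ holds because two offending pairs already sit in a common $h\in[\A\to\B]$. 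Stability is the dual argument: for nonempty compatible $Z$ with common extension $h$, one shows $\set{b}{\exists x'\subseteq x.\ (x',b)\in\bigcap Z}=\bigcap_{f\in Z}\hat f(x)$ using the uniqueness clause of $h$ at $x$, and concludes by stability of $\B$; the uniqueness clause for $\bigcap Z$ is inherited from $h$.

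The remaining two axioms, \emph{finitary} and \emph{coincidence-free}, are the technical heart, and where I expect the real work. Both carve out a small sub-element of $f$ around a pair $(x',b)\in f$ using the prime configurations of $\B$ — i.e.\ the coreflection $\Pr$ and Proposition~\ref{prop:Pr}. For finitariness, take the prime configuration $[b]_{\hat f(x')}$ of $b$ inside $\hat f(x')\in\B$ (finite, since it is contained in any finite subconfiguration of $\hat f(x')$ witnessing $b$); for each $c\in[b]_{\hat f(x')}$ let $w_c\subseteq x'$ be the unique configuration with $(w_c,c)\in f$ (unique by the second clause at $x'$), and set $f_0\eqdef\set{(w_c,c)}{c\in[b]_{\hat f(x')}}$. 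This is finite, contains $(x',b)$, lies in $f$; and $f_0\in[\A\to\B]$ because $\hat{f_0}(x)=\set{c\in[b]_{\hat f(x')}}{w_c\subseteq x}$ is downward-closed in $[b]_{\hat f(x')}$ for its local causal order — using that prime configurations only shrink under passage to smaller configurations — hence is a subconfiguration and so in $\B$. For coincidence-freeness, given distinct $(x_1',b_1),(x_2',b_2)\in f$: if $b_1=b_2$ then $x_1',x_2'$ are incompatible (else the second clause forces them equal), and $f_0\eqdef\set{(v,c)\in f}{v\subseteq x_1'}$ contains the first pair but not the second, and lies in $[\A\to\B]$ since for each $x$ the configurations $v\subseteq x$ with $v\subseteq x_1'$ form a directed set, making $\hat{f_0}(x)$ a directed union of compatible configurations of $\B$; if $b_1\ne b_2$, antisymmetry of the local causal order forbids $b_2\in[b_1]_{\hat f(x_1')}$ and $b_1\in[b_2]_{\hat f(x_2')}$ both, so one of the finite pieces $\set{(w_c,c)}{c\in[b_i]_{\hat f(x_i')}}$ constructed as above separates the pairs.

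For (ii) and (iii): $\hat f$ is monotone and Scott-continuous (the $x'$-component of a pair is finite, so a witness below a directed join already lies below some member), and the second clause of $[\A\to\B]$ forces $\hat f$ to preserve compatible binary meets, so $\hat f$ is a stable function. Conversely, for stable $g$ set $f_g\eqdef\set{(x',b)}{b\in g(x')\ \&\ \forall x''\subseteq x'.\ b\in g(x'')\Rightarrow x''=x'}$: Scott-continuity gives a finite witness of any $b\in g(x)$, and meet-preservation makes the finite witnesses below a fixed configuration closed under intersection, hence each contains a least one, which is minimal below itself and so contributes to $f_g$; this yields $\widehat{f_g}=g$, and a short computation with the uniqueness clause gives $f_{\hat f}=f$, so the operations are mutually inverse. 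Finally, $f\subseteq f'$ gives $\hat f\sqsubseteq_s\hat{f'}$ (for $x\subseteq y$ and $b\in\hat f(y)\cap\hat{f'}(x)$, the witnesses in $f\subseteq f'$ coincide by uniqueness at $y$, forcing the $f$-witness below $x$), and $g\sqsubseteq_s h$ gives $f_g\subseteq f_h$ (the stable-order equation applied to $x''\subseteq x'$ pulls a witness of $b$ for $h$ back to one for $g$, so minimality transfers); hence $\widehat{(\cdot)}$ is the desired order isomorphism. As a corollary the inclusion order on $[\A\to\B]$ is a dI-domain — consistent with, but not a substitute for, the direct verification in step~(i).
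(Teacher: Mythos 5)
Your overall route---verify the stable-family axioms for $[\A\to\B]$ directly, then show that $f\mapsto\hat f$ (with inverse ``take minimal witnesses'') is an order isomorphism onto the stable functions under the stable order---is the natural one; note the paper states this theorem without proof, recalling the construction from earlier work, so there is no in-paper argument to compare against. Most of your verifications are sound: completeness, stability, the finitariness piece $f_0=\set{(w_c,c)}{c\in[b]_{\hat f(x')}}$ (the down-closure claim is where the real check sits: for $c'$ below $c$ in $[b]_{\hat f(x')}$ one has $c'\in\hat f(w_c)$ since $\hat f(w_c)$ is a subconfiguration of $\hat f(x')$ containing $c$, and then uniqueness at $x'$ gives $w_{c'}\subseteq w_c$), and the computations in (ii)--(iii) go through, though you should record the one-line check that $f_g$ itself satisfies the two defining clauses of $[\A\to\B]$.

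There is, however, a genuine error in the coincidence-freeness case $b_1\neq b_2$: you claim antisymmetry forbids $b_2\in[b_1]_{\hat f(x_1')}$ and $b_1\in[b_2]_{\hat f(x_2')}$ holding together, but these primes are taken in two different, possibly incompatible, configurations of $\B$, whose local causal orders need not agree. Concretely, let $\A$ have two inconsistent events $a_1,a_2$, let $\B$ have configurations $\emptyset,\{c\},\{c,b_2\},\{c,b_2,b_1\},\{d\},\{d,b_1\},\{d,b_1,b_2\}$ (so $b_2$ precedes $b_1$ in one branch and conversely in the other), and let $f$ consist of the pairs making $\hat f(\{a_1\})=\{c,b_2,b_1\}$ and $\hat f(\{a_2\})=\{d,b_1,b_2\}$ with all witnesses $\{a_1\}$, respectively $\{a_2\}$; then $f\in[\A\to\B]$ and both memberships hold for the distinct events $(\{a_1\},b_1)$ and $(\{a_2\},b_2)$, refuting your dichotomy (though in this example the finite piece still separates, because the witness components differ). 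The step can be repaired using ingredients you already have: if $x_1'\neq x_2'$ then one of them fails to contain the other, and your restriction trick from the $b_1=b_2$ case ($f_0=\set{(v,e)\in f}{v\subseteq x_i'}$) already separates; if $x_1'=x_2'$ then both primes are taken in the single configuration $y=\hat f(x_1')$, and $b_2\in[b_1]_y$ together with $b_1\in[b_2]_y$ would force $[b_1]_y=[b_2]_y$, contradicting coincidence-freeness of $\B$---so one of the two finite pieces omits the other event. (Equivalently: a piece around $(x_i',b_i)$ can fail to separate only if the other witness lies below $x_i'$, so both failing forces $x_1'=x_2'$ and reduces to the single-configuration argument, where your antisymmetry appeal is valid.) With that local fix the proof is complete.
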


Given event structures $A$ and $B$, we define 
$$[A\to B] \eqdef \Pr([\iconf A \to \iconf B])\,.$$
The configurations of  $[A\to B]$ under inclusion are isomorphic to the stable function space of dI-domains $(\iconf A, \subseteq)$ and $(\iconf B, \subseteq)$.

\subsection{Dependent-type constructions
}\label{app:deptypes}

We base the constructions here on~\cite{dIPoly, CGW} (though with the simplification w.l.o.g.~of using the substructure relation $\tri$ between stable families~\cite{icalp82,evstrs} in place of rigid embeddings between dI-domains).  Recall the definition $\A\tri \B$, where   $\A$ and $\B$ are  stable families with events $A$ and $B$ respectively:
$$
\eqalign{
\A \tri \B \hbox{ iff }
&A\subseteq B\, \hbox{ and } \cr
&\all x.\ x\in \A \iff x\subseteq A \ \& \  x\in \B\,.
}
$$
The relation  $\A\tri \B$ specifies a rigid embedding from the dI-domain $(\A,\subseteq)$  to the dI-domain $(\B, \subseteq)$ with projection $y\mapsto y\cap A$ from 
 $\B$ to $\A$.\footnote{The relation $\tri$ does not have least upper bounds in general; 
 there can be distinct minimal upper bounds.}
 
 For event structures $A$ and $B$, we write $A \tri B$ when $\iconf A \tri \iconf B$.  On event structures,  $A \tri B$ is equivalent to the events of $A$ being included in those of $B$ with 
 $$
 \all a\in A.\ [a]_A = [a]_B \ \hbox{ and }\  \all X\subseteq A.\ X\in\Con_A \iff X\in\Con_B\,.
 $$

Let $\A$ be a stable family.
Let $\B(\_)
$ be a {\em stable functor} from the  
partial-order category 
$(\A, \subseteq)$ to the (large) partial-order category 
of stable families related by 
$\tri$.  We shall write $B(x)$ for the events of $\B(x)$, where $x\in\A$. 
That the functor is {\em stable} means it is continuous and preserves pullbacks, which in this case means it is 
a function which  preserves least upper bounds of directed sets and greatest lower bounds of compatible pairs.
Correspondingly, for an event structure $A$,  a functor from $x\in \iconf A$ to event structures $B(x)$ is {\em stable} when it is continuous and preserves pullbacks w.r.t.~$\tri$ on event structures. \\
 
\noindent
{\bf Dependent sum }\\
$\Sigma_{x\in \A}\, \B(x)$ is    the stable family 
$$
 \set{
x\vvbar y
}{
x\in\A \ \& \ y\in \B(x)}
\,. $$
\begin{proposition}
$\Sigma_{x\in \A}\, \B(x)$ is a stable family with configurations corresponding to pairs $(x, y)$, where $x\in\A$ and $y\in \B(x)$; the order of configurations corresponds to the coordinatewise order on pairs.  
\end{proposition}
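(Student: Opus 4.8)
The plan is to verify the five stable-family axioms for $\Sigma_{x\in\A}\,\B(x)$ directly, reading off the configuration-and-order description as a by-product. Throughout, regard $\Sigma_{x\in\A}\,\B(x)$ as a family of subsets of the disjoint juxtaposition $A\vvbar B$, where $B$ collects the events of all the $\B(x)$; every $z\subseteq A\vvbar B$ splits uniquely as $z=z_A\vvbar z_B$ with $z_A\subseteq A$ and $z_B\subseteq B$, and the two projections $z\mapsto z_A$, $z\mapsto z_B$ commute with arbitrary unions and intersections. The first assertion is then immediate: by definition $z\in\Sigma_{x\in\A}\,\B(x)$ iff $z=x\vvbar y$ with $x\in\A$ and $y\in\B(x)$, \ie\ iff $z_A\in\A$ and $z_B\in\B(z_A)$, so $z\mapsto(z_A,z_B)$ is a bijection onto $\set{(x,y)}{x\in\A \ \&\ y\in\B(x)}$; and $z\subseteq z'$ iff $z_A\subseteq z'_A$ and $z_B\subseteq z'_B$, where the target ``coordinatewise order'' is well defined because $z_A\subseteq z'_A$ forces $\B(z_A)\tri\B(z'_A)$, so $z_B\in\iconf{\B(z_A)}\subseteq\iconf{\B(z'_A)}$ is genuinely comparable with $z'_B$. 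This gives the claimed order isomorphism.

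For the axioms themselves, completeness with empty index already forces $\emptyset=\emptyset\vvbar\emptyset\in\Sigma_{x\in\A}\,\B(x)$, giving non-emptiness, and finitariness and coincidence-freeness transfer componentwise, a finite witness inside $z_A$ in $\A$ or inside $z_B$ in $\B(z_A)$, read as a subset of $z$, serving for $z$. For \emph{completeness} proper, let $Z\subseteq\Sigma_{x\in\A}\,\B(x)$ be finitely compatible; projecting upper bounds shows $\set{z_A}{z\in Z}$ is finitely compatible in $\A$, so $x^*\eqdef\bigcup_{z\in Z}z_A\in\A$, and $z_A\subseteq x^*$ gives $z_B\in\iconf{\B(x^*)}$ for each $z$. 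To see the $z_B$ are finitely compatible in $\B(x^*)$, fix finite $W\subseteq Z$ with upper bound $w$ and put $x_W\eqdef\bigcup_{z\in W}z_A\in\A$ (a \emph{finite} compatible union, so $x_W\subseteq x^*$ and $x_W\subseteq w_A$); the $z_B$ for $z\in W$ all lie in $\iconf{\B(w_A)}$ and are dominated there by $w_B$, so by completeness of $\B(w_A)$ their union $y_W$ is a configuration of $\B(w_A)$, and since each $z_B\subseteq B(z_A)\subseteq B(x_W)$ we get $y_W\subseteq B(x_W)$, hence $y_W\in\iconf{\B(x_W)}\subseteq\iconf{\B(x^*)}$ bounds those $z_B$. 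Completeness of $\B(x^*)$ then gives $\bigcup_{z\in Z}z_B\in\iconf{\B(x^*)}$, so $\bigcup Z=x^*\vvbar\bigcup_{z\in Z}z_B\in\Sigma_{x\in\A}\,\B(x)$. \emph{Stability} runs dually: from nonempty compatible $Z$ with upper bound $w$, stability of $\A$ gives $x^*\eqdef\bigcap_{z\in Z}z_A\in\A$; each $z_B$ lies in $\iconf{\B(w_A)}$, the family is nonempty and bounded by $w_B$, so stability of $\B(w_A)$ gives $y^*\eqdef\bigcap_{z\in Z}z_B\in\iconf{\B(w_A)}$; finally, identifying the event set of $\B(x^*)$ with $\bigcap_{z\in Z}B(z_A)$ yields $y^*\subseteq B(x^*)$ and hence $y^*\in\iconf{\B(x^*)}$, so $\bigcap Z=x^*\vvbar y^*\in\Sigma_{x\in\A}\,\B(x)$.

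The step I expect to be the real obstacle is the last one: showing the meet $y^*$ of the $\B$-components, which \emph{a priori} is only a configuration of the large fibre $\B(w_A)$, actually lands inside the small fibre $\B(x^*)$ over the meet of the $\A$-components --- equivalently, that the event set of $\B(\bigcap_{z}z_A)$ is $\bigcap_{z}B(z_A)$. This is exactly where one must use that $\B$ is a \emph{stable} functor and not merely $\tri$-monotone: preservation of pullbacks gives $B(x\cap x')=B(x)\cap B(x')$ for compatible $x,x'$, hence for finite compatible families by induction; extending this to the possibly infinite family $\set{z_A}{z\in Z}$ uses continuity of $\B$ to localise each event of $y^*$ (finite by finitariness of $\B(w_A)$, with prime history lying in $\iconf{\B(z_A)}$ for every $z$) to a finite configuration of some $\B(d)$ with $d$ a finite element of $\A$ below $x^*$, together with the elementary fact that an infinite intersection of finite configurations is already a finite sub-intersection. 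By contrast completeness needs nothing beyond $\tri$-monotonicity of $\B$, and everything else is bookkeeping about the decomposition $z=z_A\vvbar z_B$.
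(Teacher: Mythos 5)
Your reading of configurations as pairs under the coordinatewise order, your non-emptiness, coincidence-freeness and completeness arguments are fine (the paper itself omits the proof, deferring to the cited constructions), but the stability step has a genuine gap: the identity $B(x\cap x')=B(x)\cap B(x')$, and its infinitary version $B(\bigcap_{z}z_A)=\bigcap_{z}B(z_A)$, is not what preservation of pullbacks gives, and it is false in general. Pullbacks in the $\tri$-ordered class of stable families are meets, and the meet of $\B(x)\tri\B(x\cup x')$ and $\B(x')\tri\B(x\cup x')$ is the intersection of the two families of configurations, whose event set $\bigcup(\B(x)\cap\B(x'))$ can be strictly smaller than $B(x)\cap B(x')$. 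Concretely: let $\A$ be the configurations of two concurrent events $a_1,a_2$, let $\B(\{a_1,a_2\})$ be the stable family with configurations $\emptyset,\{\alpha\},\{\beta\},\{\alpha,\beta\},\{\alpha,e\},\{\beta,e\}$, let $\B(\{a_1\})$ and $\B(\{a_2\})$ be its restrictions to the event sets $\{\alpha,e\}$ and $\{\beta,e\}$, and $\B(\emptyset)=\{\emptyset\}$. This functor is monotone, continuous and preserves pullbacks (the meet of $\B(\{a_1\})$ and $\B(\{a_2\})$ is $\{\emptyset\}=\B(\emptyset)$), yet $B(\{a_1\})\cap B(\{a_2\})=\{e\}$ while $B(\emptyset)=\emptyset$. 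So your inference ``$y^*\subseteq\bigcap_z B(z_A)$, hence $y^*\subseteq B(x^*)$'' is unsupported. The proposition survives because in the stability axiom the $z_B$ share the bound $w_B$ inside the single fibre $\B(w_A)$, and that is what must be used: for finite $Z$, $y^*\in\B(w_A)$ by stability of that fibre, $y^*\subseteq z_B\subseteq B(z_A)$ gives $y^*\in\B(z_A)$ by $\tri$, and pullback preservation read correctly---on configurations, $\B(x_1\cap x_2)=\B(x_1)\cap\B(x_2)$---gives $y^*\in\B(x^*)$. For infinite $Z$, fix $e\in y^*$ and a finite $p\in\B(w_A)$ with $e\in p\subseteq y^*$ (finitariness); then $p\in\B(z_A)$ for all $z$, and the set $\set{x'\in\A}{x'\subseteq w_A\ \&\ p\in\B(x')}$ has a least element $m$: continuity supplies a finite member, pullback preservation closes the set under binary meets, so the full intersection is a finite sub-intersection. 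Minimality forces $m\subseteq z_A$ for every $z$, hence $m\subseteq x^*$ and $p\in\B(m)$, so $e\in B(x^*)$. This is what your ``localisation'' sentence gestures at, but as written nothing there explains why the localising configuration can be taken below $x^*$---that is exactly the crux, and it is where the false event-level identity was doing the work.

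A second, smaller gap: finitariness does not transfer componentwise. For an event $(2,b)$ of $x\vvbar y$, a finite $y_0\in\B(x)$ with $b\in y_0\subseteq y$ is not yet a finite configuration of $\Sigma_{x\in\A}\,\B(x)$: paired with $x$ it need not be finite, and paired with $\emptyset$ it need not lie in $\B(\emptyset)$. You need continuity of $\B$ again, to find a finite $x_0\subseteq x$ in $\A$ with $y_0\in\B(x_0)$, and then take $x_0\vvbar y_0$. So your closing claim that only the stability meet uses stability of the functor and the rest is bookkeeping is not quite right, although completeness and coincidence-freeness are indeed as routine as you say.
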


We shall  describe a typical configuration of  $\Sigma_{x\in\A}\,  \B(x)$ as a pair $(x, y)$  where $x\in\A$ and $y\in \B(x)$.  It's often convenient to describe an operation on configurations of the dependent sum in terms of their decomposition into pairs. 

For an event  structure  $A$ and $B(x)$, stable  in $x\in\iconf A$,
$$\Sigma_{x:A}\, B(x)\eqdef \Pr(\Sigma_{x\in \iconf A}\, \iconf{B(x)})\,.
$$

By analysing the structure of the prime configurations of $\Sigma_{x\in \A}\, \B(x)$, we can see that 
the event structure $\Sigma_{x:A}.\, B(x)$ is isomorphic to the event structure comprising
\begin{itemize}
\item
{\em events}, consisting of the set of $a\in A$ in disjoint union with the set of pairs $(x,b)$, where $x\in\conf A$ is a smallest  configuration for which $b\in B(x)$; 
\item
{\em causal dependency},  that generated by the relations on  events 
$$
\eqalign{
&
a'\leq_A a\,, \cr
&
(x', b')\leq (x,b) \ \hbox{ if } \ x'\subseteq x \ \&\ b'\leq_B b\,, \hbox{ and }\cr
&
a\leq (x,b) \ \hbox{ if }\  a\in x\,;
}
$$
\item
{\em consistency}, a finite subset of events, 
$$
\eqalign{
&\set{a_i}{i\in I} \cup\set{(x_j, b_j)}{j\in J} \in \Con
\hbox{ iff } \cr
&\set{a_i}{i\in I} \in\Con_A\  \&\  \cr\
& \bigcup_{j\in J} x_j \in \conf A
\ \&\ 
\set{b_j}{j\in J} \in\Con_B 
\ \&\ \cr
&\all j, k\in J.\ 
b_j=b_k \implies x_j= x_k\,.
}
$$
\end{itemize}

\vskip 0.1in 

 \noindent
{\bf Dependent product}\\
The obvious projection from $\Sigma_{x\in \A}\, \B(x)$ to $\A$ is a simple form of Grothendieck fibration.  We obtain  $\Pi_{x\in\A}\,  \B(x)$  as a stable family whose configurations correspond to 
stable sections of the fibration, \ie~stable functions from $(\A, \subseteq)$ to $(\Sigma_{x\in \A}\, \B(x),\subseteq)$ which send $x\in\A$ 
to a configuration   $x\vvbar y$ where $y\in\B(x)$.  To this purpose, we can refashion the construction of the stable function space 
of Section~\ref{sec:Stfnspace} to restrict to stable functions which are sections.

$\Pi_{x\in\A}\,  \B(x)$ is the stable  
family  comprising those sets
$$
f \subseteq \set{(x,b)}{x\in \A^o \ \& \ b\in B(x)}$$ for which, for all $x\in \A$,
\begin{itemize}
\item
 $\set{b}{\exists x'\subseteq x.\ (x',b) \in f} \in \B(x)$ \ and 
\item
if $(x',b), (x'', b)\in f$ with $x', x''\subseteq x$  then $x'=x''$.
\end{itemize}
When $\B(x)$ is constantly $\B$, for all $x\in\A$, we observe that $$\Pi_{x\in\A}\,  \B(x)= [\A\to \B]\,.$$
 
\begin{theorem}
The configurations of $\Pi_{x\in\A}\,  \B(x)$ 
correspond to 
stable sections of $\Sigma_{x\in \A}\, \B(x)$; inclusion between configurations corresponds to the stable order on sections.  
\end{theorem}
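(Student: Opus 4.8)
The plan is to follow, essentially line for line, the proof of the stable-function-space theorem of Section~\ref{sec:Stfnspace}, now letting the fibre $\B(x)$ vary with $x$. The hypothesis that $\B(\_)$ is a stable functor into stable families ordered by $\tri$ is precisely what makes this relativization painless: it guarantees that whenever $x\subseteq x'$ the configurations of $\B(x)$ sit compatibly inside those of $\B(x')$, so that $\hat f(x)$ below is legitimately a configuration of $\B(x')$ as well, and that meets of fibres are computed fibrewise. Concretely I would set up two maps and show them mutually inverse. Given a configuration $f\in\Pi_{x\in\A}\,\B(x)$, put $\hat f(x)\eqdef\set{b}{\exists x'\subseteq x.\ (x',b)\in f}$; by the first defining clause of $\Pi$, $\hat f(x)\in\B(x)$, so $(x,\hat f(x))$ is a configuration of $\Sigma_{x\in\A}\,\B(x)$, and $s_f(x)\eqdef(x,\hat f(x))$ is manifestly a section of the projection to $\A$. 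Conversely, given a stable section $s$, writing $s(x)=(x,\rho(x))$ with $\rho(x)\in\B(x)$, take the Berry trace $f_s$: the set of pairs $(x',b)$ with $x'\in\A^o$, $b\in\rho(x')$, and $x'$ $\subseteq$-minimal among finite configurations of $\A$ having $b$ in their $\rho$-image.

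First I would check that $s_f$ is a stable section. Monotonicity is immediate, since enlarging the argument only adds qualifying pairs. Continuity uses crucially that the left components of pairs in $f$ are \emph{finite}: if $D\subseteq\A$ is directed with $\bigcup D\in\A$ and $b\in\hat f(\bigcup D)$, then $(x',b)\in f$ with $x'\subseteq\bigcup D$ finite, so $x'\subseteq x$ for some $x\in D$ and $b\in\hat f(x)$; thus $\hat f$ preserves directed unions, and the first coordinate does trivially. For binary meets: if $x_1,x_2\in\A$ are compatible (so $s_f(x_1),s_f(x_2)$ are compatible in $\Sigma_{x\in\A}\,\B(x)$) and $b\in\hat f(x_1)\cap\hat f(x_2)$, the witnessing pairs $(x_1',b),(x_2',b)\in f$ lie below a common upper bound of $x_1,x_2$, so the uniqueness clause of $\Pi$ forces $x_1'=x_2'\subseteq x_1\cap x_2$; hence $\hat f(x_1\cap x_2)=\hat f(x_1)\cap\hat f(x_2)$, and as meets in $\Sigma_{x\in\A}\,\B(x)$ are coordinatewise (the preceding proposition) $s_f$ preserves them.

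The step I expect to be the main obstacle is the reverse direction, that is, showing that the trace $f_s$ of a stable section really is a configuration of $\Pi_{x\in\A}\,\B(x)$. The crux is a lemma: for a stable section $s$ and any $b\in\rho(x)$ there is a \emph{unique} $\subseteq$-minimal finite configuration $z\subseteq x$ with $b\in\rho(z)$. Existence and finiteness come from continuity of $\rho$ ($b$ already lies in $\rho$ of some finite subconfiguration $z_0\subseteq x$, and $z_0$ has only finitely many subconfigurations); uniqueness comes from meet-preservation, for any two such witnesses $z_1,z_2\subseteq x$ are compatible and $b\in\rho(z_1)\cap\rho(z_2)=\rho(z_1\cap z_2)$, so minimal witnesses are closed under intersection. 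The same argument shows the minimal witness of $b$ below any $x$ containing it does not depend on $x$, which is exactly what is needed for $f_s$ to satisfy the uniqueness clause of $\Pi$; a short computation then gives $\widehat{f_s}(x)=\rho(x)$, so the first clause holds as well and $s_{f_s}=s$. In the other composite, $f\mapsto s_f\mapsto f_{s_f}$ returns $f$: unwinding definitions, $(x',b)\in f$ iff $x'$ is the minimal finite configuration with $b\in\hat f(x')$, again by the uniqueness clause. So $f\mapsto s_f$ is a bijection with inverse $s\mapsto f_s$.

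Finally I would match the orders. The computation just made shows $f$ is literally the Berry trace of $s_f$, so $f\subseteq f'$ holds iff the traces of $s_f$ and $s_{f'}$ are comparable under inclusion; and inclusion of Berry traces coincides with the stable order on the corresponding stable sections---the relativization of the order part of the Section~\ref{sec:Stfnspace} theorem, proved the same way from uniqueness of minimal witnesses. Hence $f\mapsto s_f$ is an order isomorphism from $(\Pi_{x\in\A}\,\B(x),\subseteq)$ onto the stable sections of $\Sigma_{x\in\A}\,\B(x)$ under the stable order. Specialising to constant $\B(\_)$ recovers the displayed identity $\Pi_{x\in\A}\,\B(x)=[\A\to\B]$ together with its order isomorphism, as noted after the definition.
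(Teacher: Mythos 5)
Your proposal is correct and takes what is essentially the intended route: the paper states this theorem without an explicit proof (deferring to the constructions it cites and to the function-space theorem of Section~\ref{sec:Stfnspace}), and your argument is exactly the standard relativized trace correspondence --- $f\mapsto s_f$ via $\hat f$, and section $\mapsto$ Berry trace of minimal witnesses, with stability of the functor $\B(\_)$ ensuring the fibrewise intersections make sense and trace inclusion matching the stable order. The only caveat is the sentence claiming the minimal witness of $b$ ``below any $x$ containing it does not depend on $x$'': that is false for incompatible $x$'s, but what your intersection argument actually proves --- that minimal witnesses lying below a \emph{common} configuration coincide --- is precisely what the uniqueness clause of $\Pi_{x\in\A}\,\B(x)$ requires, so the proof stands with that rewording.
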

 
Hence we can describe a typical configuration of  $\Pi_{x\in\A}\,  \B(x)$ as a stable section,  using $\lambda$-notation, as $\lambda x\in\A.\ f(x)$, provided $f(x)\in \B(x)$ is stable in $x\in\A$; we 
obtain its components by function application. 

For event  structures  $A$ and $B(x)$, stable in $x\in\iconf A$,
define 
$$
\Pi_{x:A}\, B(x) \eqdef 
\Pr(\Pi_{x\in\iconf A}\,  \iconf{B(x)})\,.
$$


\section{Stable spans}\label{app:stspans}
  

Stable spans are monoidal-closed---see~\cite{mikkel}\S7.5.  Their tensor is given by the simple parallel composition of event structures. We define the function space in slightly greater generality, between stable families.
    
Let $\A$ and $\B$ be stable families.  
We construct the function space 
of stable spans as a stable family.
The stable family $[\A\multimap  \B]$ comprises those
$
F\subseteq \A^o \times B$ for which 
\begin{itemize}
\item
$\bigcup\set{x}{\exists b.\ (x,b)\in F} \in \A$,
 \item
$\all x \in \A.\ \set{b}{\exists x'\subseteq x.\ (x',b)\in F} \in \B$ and 
\item
$\all (x, b), (x', b)\in F.\  x=x'$.
\end{itemize}
It can be checked that $[\A\multimap  \B]$ is a stable family. For event structures $A$ and $B$, define $[A\multimap B]\eqdef \Pr([{\iconf A \multimap \iconf B}])$.  The configurations of $[A\multimap  B]$ represent the possible paths the computation  of output in $B$ from input in $A$  can follow. 

 By broadening to nondeterministic computation 
 we can often regard types as special maps.  For example
 $[A\multimap B]$ becomes a stable span with the obvious demand and rigid map.  As such it is
 terminal within all stable spans from $A$ to $B$: for any span $S, d,r$ there is a unique 2-cell as shown
 $$\small\xymatrix@R=12pt@C=10pt{
 &\ar[ld]_dS\ar@{..>}[d]\ar[rd]^r&\\
 A&\ar[l][A\multimap B]\ar[r]&B\,.
 }
 $$
 Deterministic stable spans coincide with stable functions. 
 
\subsection{Dependent product for stable spans}
The dependent product for stable spans is a refashioning of the definition of their function space, to take account of the dependency of $\B(x)$ on $x\in \A$. The stable sections of the previous dependent product above are replaced by stable spans, so giving a form of nondeterministic dependent product.


$\sPi_{x\in\A}\,  \B(x)$ is the stable  
family  comprising those sets 
$$
F\subseteq \set{(x,b)}{x\in \A^o \ \& \ b\in B(x)}$$ for which 
\begin{itemize}
\item
$\bigcup\set{x}{\exists b.\ (x,b)\in F} \in \A$,
 \item
$\all x \in \A.\ \set{b}{\exists x'\subseteq x.\ (x',b)\in F} \in \B(x)$ and 
\item
$\all (x, b), (x' b)\in F.\  x=x'$.
\end{itemize}
When $\B(x)$ is constantly $\B$, for all $x\in\A$, we observe that $$\sPi_{x\in\A}\,  \B(x)= [\A\multimap \B]\,.$$

\begin{proposition}
The configurations of $\sPi_{x\in\A}\,  \B(x)$ correspond to stable sections $f: X_0 \to \Sigma_{x\in X_0} \B(x)$, where 
$X_0=\set{x\in \A}{x\subseteq x_0}$ for some $x_0\in\A$, and  for all $x\in X_0$, writing 
$f(x)= (x, f'(x))$ and $f(x_0)= (x_0, f'(x_0))$,  
if $f'(x) =f'(x_0)$ then $x=x_0$.  
\end{proposition}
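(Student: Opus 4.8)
The configurations of $\sPi_{x\in\A}\,\B(x)$ are precisely the sets $F$ satisfying the three displayed closure conditions, so the plan is to exhibit mutually inverse assignments between such $F$ and the stable sections described, mirroring the analysis of the function space and of the deterministic dependent product in Appendix~\ref{app:deptypes}. The only genuinely new features over the deterministic case are that the section lives on a principal down-set $X_0$ rather than on all of $\A$, and that each output event carries a single shape: the first closure condition will supply the bound $x_0$, and the third will force the per-event uniqueness that underlies stability.

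\textbf{From configurations to sections.} Given such an $F$, I would set $x_0\eqdef\bigcup\set{x}{\exists b.\ (x,b)\in F}$, which lies in $\A$ by the first closure condition, and $X_0\eqdef\set{x\in\A}{x\subseteq x_0}$. For $x\in X_0$ put $f'(x)\eqdef\set{b}{\exists x'\subseteq x.\ (x',b)\in F}$; the second condition gives $f'(x)\in\B(x)$, so $f(x)\eqdef(x,f'(x))$ is a point of $\Sigma_{x\in X_0}\B(x)$ projecting to $x$, i.e.\ a section. Monotonicity and preservation of directed unions of $f'$ are immediate from its existential form, while preservation of compatible meets is exactly where the third condition enters: it makes the shape $x'$ witnessing $b\in f'(x)$ unique, so that $b\in f'(x)\cap f'(x'')$ (with $x,x''\subseteq x_0$ hence compatible) forces that single shape below $x\cap x''$, whence $b\in f'(x\cap x'')$. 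The injectivity clause of the statement then follows: if $f'(x)=f'(x_0)$ every $b$ occurring in $F$ has its unique shape below $x$, so $x_0\subseteq x$ and $x=x_0$.

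\textbf{From sections to configurations.} Conversely, given a stable section on $X_0=\set{x\in\A}{x\subseteq x_0}$ with the injectivity property, I would recover $F$ via Berry's minimal-data property of stable functions: for each $b\in f'(x_0)$ the set $\set{x\in X_0}{b\in f'(x)}$ is nonempty, and, being closed under compatible meets (as $f'$ preserves them) and directed (as $f'$ is continuous), has a least element $m(b)$; put $F\eqdef\set{(m(b),b)}{b\in f'(x_0)}$. Condition~3 holds by construction; condition~1 holds because the $m(b)$ are all $\subseteq x_0$, hence bounded and with union in $\A$; and condition~2 holds because for $x\in\A$ the set $\set{b}{\exists x'\subseteq x.\ (x',b)\in F}$ equals $f'(x'')$, where $x''$ is the union of those $m(b)$ lying below $x$ (a configuration of $\A$ since it is bounded by $x$), and $f'(x'')\in\B(x'')\tri\B(x)$. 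The two assignments are then checked to be mutually inverse: section\,$\to$\,$F$\,$\to$\,section returns $f'$ because $m(b)$ is by definition the least shape at which $b$ appears (so $b\in f'(x)\iff m(b)\subseteq x$), and $F$\,$\to$\,section\,$\to$\,$F$ returns $F$ because the unique shape of each $b$ in $F$ is automatically its least shape.

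\textbf{Main obstacle.} The delicate point, and the reason for the injectivity hypothesis, is recovering the top $x_0$: the hypothesis is exactly equivalent to $x_0=\bigcup_b m(b)$, ruling out a strictly smaller bound producing the same section and so making the domain $X_0$ part of the recovered data. I expect the substantive work to be this recoverability equivalence together with the minimal-data argument and the verification that the constructed $f'$ is genuinely stable; the remaining checks that $F$ meets the three closure conditions, and that inclusion of configurations matches the stable order on sections, are routine once these are in place.
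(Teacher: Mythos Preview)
The paper states this proposition without proof, so there is no proof to compare against. Your proposal is essentially correct and follows the natural route one would expect the author to have in mind: it adapts the standard ``trace'' characterisation of the stable function space $[\A\to\B]$ (Theorem in Appendix~\ref{sec:Stfnspace}) to the span setting, where the section lives only on a principal down-set $X_0$ and the top $x_0$ must itself be recoverable from the data.

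Two small points worth tightening. First, in the minimal-data step you say the set $\set{x\in X_0}{b\in f'(x)}$ is ``directed (as $f'$ is continuous)''; what you actually use is that it is closed under binary meets (stability) and contains a finite element (continuity), which in a dI-domain forces a finite least element---the phrasing should reflect that. Second, in verifying condition~2 for arbitrary $x\in\A$ (not just $x\in X_0$), your $x''=\bigcup\set{m(b)}{m(b)\subseteq x}$ is automatically $\subseteq x_0$, so $f'(x'')$ is defined and lies in $\B(x'')\tri\B(x)$; this is implicit in what you wrote but worth making explicit. With those clarifications the argument goes through, and your identification of the injectivity clause with $x_0=\bigcup_b m(b)$ is exactly the point of the condition.
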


For event  structures  $A$ and $B(x)$, stable in $x\in\iconf A$,
define 
$$
\sPi_{x:A}\, B(x) \eqdef 
\Pr(\sPi_{x\in\iconf A}\,  \iconf{B(x)})\,.
$$

 \section{Proofs for Section~\ref{sec:part1}}
\begin{theorem}\label{appthm:affstmaptostrats}
Let $f: \iconf A \to \iconf B$ be an affine-stable map between \esswp~$A$ and $B$.    
Then 
$$
\Fam \eqdef \set{
x\vvbar y\in \iconf{A^\perp\vvbar B}}{y\bel_B f(x)}
$$
is an infinitary stable family.  
The map $\max:\Pr(\Fam)\to A^\perp\vvbar B$ is a strategy $f_!:A\profto B$.  The strategy $f_!$ is deterministic if  $A$ and $B$ are race-free and $f$ reflects $-$-compatibility, \ie~$x\subseteq^- x_1$ and  $x\subseteq^-  x_2$ in\,   $\iconf A$
 and
 $f x_1 \cup f x_2\in \iconf B$ implies  
 $x_1\cup x_2\in \iconf A$.  
\end{theorem}
\begin{proof}
In the proof we make frequent use of the following observations.
 Let $B$ be an \eswp.
Let 
$y_i \bel_B y_i'$, for all $i\in I$.  Then, (with $I$ nonempty), 
$$
\bigcap_{i\in I} y_i \bel_B \bigcap_{i\in I} y'_i 
\,.$$
When   both $\set{y_i}{i\in I}$ and $\set{y'_i}{i\in I}$ are compatible in $\iconf B$, 
$$
\bigcup_{i\in I} y_i \bel_B \bigcup_{i\in I} y'_i\,. 
$$

We first show $\Fam$ is a stable family.

\noindent
{\em Completeness:}  Let $\set{x_i\vvbar y_i}{i\in I}$ be a finitely compatible subset in $\Fam$.  From compatibility, it follows that $\bigcup_{i\in I} x_i$ and $\bigcup_{i\in I} y_i$ are configurations.  By assumption $y_i\bel_B f(x_i)$, for all $i\in I$, so
$$
\bigcup_{i\in I} y_i \bel_B  \bigcup_{i\in I} f (x_i) \subseteq^+  f(\bigcup_{i\in I} x_i)\,.
$$
As the relation $\subseteq^+$ is included in $\bel_B$, by the latter's transitivity we obtain
$$
\bigcup_{i\in I} y_i \bel_B   f(\bigcup_{i\in I} x_i)\,, 
$$
so
$$
\bigcup_{i\in I} (x_i\vvbar y_i) = (\bigcup_{i\in I} x_i\vvbar \bigcup_{i\in I} y_i)  \in\Fam\,.
$$

\noindent
{\em Stability:}  Let $\set{x_i\vvbar y_i}{i\in I}$ be a nonempty compatible subset in $\Fam$. 
By assumption $y_i\bel_B f(x_i)$, for all $i\in I$, so
$$
\bigcap_{i\in I} y_i \bel_B  \bigcap_{i\in I} f (x_i) \supseteq^- f(\bigcap_{i\in I} x_i)\,
$$
---it follows from the assumptions that  $\set{x_i }{i\in I}$ is a nonempty compatible family in $\iconf A$, as is required to apply the stability of $f$.   As   $\supseteq^-$ is included in $\bel_B$, we deduce
$$
\bigcap_{i\in I} (x_i\vvbar y_i)  = (\bigcap_{i\in I} x_i\vvbar  \bigcap_{i\in I}  y_i) \in\Fam\,.
$$

\noindent
{\em Finiteness:} If $x\vvbar y$ in the family $\Fam$, then $x\in\iconf A$ and $y\in\iconf B$ with $y\bel_B f(x)$.  An   element in $x\vvbar y$ is either $(1,a)$ where $a\in x$ or $(2,b)$ where $b\in y$.  We analyse these two cases. 

\noindent
{\em Case $a\in x$.}  Observe the set 
$f([a])^-$ is finite by $-$-image finiteness.  It follows that $[f([a])^-]\in\conf B$ is a finite configuration of $B$ for which
$$
[f([a])^-] \subseteq^+ f[a]\,, \hbox{ so } [f([a])^-] \bel_B f[a]\,.
$$
As also $y\bel_B f(x)$ we have
$$
y\cap  [f([a])^-] \bel_B f(x) \cap f[a] = f[a]\,,
$$
whence  
$$
[a] \vvbar (y\cap [f([a])^-]) \in\Fam 
$$
creating a finite subconfiguration of $x\vvbar y$ containing $(1,a)$.   

\noindent
{\em Case $b\in y$.}  We prove a stronger result than is strictly needed for this part of the proof, in preparation for the proof of coincidence-freeness later.  
Letting $b\in y$, take
$$
x_0 \eqdef \bigcap\set{x'\in\iconf A }{[b]^+\subseteq f(x')\ \&\ x' \subseteq x}\,.
$$
By the stability of $f$,  
$$
f(x_0) \subseteq^- \bigcap \set{f(x')}{x'\in\iconf A \ \&\ [b]^+\subseteq f(x')\ \&\ x' \subseteq x}\,.
$$
Thus
$$
[b]^+\subseteq f(x_0)\,,
$$
and $x_0$ is the minimum subconfiguration of $x$ for which $[b]^+\subseteq f(x_0)$.   By +-continuity, $x_0$ is a finite configuration.  Also
$$
[f (x_0)^-] \subseteq^+ f(x_0)
$$
where the configuration $[f (x_0)^-]$ is also finite by $-$-image finiteness. 
We  observe that all the $\leq$-maximal events in $x_0$ are +ve:  supposing  otherwise, there is  a $\leq$-maximal $-$ve event in $x_0$ so  a configuration $x_0' \subsetneq^- x_0$; then, as $f$ preserves polarity,  $[b]^+\subseteq f(x_0)\subseteq^- f(x_0')$ so $[b]^+\subseteq f(x_0')$, contradicting the minimality of $x_0$.
Whatever the polarity of $b$ we obtain
$$
 [f (x_0)^-] \cup [b] \supseteq^-  [f (x_0)^-] \cup [[b]^+] \subseteq^+ f(x_0)\,,
$$
so
$$
 [f (x_0)^-] \cup [b] \bel_B  f(x_0)\,.
$$
We now show that $b\notin [f (x_0)^-]$ by cases on the polarity of $b$.

Suppose $\pol_b(b) = +$. In this case $[b] =  [[b]^+]$ and $x_0$ is the minimum subconfiguration of $x$ such that $b\in f(x_0)$.
If $x_0 =\emptyset$, by affinity, in the case of the empty family, we have $\emptyset \subseteq^+ f(\emptyset)$ which ensures $[f(x_0)^-]$ is empty, so does not contain $b$.  
Otherwise, the $\leq$-maximal events in $x_0$ are +ve and there is a subconfiguration $x_0' \subsetneq^+ x_0$.  As $f$ respects polarity, $f(x_0') \subseteq^+ f(x_0)$. Hence $f(x_0)^- \subseteq f(x_0')$ so $[f (x_0)^-]  \subseteq^+ f(x_0')$.  
From the minimality of $x_0$, we must have $b\notin f(x_0')$, 
so we also have $b\notin [f (x_0)^-]$, as required.

Suppose $\pol_B(b) = -$.
We show $b\notin f(x_0)$, from which  $b\notin [f (x_0)^-]$ follows directly.  Suppose otherwise that $b\in f(x_0)$. If $x_0$ is empty,   we have $\emptyset \subseteq^+ f(\emptyset) = f(x_0)$, contradicting the polarity of $b$.
When $x_0$ is nonempty,  
as the $\leq$-maximal events in $x_0$ are +ve, we must have a strictly smaller subconfiguration $x_0' \subsetneq^+ x_0$.  But then as $f$ respects polarity  $f(x_0') \subseteq^+ f(x_0)$. As $b$ is $-$ve, $b\in f(x_0')$ making $[b]^+ \subseteq f(x_0')$,which  contradicts the minimality of $x_0$.  This shows $b\notin f(x_0)$, as required to obtain $b\notin [f (x_0)^-]$.

To complete the proof of the finiteness property, observe that 
$y\bel_B f(x)$ with $[f (x_0)^-] \cup [b] \bel_B  f(x_0)$ entail
$$
y\cap  ([f (x_0)^-] \cup [b])  \bel_B f(x) \cap f(x_0) = f(x_0)\,.
$$
It follows that 
$$
x_0\vvbar (y\cap ([f (x_0)^-] \cup [b]))  \in \Fam\,,
$$
so yielding a finite subconfiguration of $x\vvbar y$ containing $(2,b)$. We note for later that
$x_0$ is the minimum subconfiguration of $x$ for which $[b]^+ \subseteq f(x_0)$  and  from this it follows  that 
$$
 b\notin [f (x_0)^-] \  \hbox{ with }\ 
[f (x_0)^-] \cup [b] \bel_B f(x_0) \,.
$$

\noindent
{\em Coincidence-free:}  Let $x\vvbar y\in \Fam$.  Consider two distinct events in $x\vvbar y$.  There are three cases: they belong to the same component $x$;  they belong to the same component $y$; or  they belong to different components.

If they both belong to the same $x$-component, from the argument above they are $(1,a_1)$ and $(1,a_2)$ and belong to the respective subconfigurations
$$
[a_1] \vvbar (y\cap [f([a_1])^-])   \  \hbox{ and } \ [a_2] \vvbar (y\cap [f([a_2])^-])  
$$
of $x\vvbar y$. If $a_1$ and $a_2$ are distinct, one of the subconfigurations must separate them in the sense of containing one but not the other.  

Assume they both belong to the same $y$-component,   one being $(2,b_1)$ and the other $(2,b_2)$, with $b_1, b_2\in y$.  From the 
proof of the finiteness part above, they belong to respective subconfigurations of $x\vvbar y$ of the form 
$$
x_1\vvbar (y\cap ([f (x_1)^-] \cup [b_1])) 
 \  \hbox{ and } \ 
x_2\vvbar (y\cap ([f (x_2)^-] \cup [b_2])) 
$$
where $x_1$ is the minimum subconfiguration of $x$ for which $[b_1]^+ \subseteq f(x_1)$ and  $x_2$ is the minimum subconfiguration of $x$ for which $[b_2]^+ \subseteq f(x_2)$.  Recall from earlier that 
$$
\eqalign{
& b_1\notin [f (x_1)^-] \  \hbox{ with }\ 
[f (x_1)^-] \cup [b_1] \bel_B f(x_1) \quad \hbox{  and }\cr
 &
b_2\notin [f (x_2)^-]  \  \hbox{ with }\ 
[f (x_2)^-] \cup [b_2] \bel_B f(x_2) \,.}
$$
Imagine the two subconfigurations  of $x\vvbar y$ above do not separate $(2,b_1)$ and  $(2,b_2)$, \ie 
$$
\eqalign{
&(2, b_2) \in x_1\vvbar (y\cap ([f (x_1)^-] \cup [b_1])) 
\quad \hbox{
and 
}\cr
&(2, b_1)\in x_2\vvbar (y\cap ([f (x_2)^-] \cup [b_2]))\,.}
$$
Then 
$$
\eqalign{
&b_2 \in [f (x_1)^-] \cup [b_1] \bel_B f(x_1)
\quad 
\hbox{ and }\cr
&
b_1\in[f (x_2)^-] \cup [b_2]\bel_B f(x_2)\,. 
}
$$
By the properties of $\bel_B$, 
we see that 
$
[b_2]^+ \subseteq f(x_1)
$
and 
$
[b_1]^+ \subseteq f(x_2)
$.
From the minimality properties of $x_1$ and $x_2$ we deduce that  $x_1= x_2$.  Writing $x_0 \eqdef x_1= x_2$ and 
recalling 
$
b_1, b_2 \notin [f (x_0)^-]
$
we obtain $b_1 \in [b_2]$ and $b_2 \in [b_1]$, so $b_1=b_2$.  
Hence distinct $(2,b_1)$ and $(2,b_2)$ are separated by the chosen subconfigurations of $x\vvbar y$.

Assume the  two distinct events in $x\vvbar y$ belong to different components, one being $(1,a)$, with $a\in x$, and the other $(2,b)$, with $b\in y$.  If $b\notin f([a])$ then one argues, as frequently above, that  $f([a])\bel_B f([a])$ together with $y\bel_B f(x)$ gives $y\cap f([a]) \bel_B  f([a])$ yielding $[a]\vvbar(y\cap  f([a]))$  a subconfiguration of $x\vvbar y$, which moreover contains $(1,a)$ but not $(2,b)$.  Thus suppose $b\in  f([a])$.  If  $b\in  f([a))$ then $[a)\vvbar(y\cap  f([a)))$ is a subconfiguration of $x\vvbar y$ which contains $(2,b)$ but not $(1,a)$.  The remaining case is when  $b\in  f([a])$ and  $b\notin  f([a))$.  Then
$[a)\longcov a [a]$ and 
$b\in  f([a])\setdif  f([a))$. 

If $\pol_A(a) = +$ then, as $f$ respects polarity, 
$$
f([a)) \subseteq^+ f([a]), \hbox{ so }  f([a)) \bel_B f([a])\,.
$$
 By the now familiar argument, this yields $[a]\vvbar (y\cap f[a))$ a subconfiguration of $x\vvbar y$ containing $(1,a)$ but not $(2,b)$.  

Similarly, if $\pol_A(a) = -$ then 
$$
f([a)) \subseteq^- f([a]), \hbox{ so }  f([a]) \bel_B f([a))\,,
$$
yielding a subconfiguration $[a)\vvbar (y\cap f[a])$ of $x\vvbar y$  which contains $(2,b)$ but not $(1,a)$.  

This completes  the proof of coincidence-freeness.
\\

We check the map $\max:\Pr(\Fam)\to A^\perp\vvbar B$ is a strategy. Observe
 that  
$$
x'\sqsupseteq_A x \ \& \  x\vvbar y\in \Fam  \ \&\  y\sqsupseteq_B y' \implies  x'\vvbar y'\in \Fam
$$
as the l.h.s.~clearly entails  
$$
y'\bel_B y\bel_B f(x)  \sqsubseteq_B f(x') \,, 
$$
so the r.h.s..  In particular, when $x\vvbar y\in \Fam$ and $(x'\vvbar y') \in \iconf{A^\perp\vvbar B}$,

if $(x\vvbar y)\subseteq^- (x'\vvbar y')$, then $(x'\vvbar y') \in\Fam$; and 

if $(x'\vvbar y')\subseteq^+ (x\vvbar y) $, then $(x'\vvbar y') \in\Fam$.

\noindent
Thus the composite map 
$$
\iconf{\Pr(\Fam)}\to \Fam \hookrightarrow \iconf{A^\perp\vvbar B}\,
$$
of stable families, where the first map is $\max$ and the second is an inclusion, satisfies the ``lifting'' 
conditions needed of  a strategy---see~\cite{ecsym-notes},
 ensuring that $\max:\Pr(\Fam)\to A^\perp\vvbar B$ is a strategy. \\

Assume now that $A$ and $B$ are race-free and that $f$ reflects $-$-compatibility.   As $A^\perp\vvbar B$ is now also race-free, to show $f_!$ a deterministic strategy it suffices to show that any two +ve event increments of a configuration in $\Fam$ are compatible in $\Fam$, \ie~ if $x\vvbar y \cov^+ x_1\vvbar y_1$ and $x\vvbar y \cov^+ x_2\vvbar y_2$ in $\Fam$, then $(x_1\cup x_2)\vvbar (y_1\cup y_2)\in\Fam$.  Consider cases. \\
 If the increments are $y\longcov{b_1} y_1$ and $y\longcov{b_2} y_2$, then $b_1$ and $b_2$ are +ve in $B$.  Because each $y_i\bel_B f(x)$, \ie~$y_i\supseteq^- z \subseteq^+ f(x)$ where $z=y \cap f(x)$, we see both   $b_1\in f(x)$ and $b_2\in f(x)$.  Hence $z\cup\setof{b_1, b_2}\in\iconf B$.  Because $B$ is race-free we obtain $y_1\cup y_2\in\iconf B$.  Checking $y_1\cup y_2\bel_B f(x)$, ensures $x\vvbar (y_1\cup y_2) \in \Fam$.\\
   If the increments are $x\longcov{a_1} x_1$ and $x\longcov{a_2} x_2$ then $a_1$ and $a_2$ are $-$ve in $A$ with $y\bel_B f(x_1)$ and 
$y\bel_B f(x_2)$.  It follows that each $f(x_i)\setdif f(x)$ consists of solely $-$ve events in $B$ and so are included in $y$.  This ensures the compatibility of $f(x_1)$ and $f(x_2)$.  That $(x_1\cup x_2)\vvbar y  \in \Fam$ now follows from $f$ reflecting $-$-compatibility and its  affinity. \\
The final case is when the increments are, w.l.o.g.~$x\longcov{a_1} x_1$ and $y\longcov{b_2} y_2$, when $a_1$ is $-$ve in $A$ and $b_2$ +ve in $B$.  Then $y\bel_B f(x_1)$ and $y_2\bel_B f(x)$, so $y_2\bel_B f(x_1)$, making $x_1\vvbar y_2\in\Fam$.  
\end{proof}

\subsection{A functor
}

Let $f:A\to B$ and $g:B\to C$ be affine stable maps. They determine stable families
$$
\eqalign{
{\F} = &\set{x\vvbar y}{ f(x)\leb_B y} \hbox{ and } \cr
{\G} = &\set{y\vvbar z}{ g(y)\leb_C z}\,,}
$$
respectively. 
Consider the stable family determined by the composition of functions $gf$, \viz
$$
\set{x\vvbar z
}{ gf(x)\leb_C z}\,.
$$
One can show straightforwardly that  
$$
\eqalign{
\set{x\vvbar z
}{ gf(x)\leb_C z}
=  &\set{x\vvbar z
}{ \exists y\in\iconf B.\ f(x)\leb_B y \ \&\  g(y)\leb_C z}\cr
&\set{x\vvbar z
}{ \exists y\in\iconf B.\ x\vvbar y\in {\F}  \ \&\ y\vvbar z\in{\G} }\cr
 = &\ {\G}\circ {\F}\,,
}
$$
where the last composition is essentially the composition of stable families as relations: for instance, regarding the stable family $\F$ 
as $$\set{(x,y)\in\iconf{A
}\times\iconf B}{f(x)\leb_B y}\,,$$ observing the isomorphism $\iconf{A
}\times\,\iconf B\iso \iconf{A^\perp\vvbar B}$.
We shall show that 
$$
\Pr({\G}) \scirc \Pr({\F}) \iso \Pr({\G}\circ {\F})\,,
$$
so reducing the composition of strategies of affine-stable maps to relational composition;  
by definition, it  follows directly that 
$$
g_! \scirc f_! \iso (gf)_! \,.
$$
For functoriality of $(\_)_!$ we also require preservation of identities.  However, the stable family determined by $\id_A:\iconf{A}\to \iconf{A}$ is, by definition, 
$$
\set{x\vvbar y}{x\leb_A y} =\iconf{\CC_A}\,,
$$
ensuring that ${\id_A}_! \iso \CC_A$.

\begin{lemma}\label{lem:partlrigid-y*x}
Let $\sig:A\profto B$ and $\tau:B\profto C$ be strategies.  Suppose $\tau_1$ is partial rigid (\ie, the component $\tau_1: T\to B$ preserves causal dependency when defined).  Letting $x\in\conf S$, $y\in\conf T$,
$$
y\sncirc x \hbox{  is defined } \hbox{  iff } \sig_2 x = \tau_1 y\,.
$$
\end{lemma}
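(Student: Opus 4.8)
The plan is to prove the two directions separately, with all the real work in the ``if'' direction; rigidity of $\tau_1$ enters only at one place.

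\emph{Forward direction.} No rigidity is needed. If $y\sncirc x$ is defined then so is $(x\vvbar y_C)\wedge(x_A\vvbar y)$, and by the characterisation of $\wedge$ this forces the two arguments to have a common image over $A\vvbar B\vvbar C$, i.e.\ $(\sig\vvbar C)(x\vvbar y_C)=(A\vvbar\tau)(x_A\vvbar y)$. Computing both sides gives $x_A\vvbar x_B\vvbar y_C=x_A\vvbar y_B\vvbar y_C$, hence $x_B=y_B$, which is exactly $\sig_2 x=\tau_1 y$.

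\emph{Converse.} Assume $\sig_2 x=\tau_1 y$, so $x_B=y_B$. Then the images agree, the composite bijection $\theta$ between $x\vvbar y_C$ and $x_A\vvbar y$ is well defined, and it remains only to check that $\theta$ is \emph{secured}, i.e.\ that the relation $\preceq$ generated on the events of $\theta$ by the steps ``$\leq_{S\vvbar C}$ on left components'' and ``$\leq_{A\vvbar T}$ on right components'' is a finitary partial order. Finitariness is automatic since $x$ and $y$ are finite. Suppose $\preceq$ has a nontrivial cycle and fix one of minimal length; since $\leq_{S\vvbar C}$ and $\leq_{A\vvbar T}$ are transitive, no two consecutive steps are of the same kind, so the cycle strictly alternates left and right steps. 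Classify each event of $\theta$ as of type $A$, $B$ or $C$ according to whether its common image lies in $A$, $B$ or $C$: then left components of type-$A$ and type-$B$ events lie in $x\subseteq S$ while those of type $C$ lie in $y_C\subseteq C$, and right components of type $A$ lie in $x_A\subseteq A$ while those of type $B$ and $C$ lie in $y\subseteq T$. Hence a left step joins two $S$-events (types $A,B$) or two $C$-events, and a right step joins two $A$-events (type $A$) or two $T$-events (types $B,C$).

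Two reductions finish the argument. First, because $\tau$ reflects causal dependency locally, a $\leq_C$ relation between the left components of two type-$C$ events pushes through $\tau$ to a $\leq_T$ relation between their right components; along the cycle a maximal block of consecutive type-$C$ events can only be entered and exited by right steps from, and to, type-$B$ events, and chaining the resulting $\leq_T$ relations collapses the block to a single right step between those two type-$B$ events --- a strictly shorter alternating cycle, contradicting minimality (and a cycle of only type-$C$ events would be a cycle in the partial order $\leq_T$, since there $r$ is injective). So the minimal cycle has only type-$A$ and type-$B$ events. Second, for such a cycle every step is witnessed by a $\leq_S$ relation on the $S$-components: left steps by definition; right steps between type-$A$ events because a $\leq_A$ relation equals a $\leq_{A^\perp}$ relation on the $\sig$-images, which pulls back to $\leq_S$ by local reflection of $\sig$; and right steps between type-$B$ events because rigidity of $\tau_1$ sends the $\leq_T$ relation to a $\leq_B$ relation on images, which again pulls back to $\leq_S$ by local reflection of $\sig$. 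Thus the cycle would be a cycle through $\ge 2$ distinct events of the partial order $\leq_S$, which is impossible. Hence $\preceq$ has no nontrivial cycle; being reflexive, transitive and on a finite set, it is a finitary partial order, so $\theta$ is secured and $y\sncirc x$ is defined.

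The main obstacle is the combinatorial bookkeeping of the converse: the alternation reduction, and in particular the collapse of type-$C$ blocks, where one must keep straight exactly which of local reflection of $\sig$, local reflection of $\tau$, and rigidity of $\tau_1$ is being used at each kind of step. Rigidity of $\tau_1$ is invoked only in the type-$B$ step, and it is genuinely needed there: without it the generated relation can fail to be a partial order.
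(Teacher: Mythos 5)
Your proof is correct and takes essentially the same route as the paper's: assuming $\sig_2 x=\tau_1 y$, suppose the would-be secured bijection has a causal cycle, classify its events as lying over $A$, $B$ or $C$, collapse the stretches over $C$, and use partial rigidity of $\tau_1$ together with local reflection of causal dependency by $\sig$ (and by $\tau$) to turn the remaining steps into $\leq_S$, contradicting that $S$ is a partial order. The differences are only presentational: you argue on the left/right components of the secured bijection with a minimal-length cycle, whereas the paper identifies $x$ and $y$ over $B$ and excises the $C$-blocks iteratively.
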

\begin{proof}
Write $x_A = \sig_1 x$, $x_B = \sig_2 x$, $y_B = \tau_1 y$ and $y_C= \tau_2 y$.  
Recall $y\sncirc x $ is defined to be the bijection  
$$
x\vvbar y_C\iso x_A\vvbar x_B\vvbar x_C \iso x_A\vvbar y
$$
induced by $\sig$ and $\tau$ provided 
 $x_B = y_B$, \ie~$ \sig_2 x = \tau_1 y$, and the bijection is secured---see Theorem~\ref{thm:securedbijns}.
 To simplify notation we can present the bijection
 as $x\cup y$ in which we identify the two sets $x$ and $y$ at their parts $\sig^{-1} x_B$ and $\tau^{-1} y_B$ via the common image $x_B = y_B$.  
 
To obtain a contradiction, suppose that the bijection were not secured, that there were a causal loop in $x\cup y$, \ie~that there were a chain 
$$
u_1\imc u_2 \imc \cdots \imc u_n = u_1
$$
of events in $x\cup y$, with $n>1$, w.r.t.~causal dependency $\imc$ which is either  $\imc_{S}$ or $\imc_{T}$.  The events of  $x\sncirc y$ and so of the chain are either over $A$, $B$ or $C$.  As there are no causal loops in $S$ or $T$ the causal loop must contain events over each of $A$, $B$ and $C$.  W.l.o.g., we may assume $u_1$ is over $B$. 

Part of the chain is over $C$.  The whole chain has the form
$$
u_1\imc \cdots \imc u_{i-1} \imc_T u_i \imc_T \cdots \imc_T u_j \imc_T u_{j+1} \imc \cdots   \imc u_n = u_1
$$
where $u_{i-1}$ and $u_{j+1} $ are over $B$ and $u_i, \cdots, u_j$ are all over $C$.
Clearly $u_{i-1} <_T u_{j+1}$.   As $\tau_1$ is partial rigid, we obtain
$\tau(u_{i-1}) <_B \tau(u_{j+1})$.   With the identification of events over $B$ in $x$ and $y$, we have $\sig(u_{i-1}) <_B \sig(u_{j+1})$. As $\sig$  locally reflects causal dependency, we see that $u_{i-1} <_S u_{j+1}$.  
We now have a causal loop
$$
u_1\imc \cdots\imc  u_{i-1} <_S u_{j+1} \imc \cdots  \imc u_n = u_1
$$
from which the events $u_i, \cdots, u_j$ over $C$ have been excised. 
Continuing in this way we can remove all events over $C$ from the causal loop,  obtaining a causal loop in $S$ ---a contradiction. 
\end{proof}

Now to the isomorphism. 
First, a key observation, expressing that the strategy obtained from an affine-stable map doesn't disturb the causality of input: 
\begin{proposition}\label{prop:obs}
Let $g:B\to C$ be an affine-stable map which determines the  stable family
${\G} = \set{y\vvbar z}{ g(y)\leb_C z}$.  Let $y\vvbar z \in\G$.  Then,
$$
\all b, b'\in y.\ (1,b')\leq_{y\vvbar z} (1,b)  \iff b'\leq_B b\,.
$$
In the strategy $g_! =\max:\Pr({\G})\to B^\perp\vvbar C$, the component $(g_!)_1:\Pr({\G})\to B^\perp$ is partial rigid.
\end{proposition}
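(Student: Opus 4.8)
The plan is to establish the first (order-characterisation) claim directly, and then obtain partial rigidity of $(g_!)_1$ as a short consequence. Throughout, $\leq_{y\vvbar z}$ denotes the causal order that $\Pr(\G)$ induces on the configuration $y\vvbar z$, so that $(1,b')\leq_{y\vvbar z}(1,b)$ means $[(1,b')]_{y\vvbar z}\subseteq[(1,b)]_{y\vvbar z}$ for the prime configurations of Proposition~\ref{prop:Pr}, and $[b]$ is the down-closure of $b$ in $B$. I will use: the two observations opening the proof of Theorem~\ref{appthm:affstmaptostrats}, that $\bel_C$ is preserved under intersections and under compatible unions of configurations; the monotonicity of an affine-stable map $g$, obtained by applying affinity to the compatible pair $\{x,x'\}$ with $x\subseteq x'$ to get $g(x)\cup g(x')\subseteq^+ g(x')$, hence $g(x)\subseteq g(x')$; and three elementary facts about primes in a stable family, namely $[e]_u=[e]_{u'}$ when $e\in u\subseteq u'$, $p=[\max(p)]_{p}$ for every prime $p$, and $[e']_u\subseteq[e]_u$ whenever $e'\in[e]_u$.

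For $b'\leq_B b\Rightarrow(1,b')\leq_{y\vvbar z}(1,b)$: every element of $\G$ is a configuration of $B^\perp\vvbar C$, hence down-closed, and $b'\leq_B b$ gives $(1,b')\leq_{B^\perp\vvbar C}(1,b)$, so every $u\in\G$ with $(1,b)\in u$ also contains $(1,b')$. Intersecting over the $u\in\G$ with $(1,b)\in u\subseteq y\vvbar z$ (a nonempty family, since $y\vvbar z$ is among them) gives $(1,b')\in[(1,b)]_{y\vvbar z}$, and then $[(1,b')]_{y\vvbar z}\subseteq[(1,b)]_{y\vvbar z}$ by the last elementary fact.

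For the converse it is enough to exhibit one $u\in\G$ with $(1,b)\in u\subseteq y\vvbar z$ whose restriction to $B^\perp$-events is exactly $[b]$: then $[(1,b)]_{y\vvbar z}\subseteq u$ forces any $(1,b')\in[(1,b)]_{y\vvbar z}$ to satisfy $b'\leq_B b$, and since $[(1,b')]_{y\vvbar z}\subseteq[(1,b)]_{y\vvbar z}$ yields $(1,b')\in[(1,b)]_{y\vvbar z}$, this suffices. Imitating the ``Case $a\in x$'' computation in the proof of Theorem~\ref{appthm:affstmaptostrats}, I would take $u\eqdef[b]\vvbar\bigl(z\cap[\,g([b])^-\,]\bigr)$. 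Here $[\,g([b])^-\,]\subseteq^+ g([b])$, since any event of $g([b])$ outside the down-closure of its negative part is positive, so $[\,g([b])^-\,]\bel_C g([b])$; intersecting this with $z\bel_C g(y)$ and using $g([b])\subseteq g(y)$ gives $z\cap[\,g([b])^-\,]\bel_C g([b])$, \ie~$u\in\G$. As $b\in y$ and $y$ is down-closed we have $[b]\subseteq y$, so $u\subseteq y\vvbar z$ and $(1,b)\in u$, while the $B^\perp$-component of $u$ is precisely $[b]$, as required.

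Finally, for partial rigidity: let $p'\leq p$ in $\Pr(\G)$ with $(g_!)_1(p)$ and $(g_!)_1(p')$ both defined; then $\max(p)$ and $\max(p')$ are $B^\perp$-events, say $(1,b)$ and $(1,b')$, and $(g_!)_1(p)=b$, $(g_!)_1(p')=b'$. Put $v\eqdef p\in\G$; then $p=[(1,b)]_v$, and since $p'\subseteq p=v$ and $p'$ is a prime of top $(1,b')$ we get $p'=[(1,b')]_v$. Thus $[(1,b')]_v\subseteq[(1,b)]_v$, so the first claim gives $b'\leq_B b$, \ie~$(g_!)_1(p')\leq(g_!)_1(p)$ in $B^\perp$. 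The one point needing genuine care is verifying $u=[b]\vvbar(z\cap[\,g([b])^-\,])\in\G$ in the converse direction; this, however, is essentially the step already carried out in the proof of Theorem~\ref{appthm:affstmaptostrats}, so I anticipate no real obstacle, the remainder being bookkeeping with prime configurations.
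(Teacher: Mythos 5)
Your proof is correct and follows essentially the same route as the paper's: the forward implication via down-closedness of the $B^\perp$-component of configurations in $\G$, the converse by exhibiting a subconfiguration of $y\vvbar z$ in $\G$ whose $B^\perp$-part is exactly $[b]$, and partial rigidity as a direct consequence. The only (immaterial) difference is your choice of witness $[b]\vvbar(z\cap[\,g([b])^-\,])$ where the paper uses the slightly simpler $[b]\vvbar(z\cap g([b]))$; both verifications go through by the same intersection argument with $z\bel_C g(y)$ and monotonicity of $g$.
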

\begin{proof}
Recall $(1,b')\leq_{y\vvbar z} (1,b)$ iff every subconfiguration of $y\vvbar z$  in $\G$ which contains $(1,b)$ also contains $(1,b')$.  

Any subconfiguration of $y\vvbar z$ necessarily takes the form $y'\vvbar z'$ where $y'$ is a subconfiguration of $y$ in $B$ and $z'$ is a subconfiguration of $z$ in $C$ with $g(y')\leb_B z'$. From $b'\leq_B b$ it  therefore follows that $(1,b')\leq_{y\vvbar z} (1,b)$.  

Conversely, given a subconfiguration $y'$ of $y$ we have $y'\vvbar g(y')\in \G$ whence 
$y'\vvbar g(y') \cap z'$ is a subconfiguration of $y\vvbar z$ in $\G$.  From this the converse implication follows: if $(1,b')\leq_{y\vvbar z} (1,b)$ then $b'\leq_B b$.

Thus $(1,b')\leq_{y\vvbar z} (1,b)$  iff $b'\leq_B b$, for all $b, b'\in y$. 
That $(g_!)_1$ is partial rigid is a direct consequence. 
\end{proof}
  
\begin{lemma}\label{lem:affstcompeqrelcomp}
Let $f:A\to B$ and $g:B\to C$ be affine stable maps which determine stable families
${\F} = \set{x\vvbar y}{ f(x)\leb_B y} $  and 
${\G} = \set{y\vvbar z}{ g(y)\leb_C z}$, 
respectively.  
Then,  $\Pr({\G}) \scirc \Pr({\F}) \iso \Pr({\G}\circ {\F})$.
\end{lemma}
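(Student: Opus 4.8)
The plan is to compute the composite $\Pr(\G)\scirc\Pr(\F)$ explicitly and to recognise its configurations as those of $\G\circ\F$. Write $S=\Pr(\F)$, $T=\Pr(\G)$, and set $\sig=f_!=\max:S\to A^\perp\vvbar B$ and $\tau=g_!=\max:T\to B^\perp\vvbar C$. The crucial input is Proposition~\ref{prop:obs}: applied to $g$ it says the $B$-component $\tau_1=(g_!)_1:T\to B^\perp$ is partial rigid, and applied to $f$ it says $(f_!)_1:S\to A^\perp$ is partial rigid (input causality is never disturbed by these strategies). Partial rigidity of $\tau_1$ lets us invoke Lemma~\ref{lem:partlrigid-y*x}: for $x\in\conf S$ and $y\in\conf T$ the glued configuration $y\sncirc x$ is defined precisely when $\sig_2 x=\tau_1 y$, so the interaction imposes no securedness constraint beyond agreement on the common part over $B$.

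First I would read off the configurations of the interaction $T\sncirc S$. Through the order-isomorphisms $\theta$ of Proposition~\ref{prop:Pr} we have $\conf S\iso\F$ and $\conf T\iso\G$; so a configuration of $S$ is a pair $x_A\vvbar w$ with $f(x_A)\leb_B w$, a configuration of $T$ is $w'\vvbar z$ with $g(w')\leb_C z$, and these glue exactly when $w=w'$. Hence the configurations of $T\sncirc S$ correspond, order-isomorphically, to triples $(x_A,w,z)$ with $f(x_A)\leb_B w$ and $g(w)\leb_C z$, with $\tau\sncirc\sig$ sending such a triple to $x_A\vvbar w\vvbar z\in\iconf{A^\perp\vvbar B^0\vvbar C}$.

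Next I would carry out the hiding. The composite $T\scirc S$ is the defined part of the partial-total factorisation of the partial map $T\sncirc S\pto A^\perp\vvbar C$ that is undefined exactly on the neutral events over $B^0$; being a projection $T\sncirc S{\mathbin\downarrow} V$, it deletes the $B^0$-events and leaves the events over $A$ and $C$ untouched, and each of its configurations is the image of a \emph{minimum} configuration of $T\sncirc S$ over a given $(A,C)$-part. For any $x_A,z$ with $gf(x_A)\leb_C z$ the triple $(x_A,f(x_A),z)$ is legitimate, so, transported along $\tau\scirc\sig$, the configurations of $T\scirc S$ form exactly
$$
\{\,x_A\vvbar z\in\iconf{A^\perp\vvbar C}\mid \exists w.\ f(x_A)\leb_B w\ \&\ g(w)\leb_C z\,\}\;=\;\G\circ\F\;=\;\{\,x_A\vvbar z\mid gf(x_A)\leb_C z\,\},
$$
the right-hand identity being the relational-composition computation recorded above; and this bijection is an isomorphism of the posets of configurations under inclusion, compatible with the maps into $\iconf{A^\perp\vvbar C}$.

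Finally I would conclude. Since an event structure is recovered up to isomorphism from its family of configurations under inclusion (the coreflection $\Pr$ of Proposition~\ref{prop:Pr}), and a strategy between games is determined up to isomorphism by the induced discrete fibration over the Scott order --- equivalently, by its configurations together with their images in the game --- the two strategies $g_!\scirc f_!=\tau\scirc\sig$ and $(gf)_!=\max:\Pr(\G\circ\F)\to A^\perp\vvbar C$ are isomorphic. I expect the main obstacle to lie entirely in the hiding step: one must check that the projection $T\sncirc S{\mathbin\downarrow} V$ produces all and only the configurations $x_A\vvbar z$ with $gf(x_A)\leb_C z$ --- surjectivity via the witnessing demand $w=f(x_A)$, and the absence of spurious configurations via the minimum-configuration clause of the partial-total factorisation --- and that distinct interaction configurations sharing an $(A,C)$-part collapse to the same configuration of $T\scirc S$, so that passing to prime configurations genuinely reproduces $\Pr(\G\circ\F)$ rather than merely an event structure with the same configurations. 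Lemma~\ref{lem:partlrigid-y*x} and the partial rigidity of $(f_!)_1$ and $(g_!)_1$ are precisely what turn these verifications into routine bookkeeping.
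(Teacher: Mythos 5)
Your treatment of the interaction is the same as the paper's: you invoke Proposition~\ref{prop:obs} and Lemma~\ref{lem:partlrigid-y*x} to see that gluing imposes no securedness constraint, so that the configurations of ${\G}\sncirc{\F}$ are exactly the triples $x\vvbar w\vvbar z$ with $x\vvbar w\in{\F}$ and $w\vvbar z\in{\G}$. The gap is in the hiding step, which you yourself flag and then dismiss as ``routine bookkeeping'' following from Lemma~\ref{lem:partlrigid-y*x} and partial rigidity. Those facts do not deliver it. After hiding, the events of $\Pr({\G})\scirc\Pr({\F})$ are the \emph{visible primes} $[d]_{x\vvbar y\vvbar z}$ of the interaction, and these record the hidden witness $y$ in their causal history. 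Your claim that the map from configurations of the composite to their images in $\iconf{A^\perp\vvbar C}$ is a bijection (equivalently, that interaction configurations with the same $(A,C)$-part collapse after hiding) is precisely the assertion that the prime $[d]_{x\vvbar y\vvbar z}$ does not depend on the witness $y$; nothing about securedness or partial rigidity of $(f_!)_1$ or $(g_!)_1$ addresses this, since those concern only when the interaction configuration exists, not what its primes look like.

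The paper proves the collapse explicitly: it defines $\theta([d]_{x\vvbar y\vvbar z})=[d]_{x\vvbar z}$ and $\phi([d]_{x\vvbar z})=[d]_{x\vvbar f(x)\vvbar z}$ and shows they are mutually inverse by establishing that $x\vvbar(y\cap f(x))\vvbar z$ is again an interaction configuration, whence $[d]_{x\vvbar y\vvbar z}=[d]_{x\vvbar (y\cap f(x))\vvbar z}=[d]_{x\vvbar f(x)\vvbar z}$. That middle step is where the real content lies: it needs $g(y\cap f(x))\leb_C z$, which is obtained from the \emph{stability} of $g$ (so that $g(f(x)\cap y)=g(f(x))\cap g(y)$) together with $g(f(x))\leb_C z$ and $g(y)\leb_C z$; one also needs $x\vvbar f(x)\vvbar z$ to be a legitimate interaction configuration, using that $g$ respects the Scott order so $gf(x)\leb_C z$. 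So the witness-independence is a genuine property of affine-stable maps, not of strategy composition in general, and without it your order-isomorphism of configuration posets (and hence the appeal to recovering the event structure from its configurations) is not established. To repair the proposal you would need to supply exactly this argument, at the level of primes or, equivalently, a proof that each configuration of the composite is determined by its image together with the canonical minimal witness $f(x)$.
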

\begin{proof}
Recall, $\Pr({\G}) \scirc \Pr({\F})$ is obtained as $\Pr({\G} \sncirc {\F})$ followed by hiding the synchronisations over $B$.  
First consider ${\G} \sncirc {\F}$.   

A finite configuration of ${\G} \sncirc {\F}$, built as a pullback of stable families, has the form
$x\vvbar y\vvbar z$ where
$x\vvbar y \in \F$ and $y\vvbar z\in \G$  and the 
causal dependencies from $\F$ and $\G$ do not jointly introduce any causal loops.
However, from the observation of Proposition~\ref{prop:obs} and Lemma~\ref{lem:partlrigid-y*x} above,  it follows that there are no causal loops for such particular stable families.

It follows that for all $x\vvbar y \in \F$ and $y\vvbar z\in \G$ we have $x\vvbar y\vvbar z$ is a configuration of ${\G} \sncirc {\F}$.  
Thus we have a simple characterisation of the the stable family ${\G} \sncirc {\F}$:
 $${\G} \sncirc {\F} = \set{x\vvbar y\vvbar z\in\iconf{A^\perp\vvbar B\vvbar C}  }{\ 
x\vvbar y \in {\F }\ \&\ y\vvbar z\in {\G}}\,.$$

It remains to consider the effect of hiding the synchronisations over $B$ and show
$$
\Pr({\G}) \scirc \Pr({\F}) \iso \Pr({\G}\circ {\F})\,,
$$
where 
$$
{\G}\circ {\F} = 
\set{x\vvbar z\in\iconf{A^\perp\vvbar C}
}{ \exists y\in\iconf B.\ x\vvbar y\in {\F}  \ \&\ y\vvbar z\in{\G} }\,.
$$ 
(As we saw in the discussion preceding this lemma, this is the  stable family obtained from the composition $gf$.)  
To this end we define
$$
\theta: \Pr({\G}) \scirc \Pr({\F}) \to \Pr({\G}\circ {\F}) 
$$
and its putative mutual inverse 
$$
\phi:  \Pr({\G}\circ {\F}) \to \Pr({\G}) \scirc \Pr({\F})\,.
$$
{\it For simplicity of notation, to avoid indices, throughout this proof assume that the events $A$, $B$ and $C$ are pairwise disjoint and identify $x\vvbar y\vvbar z$ with $x\cup y\cup z$.}\\

The events of $\Pr({\G}) \scirc \Pr({\F})$ have the form
$[a]_{x\vvbar y\vvbar z}$, where $a\in x$,  or  $[c]_{x\vvbar y\vvbar z}$, where $c\in z$, and $x\vvbar y\vvbar z\in {\G} \sncirc {\F}$.  
The events of $\Pr({\G}\circ {\F})$ have the form
$[a]_{x\vvbar  z}$, where $a\in x$,  or  $[c]_{x\vvbar z}$, where $c\in z$, and $x\vvbar z\in {\G} \circ {\F}$.  Define
$$
\theta([d]_{x\vvbar y\vvbar z}) = [d]_{x\vvbar z}\  \hbox{ and }\
\phi([d]_{x\vvbar z}) = [d]_{x\vvbar f(x) \vvbar z}\,,
$$
on typical events $[d]_{x\vvbar y\vvbar z}\in \Pr({\G}\circ {\F}) $ and $[d]_{x\vvbar z}\in  \Pr({\G}\circ {\F})$. We should check $\theta$ and $\phi$ are well-defined functions.   
 In showing that 
 $\theta$ is well-defined we use that $x\vvbar y\vvbar z$ is a configuration of ${\G} \sncirc {\F}$ directly  implies 
 $x\vvbar z$ is a configuration of ${\G} \circ {\F}$.  In showing  $\phi$ is well-defined we need that 
 $x\vvbar z\in {\G} \circ {\F}$ implies $x\vvbar f(x) \vvbar z \in {\G} \sncirc {\F}$.  Assuming $x\vvbar z\in {\G} \circ {\F}$, 
  we have  $x\vvbar y\in {\F}$ and $y\vvbar z\in{\G}$ for some $y\in\iconf B$.  Then 
$f(x)\leb_B y$ and    $g(y) \leb_C z$.  Thus 
$gf(x)\leb_C g(y) \leb_C z$ whence $g(f(x))\leb_C z$ ensuring $f(x)\vvbar z\in \G$.
Clearly  $x\vvbar f(x) \in \F$, so $x\vvbar f(x) \vvbar z \in {\G} \sncirc {\F}$, as needed. 

We show $\theta$ and $\phi$ are mutual inverses.  
It is easy to see that $\theta\phi([d]_{x\vvbar z}) = [d]_{x\vvbar z}$.  By definition, 
$\phi\theta([d]_{x\vvbar y\vvbar z}) = [d]_{x\vvbar f(x) \vvbar z}$, where $x\vvbar y\vvbar z\in {\G} \sncirc {\F}$ and $d$ is an event of $x$ or $z$.  We require 
$$
[d]_{x\vvbar y \vvbar z} = [d]_{x\vvbar f(x) \vvbar z}\,.
$$
To this end we show $x\vvbar (y\cap f(x)) \vvbar z \in {\G} \sncirc {\F}$; once this is shown we have
$$
[d]_{x\vvbar y \vvbar z} = [d]_{x\vvbar (y\cap f(x)) \vvbar z} = [d]_{x\vvbar f(x) \vvbar z}
$$
---using twice the general fact that $[e]_v = [e]_w$ when $e$ is an event of 
compatible configurations  $v$ and $w$ of a stable family.  
To show $x\vvbar (y\cap f(x)) \vvbar z \in {\G} \sncirc {\F}$ we require 
$$
x\vvbar (y\cap f(x)) \in {\F} \ \hbox{ and }\ (y\cap f(x)) \vvbar z  {\G}\,.
$$
From $f(x)\leb_B y$ with $f(x)\leb_B f(x)$ we obtain $f(x)\leb_B (y\cap f(x))$; so $x\vvbar (y\cap f(x)) \in {\F}$. 
From $f(x)\leb_B y$ we get $g(f(x)\cap y) = g(f(x))\cap g(y)$.  But $g(f(x))\leb_C z$ and $g(y)\leb_C z$ ensuring  $g(f(x))\cap g(y)\leb_C z$
. Hence $g(f(x)\cap y) \leb_C z$ and $(y\cap f(x)) \vvbar z \in {\G}$, as required.  
This establishes a bijection between the events of $\Pr({\G}) \scirc \Pr({\F})$  and those of $\Pr({\G}\circ {\F})$.  

For an isomorphism, we require the bijection respects causal dependency and consistency.  The matching of a configuration
$x\vvbar  z$ in ${\G} \circ {\F}$ with a configuration $x\vvbar f(x) \vvbar z$ in  ${\G} \sncirc {\F}$ clearly respects inclusion.  This implies
$$
d' \leq_{x\vvbar z} d \iff d' \leq_{x\vvbar f(x) \vvbar z} d\,,
$$
for $d$, $d'$ in $x\in\iconf A$ or $z\in\iconf C$.  
This entails that the bijection on events given by $\theta$ and $\phi$ respects causal dependency.  

Via the matching of configurations, both $\theta$ and its inverse $\phi$   may be shown to preserve consistency.  This establishes the isomorphism of  the lemma.
 \end{proof}
 
\begin{corollary}
The operation $(\_)_!$ is a (pseudo) functor from the category of affine-stable maps  to concurrent strategies. 
\end{corollary}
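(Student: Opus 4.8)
The plan is to assemble the pseudo-functoriality of $(\_)_!$ from results already in hand: Theorem~\ref{thm:affstmaptostrats} (equivalently Theorem~\ref{appthm:affstmaptostrats}), which produces $f_!$ from an affine-stable $f$, and Lemma~\ref{lem:affstcompeqrelcomp}, which identifies the composite $\Pr(\G)\scirc\Pr(\F)$ with $\Pr(\G\circ\F)$. On objects the functor is the identity. Before defining it on morphisms I would first confirm that affine-stable maps between games form a category at all: given affine-stable $f:A\to B$ and $g:B\to C$, each clause of Definition~\ref{def:affine-stable} for the composite $gf$ follows by chaining the corresponding clause for $f$ with that for $g$ (polarity-respecting, $+$-continuity, $-$-image finiteness, affinity and stability are all preserved under composition of such functions), and $\id_A:\iconf A\to\iconf A$ is trivially affine-stable; the functor then sends $f$ to the strategy $f_!:A\profto B$.

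Next I would establish the compositor. The discussion preceding Lemma~\ref{lem:affstcompeqrelcomp} shows that the relational composite $\G\circ\F$ of the stable families attached to $f$ and $g$ is precisely the stable family $\set{x\vvbar z}{gf(x)\leb_C z}$ that Theorem~\ref{thm:affstmaptostrats} attaches to $gf$. Hence, invoking Lemma~\ref{lem:affstcompeqrelcomp},
$$
g_!\scirc f_!\;=\;\Pr(\G)\scirc\Pr(\F)\;\iso\;\Pr(\G\circ\F)\;=\;(gf)_!\,,
$$
which is the required structure isomorphism $\gamma_{g,f}$. For identities, I would note that Theorem~\ref{thm:affstmaptostrats} applied to $\id_A$ produces the stable family $\set{x\vvbar y}{y\bel_A x}$, which is exactly $\iconf{\CC_A}$ by Lemma~\ref{lem:CCbel}; so $\Pr$ of it is isomorphic to $\CC_A$ with $\max$ corresponding to $\cc_A$, giving ${\id_A}_!\iso\CC_A$, and since copycat is the identity $1$-cell of $\Strat$ this is the unitor $\iota_A$.

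Finally I would check the coherence axioms of a pseudo functor: compatibility of the $\gamma$'s with the associator of $\scirc$ (the pentagon) and the two unit triangles. Because relational composition of stable families is strictly associative and strictly unital, and because the isomorphisms $\gamma_{g,f}$ and $\iota_A$ are the canonical ones determined by the universal properties of pullback and of the partial--total factorisation that define $\scirc$ (together with the defining property of $\CC_A$), the pentagon and the triangles commute by uniqueness of the mediating $2$-cells in $\Strat$; this is a routine diagram chase that I would only indicate. The one genuinely delicate point --- that forming the interaction $\G\sncirc\F$ introduces no spurious causal loops, so that its configurations are exactly the $x\vvbar y\vvbar z$ with $x\vvbar y\in\F$ and $y\vvbar z\in\G$ --- is precisely where Proposition~\ref{prop:obs} (the strategy of an affine-stable map leaves the causality of its input undisturbed, making the $B$-component projection of $g_!$ partial rigid) and Lemma~\ref{lem:partlrigid-y*x} do the work, and this has already been settled inside Lemma~\ref{lem:affstcompeqrelcomp}. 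So the remaining obstacle for the corollary itself is just the coherence bookkeeping, which I expect to be straightforward.
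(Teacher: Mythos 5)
Your proposal is correct and follows essentially the same route as the paper: the compositor comes from identifying the stable family of $gf$ with the relational composite $\G\circ\F$ and then applying Lemma~\ref{lem:affstcompeqrelcomp} (whose real content is Proposition~\ref{prop:obs} and Lemma~\ref{lem:partlrigid-y*x}), while the unitor comes from recognising the family of $\id_A$ as $\iconf{\CC_A}$ via Lemma~\ref{lem:CCbel}. The extra remarks you add (closure of affine-stable maps under composition, coherence of the structure cells) are points the paper likewise leaves implicit, so nothing further is needed.
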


\subsection{For Section~\ref{sec:adjn}, the adjunction}

\begin{proposition}\label{lem:stfamcmp}
Let $f$ be an additive-stable function from $A$ to $B$ between   \esswp. Define
$$
\eqalign{
F_!\eqdef&\set{x\vvbar y\in\iconf{A^\perp\vvbar B}}{f x\leb_B y}\,,\cr
F^*\eqdef &\set{y\vvbar x\in\iconf{B^\perp\vvbar A}}{y\leb_B f x}\,.
}
$$
Define
$\xymatrix{f_! :\Pr(F_!)\ar[r]^\max& A^\perp\vvbar B}$ and $\xymatrix{f^* : \Pr(F^*)\ar[r]^\max& B^\perp\vvbar A}$.  Then the composition of strategies $f^*\scirc f_!$ is isomorphic to  $$\xymatrix{\Pr(F^*\circ F_!)\ar[r]^\max& A^\perp\vvbar A}$$ 
and $ f_!\scirc f^* $ 
to  $$\xymatrix{\Pr(F_!\circ F^*)\ar[r]^\max& B^\perp\vvbar B}\,,$$ based on the relational composition of the stable families.  
\end{proposition}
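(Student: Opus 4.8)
The plan is to follow the proof of Lemma~\ref{lem:affstcompeqrelcomp}, and the discussion just before it, almost verbatim, now for the pair $f_!, f^*$ built from a single additive-stable $f$. First note that $F_!$ and $F^*$ are stable families: $F_!$ is the family of Theorem~\ref{thm:affstmaptostrats} for the affine-stable map $f:A\to B$, and $F^*$ is the family of the Corollary of Section~\ref{sec:affstab} for the affine-stable map $f:A^\perp\to B^\perp$ --- both applicable, since an additive-stable map is in particular affine-stable and is polarity-indifferent. Unwinding the Scott order, $x\vvbar y\in F_!$ iff $y\bel_B f(x)$ and $y\vvbar x\in F^*$ iff $f(x)\bel_B y$, so by transitivity of $\bel_B$ the relational composites are
\[
F^*\circ F_! = \set{x\vvbar x'}{f(x')\bel_B f(x)}\,,\qquad
F_!\circ F^* = \set{y\vvbar y'}{\exists x\in\iconf A.\ y'\bel_B f(x)\bel_B y}\,,
\]
and these are again stable families, the relational composite of stable families being a stable family (as used just before Lemma~\ref{lem:affstcompeqrelcomp}). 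It therefore suffices to prove $\Pr(F^*)\scirc\Pr(F_!)\iso\Pr(F^*\circ F_!)$ and, symmetrically, $\Pr(F_!)\scirc\Pr(F^*)\iso\Pr(F_!\circ F^*)$, with these isomorphisms over $A^\perp\vvbar A$ and $B^\perp\vvbar B$ respectively; that is exactly the proposition.

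For the first isomorphism, form the interaction $\Pr(F^*)\sncirc\Pr(F_!)$ over the common game $B$. The key step is that it carries no causal loops, which by Lemma~\ref{lem:partlrigid-y*x} holds once the component $(f^*)_1:\Pr(F^*)\to B^\perp$ is partial rigid, \ie~$(1,b')\leq_{y\vvbar x}(1,b)$ iff $b'\leq_B b$ for $y\vvbar x\in F^*$ and $b,b'\in y$. This is the co-affine analogue of Proposition~\ref{prop:obs}, which does not apply off the shelf; the nontrivial implication I would prove by mimicking the finiteness part of the proof of Theorem~\ref{appthm:affstmaptostrats}, but now using additivity: the least (finite) subconfiguration $x_0\subseteq x$ with $[b]_B^-\subseteq f(x_0)$ has all its $\leq$-maximal events Opponent moves (by polarity-respect and minimality), and peeling these off, together with $f(\emptyset)=\emptyset$, forces $f(x_0)^+=\emptyset$; hence $f(x_0)\bel_B[b]_B$, so $[b]_B\vvbar x_0\in F^*$ is a subconfiguration of $y\vvbar x$ separating $(1,b)$ from $(1,b')$. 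With no causal loops, Lemma~\ref{lem:partlrigid-y*x} gives
\[
F^*\sncirc F_! = \set{x\vvbar y\vvbar x'\in\iconf{A^\perp\vvbar B^0\vvbar A}}{x\vvbar y\in F_!\ \&\ y\vvbar x'\in F^*}\,.
\]
Then, exactly as in Lemma~\ref{lem:affstcompeqrelcomp}, hide the synchronisations over $B$: put $\theta([d]_{x\vvbar y\vvbar x'})=[d]_{x\vvbar x'}$ and $\phi([d]_{x\vvbar x'})=[d]_{x\vvbar f(x)\vvbar x'}$, with $f(x)$ the canonical intermediate over $B$. Well-definedness of $\phi$ is immediate: from $f(x')\bel_B f(x)$ we get $x\vvbar f(x)\in F_!$ (as $f(x)\bel_B f(x)$) and $f(x)\vvbar x'\in F^*$. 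For $\phi\theta=\id$ it suffices that $x\vvbar(y\cap f(x))\vvbar x'\in F^*\sncirc F_!$, whence $[d]_{x\vvbar y\vvbar x'}=[d]_{x\vvbar(y\cap f(x))\vvbar x'}=[d]_{x\vvbar f(x)\vvbar x'}$, using the general fact that $[e]_v=[e]_w$ for an event $e$ in compatible configurations of a stable family; and this membership follows from the elementary observation recorded at the start of the proof of Theorem~\ref{appthm:affstmaptostrats}, that pointwise-$\bel_B$-related families have $\bel_B$-related intersections: from $y\bel_B f(x)$ and $f(x)\bel_B f(x)$ we get $y\cap f(x)\bel_B f(x)$, so $x\vvbar(y\cap f(x))\in F_!$; and from $f(x')\bel_B y$ and $f(x')\bel_B f(x)$ (the latter by transitivity) we get $f(x')\bel_B y\cap f(x)$, so $(y\cap f(x))\vvbar x'\in F^*$. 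Here, unlike in Lemma~\ref{lem:affstcompeqrelcomp}, Proposition~\ref{prop:igRF} is not needed, since the awkward term is $f(x')$ rather than $f$ of an intersection. Finally the bijection respects causal dependency and consistency through the inclusion-preserving matching of $x\vvbar x'$ with $x\vvbar f(x)\vvbar x'$, word for word as there.

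The second isomorphism $\Pr(F_!)\scirc\Pr(F^*)\iso\Pr(F_!\circ F^*)$ is obtained by the symmetric argument, with the roles of $A$ and $B$, and of $f_!$ and $f^*$, interchanged; the hidden game is now $A$. Here the no-causal-loop step is easier, since $(f_!)_1:\Pr(F_!)\to A^\perp$ being partial rigid is precisely Proposition~\ref{prop:obs} applied to $f$. The point that needs a little care is the canonical intermediate configuration over $A$: given a prime $[d]_{y\vvbar y'}$ of $\Pr(F_!\circ F^*)$ one takes the least $x$ whose image $f(x)$ lies in the Scott-interval between $y'$ and $y$ cut out by that prime --- well-defined by stability of $f$ (with the intersection observation above) on the finite configurations involved --- and the remaining verifications run as in the first case. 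I expect the main obstacle to be exactly this first step of each isomorphism, the partial-rigidity property guaranteeing absence of causal loops: on the co-affine side of the first isomorphism it genuinely requires additivity of $f$ and not merely affine-stability, and for the second isomorphism it is entangled with making the canonical intermediate over the hidden game $A$ precise.
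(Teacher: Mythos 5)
Most of your plan is sound and matches the structure of the paper's Lemma~\ref{lem:affstcompeqrelcomp} (the computation $F^*\circ F_!=\set{x\vvbar x'}{f(x')\bel_B f(x)}$, the $\theta/\phi$ hiding argument with canonical witness $f(x)$, and the second isomorphism, where the hidden game $A$ is the domain of $f$ and Proposition~\ref{prop:obs} does apply). The genuine gap is exactly the step you flagged as needing a new lemma: the ``co-affine analogue of Proposition~\ref{prop:obs}'' is false, so your route to loop-freeness of the interaction for $f^*\scirc f_!$ collapses. Take $A$ with events $a_1$ of polarity $+$ and $a_2$ of polarity $-$, $a_1\leq_A a_2$, and $B$ with two concurrent events $b$ of polarity $-$ and $b'$ of polarity $+$; put $f(\emptyset)=\emptyset$, $f(\setof{a_1})=\setof{b'}$, $f(\setof{a_1,a_2})=\setof{b,b'}$. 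This $f$ is additive-stable. In $F^*$, any configuration $y\vvbar x$ with $b\in y$ must have $b\in y^-\subseteq f(x)^-$, forcing $x=\setof{a_1,a_2}$, and then $b'\in f(x)^+\subseteq y^+$; so the only such configuration is $\setof{b,b'}\vvbar\setof{a_1,a_2}$, and the prime $[(1,b)]$ contains $(1,b')$ although $b'$ and $b$ are concurrent in $B$. Hence $(f^*)_1:\Pr(F^*)\to B^\perp$ is not partial rigid, and your displayed biconditional fails. The same example shows where your additivity argument breaks: the least $x_0\subseteq x$ with $[b]^-\subseteq f(x_0)$ is $\setof{a_1,a_2}$, whose maximal event is indeed negative, but peeling maximal negative events preserves the positive image rather than killing it, so $f(x_0)^+=\setof{b'}\neq\emptyset$, $f(x_0)\not\bel_B[b]_B$, and your separating subconfiguration $[b]_B\vvbar x_0$ is not in $F^*$.

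Nor can this be repaired by a mirror-image of Lemma~\ref{lem:partlrigid-y*x} using $(f_!)_2$: that component also fails to be partial rigid in general (e.g.\ with $a_1$ negative below $a_2$ positive in $A$, and concurrent $c^-$, $b^+$ in $B$, $f(\setof{a_1})=\setof{c}$, $f(\setof{a_1,a_2})=\setof{c,b}$, the event $(2,b)$ depends on $(2,c)$ in $\Pr(F_!)$). The point is that for $f^*\scirc f_!$ the hidden game $B$ is the \emph{codomain} of $f$ on both sides, and both strategies introduce causal links into and out of $B$, so the analogy with Lemma~\ref{lem:affstcompeqrelcomp} --- where the hidden game is the domain of the second map --- does not transfer as you assumed. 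What is missing, and is the real content of the proposition on this side, is an argument (using additivity of $f$, not just affine-stability) that the relevant secured bijections exist: e.g.\ that for every $x\vvbar x'$ with $f(x')\bel_B f(x)$ a canonical witness such as $y=f(x)$ glues $x\vvbar y\in F_!$ and $y\vvbar x'\in F^*$ without causal loops, and that the primes of the interaction project onto those of $F^*\circ F_!$ as in your $\theta/\phi$ step. Until that is supplied, the first isomorphism is unproved.
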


\begin{theorem} Let $f$ be an additive-stable function from $A$ to $B$ between  \esswp. In the bicategory of strategies the strategies $f_!$ and $f^*$ form an adjunction $f_!\dashv f^*$.
\end{theorem}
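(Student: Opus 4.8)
The plan is to exhibit explicit unit and counit 2-cells, built as $\Pr$ of inclusions of stable families, and to verify the triangle identities by reducing every 1-cell composite that occurs to a relational composition of stable families. By Proposition~\ref{lem:stfamcmp}, $f^*\scirc f_!$ is the strategy $\max:\Pr(F^*\circ F_!)\to A^\perp\vvbar A$ and $f_!\scirc f^*$ is $\max:\Pr(F_!\circ F^*)\to B^\perp\vvbar B$, with $\circ$ relational composition; computing directly from the definitions of $F_!$ and $F^*$,
$$F^*\circ F_!=\set{x\vvbar x'\in\iconf{A^\perp\vvbar A}}{f x'\bel_B f x}\,,\quad F_!\circ F^*=\set{y\vvbar y'\in\iconf{B^\perp\vvbar B}}{\exists x.\ f x\bel_B y\ \&\ y'\bel_B f x}\,,$$
where in the first the existential witness may always be taken to be $x'$ and transitivity of $\bel_B$ gives the reverse inclusion. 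Also recall $\CC_A\iso{\id_A}_!$, whose stable family is $\iconf{\CC_A}=\set{x\vvbar x'}{x'\bel_A x}$ (Lemma~\ref{lem:CCbel}), and similarly for $B$.

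The unit and counit come from $f$ being polarity-respecting. If $x'\bel_A x$, writing it as $x'\supseteq^- z\subseteq^+ x$ with $z=x\cap x'$ and applying the two polarity clauses of the definition of an additive-stable map gives $f x'\supseteq^- f z\subseteq^+ f x$, hence $f x'\bel_B f x$; so $\iconf{\CC_A}\subseteq F^*\circ F_!$. Dually, $f x\bel_B y$ and $y'\bel_B f x$ force $y'\bel_B y$ by transitivity, so $F_!\circ F^*\subseteq\iconf{\CC_B}$. Each inclusion, being the identity on events, is a map of stable families; applying $\Pr$ and the identifications above yields 2-cells $\eta:\CC_A\Rightarrow f^*\scirc f_!$ and $\epsilon:f_!\scirc f^*\Rightarrow\CC_B$ of $\Strat$ — they commute with the maps $\max$ into the ambient games by naturality of the counit $\max$ of the coreflection. (In general both inclusions are strict, since an additive-stable $f$ need not reflect the Scott order, so $\eta$ and $\epsilon$ are not isomorphisms, as one expects of a non-equivalence adjunction.)

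For the triangle identities I would observe that every 1-cell composite occurring in them — $f_!\scirc\CC_A$, $f_!\scirc(f^*\scirc f_!)$, $(f_!\scirc f^*)\scirc f_!$, $\CC_B\scirc f_!$ and their mirror images — has as its outermost factor one of $f_!$, $f^*$ or a copycat, each with partial-rigid input component (Proposition~\ref{prop:obs}, applicable to copycat through $\CC_B\iso{\id_B}_!$ and to $f^*$ as the dualised strategy of the affine-stable map $f:A^\perp\to B^\perp$). Hence by Lemma~\ref{lem:partlrigid-y*x} and the argument of Lemma~\ref{lem:affstcompeqrelcomp} each of these compositions agrees with $\Pr$ of the corresponding relational composite and parentheses may be moved freely. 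One then checks the relational identities $F_!\circ\iconf{\CC_A}=F_!=\iconf{\CC_B}\circ F_!$, $F_!\circ(F^*\circ F_!)=F_!=(F_!\circ F^*)\circ F_!$, and dually $F^*\circ\iconf{\CC_B}=F^*=\iconf{\CC_A}\circ F^*$, $F^*\circ(F_!\circ F^*)=F^*=(F^*\circ F_!)\circ F^*$ — each a one-line computation with $\bel_B$, using polarity-respecting-ness for the copycat-on-$A$ cases. Under these identifications the horizontal composites $f_!\scirc\eta$, $\epsilon\scirc f_!$, $\eta\scirc f^*$, $f^*\scirc\epsilon$ become $\Pr$ of inclusions between stable families that are literally equal, hence identity 2-cells, so each triangle composite collapses to a composite of coherence isomorphisms of $\Strat$ from $f_!$ (resp.\ $f^*$) to itself, which is the identity by coherence.

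I expect the main obstacle to be the coherence bookkeeping in this last step: ensuring the structural isomorphisms of $\Strat$ ($f_!\scirc\CC_A\iso f_!$, the associators, and so on) are matched consistently with the relational isomorphisms $\Pr(F_!\circ\iconf{\CC_A})\iso\Pr(F_!)$ used to trivialise the 2-cells, and tracking carefully how Proposition~\ref{prop:obs} transfers to $f^*$ under reversal of polarities. The conceptual point — that affinity of $f$ (preservation of unions) is what places $\eta$ inside $F^*\circ F_!$ and stability of $f$ (preservation of intersections) is what places $\epsilon$ inside $\iconf{\CC_B}$, with polarities governing the $\subseteq^\pm$-refinements — is already visible in the construction of $\eta$ and $\epsilon$.
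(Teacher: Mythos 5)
Your proposal is correct and takes essentially the same route as the paper: identify $f^*\scirc f_!$ and $f_!\scirc f^*$ with $\Pr$ of the relational composites of the stable families $F_!$ and $F^*$ (Proposition~\ref{lem:stfamcmp}), take the unit and counit to be $\Pr$ of the inclusions $\iconf{\CC_A}\subseteq F^*\circ F_!$ and $F_!\circ F^*\subseteq \iconf{\CC_B}$, and verify the triangle identities by one-line $\bel_B$-computations at the level of stable families. The differences are only in bookkeeping (you justify the reduction of the whiskered composites via partial rigidity, where the paper simply applies $\Pr$ to the companion stable-family diagrams); one small caveat: your closing attribution is slightly off, since the unit inclusion needs only that $f$ is polarity-respecting (hence preserves the Scott order) and the counit only transitivity of $\bel_B$, with additivity and stability entering through the relational-composition reduction itself.
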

\begin{proof}
It is easiest to carry out the arguments by considering the associated constructions on stable families.  We obtain the compositions $f^*\scirc f_!$  and $f_!\scirc f^*$  from  ``relational'' compositions of the stable families 
$$
F_!\eqdef\set{x\vvbar y\in\iconf{A^\perp\vvbar B}}{f x\leb_B y}
$$
for $f_!$
and 
$$
F^*\eqdef \set{y\vvbar x\in\iconf{B^\perp\vvbar A}}{y\leb_B f x}
$$
for $f^*$.

By Proposition~\ref{lem:stfamcmp}, the composition $ f^*\scirc f_!$ is the event structure $\Pr(F^*\circ F_!)$ derived from the stable family
$$
F^*\circ F_!= \set{x\vvbar x' \in \iconf{A^\perp\vvbar A}}{fx \leb_B fx'}
$$
---obtained as the relational composition of the stable families $F_!$ and $F^*$.  Recall, from Lemma~\ref{lem:CCbel},
that the stable family of $\cc_A$ is 
$$
C_A\eqdef \set{x\vvbar x'\in\iconf{A^\perp\vvbar A}}{x\leb_A x'}\,.
$$
Define the
unit $\eta: \cc_A \Rightarrow f^*\scirc f_!$ to be the map $\Pr(I)$ of \esswp~got from the inclusion of stable families
$$
I: C_A \hookrightarrow F^*\circ F_!\,;
$$
 clearly,  $x\vvbar x'\in C_A$, \ie~$x\leb_A x'$, implies $fx\leb_B fx'$, so  $x\vvbar x'\in F^*\circ F_!$.

By Proposition~\ref{lem:stfamcmp}, the composition  $ f_!\scirc f^* $ is the event structure  $\Pr(F_!\circ F^*)$  got from the stable family
$$
F_!\circ F^* = \set{y\vvbar y'\in \iconf{B^\perp\vvbar B}}{\exists x\in \iconf A.\ y\leb_B fx\ \&\ fx \leb_B y'}
$$
---obtained as the relational composition of the stable families $F^*$ and $F_!$. 
The 
  counit $\eps: f_!\scirc f^* \Rightarrow \cc_B$ is the
 the map $\Pr(J)$ got from the inclusion of stable families
$$
J:  F_!\circ F^* \hookrightarrow C_B\,;
$$
 clearly,  $y\vvbar y'\in F_!\circ F^*$, \ie~$y\leb_B fx$ and $fx\leb_B y'$, implies $y\leb_B y'$, so  $y\vvbar y'\in C_B$.

To obtain an adjunction $f_!\dashv f^*$ we require
(i)\ $(f^*\eps) (\eta f^*) = \id_{ f^*}$, 
\ie~the composition of the 2-cells
$$
\xymatrix{
B\ar@/^1.5pc/[rr]_{\Uparrow\eps}|-{\! +\!}^{\cc_B}\ar[r]|-{\! +\!}_{f^*}&A\ar@/_1.5pc/[rr]^{\Uparrow\eta}|-{\! +\!}_{\cc_A}\ar[r]|-{\! +\!}_{f_!}&B\ar[r]|-{\! +\!}_{f^*} &A\\
}
$$
is the identity 2-cell $\id_{f^*}: f^*\Rightarrow f^*$; 
and (ii)\ $(\eps f_!)(f_! \eta) = \id_{f_!}$, \ie~the composition of the 2-cells
$$
\xymatrix{
A\ar@/_1.5pc/[rr]^{\Uparrow\eta}|-{\! +\!}_{\cc_A}\ar[r]|-{\! +\!}_{f_!}&B\ar@/^1.5pc/[rr]_{\Uparrow\eps}|-{\! +\!}^{\cc_B}\ar[r]|-{\! +\!}_{f^*}&A\ar[r]|-{\! +\!}_{f_!} &B\\
}
$$
is the identity 2-cell $\id_{f_!}: f_!\Rightarrow f_!$. 

We establish (i) and (ii) by considering the companion diagrams for stable families---the diagrams (i) and (ii) are got by applying $\Pr$ to the  diagrams for stable families.  Consider the diagram for (i). It takes the form
$$
\xymatrix{
\iconf B\ar@/^1.5pc/[rr]_{\rotatebox{90}{$\subseteq$} }|-{\! +\!}^{C_B}\ar[r]|-{\! +\!}_{F^*}&\iconf A\ar@/_1.5pc/[rr]^{\rotatebox{90}{$\subseteq$} }|-{\! +\!}_{C_A}\ar[r]|-{\! +\!}_{F_!}&\iconf B\ar[r]|-{\! +\!}_{F^*} &\iconf A\,,\\
}
$$
yielding the inclusion $C_A\circ F^* \subseteq F^*\circ C_B$. We check this is the identity inclusion,  from which (i) follows, by showing the converse inclusion $F^*\circ C_B \subseteq C_A\circ F^*$.  
Suppose $y\vvbar x\in F^*\circ C_B$, \ie
$$
y\leb_B y' \ \& \ y'\leb_B f x\,,
$$
for some $y'\in\iconf B$.  Then,
$$
y\leb_B fx \ \&\ x\leb_A x\,,
$$
so $y\vvbar x\in C_A\circ F^*$.  

The diagram for (ii) takes the form
$$
\xymatrix{
\iconf A\ar@/_1.5pc/[rr]^{\rotatebox{90}{$\subseteq$}}|-{\! +\!}_{C_A}\ar[r]|-{\! +\!}_{F_!}&\iconf B\ar@/^1.5pc/[rr]_{\rotatebox{90}{$\subseteq$}}|-{\! +\!}^{C_B}\ar[r]|-{\! +\!}_{F^*}&\iconf A\ar[r]|-{\! +\!}_{F_!} &\iconf B\,,\\
}
$$
yielding the inclusion $F_!\circ C_A \subseteq C_B \circ F_!$.  To show (ii),  we check that the converse inclusion 
$C_B \circ F_! \subseteq F_!\circ C_A$ also holds.
Suppose $x\vvbar y \in C_B\scirc F_!$, \ie 
$$
fx \leb_B y' \ \&\ y'\leb y\,,
$$
for some $y'\in\iconf B$. 
Then,
$$
x\leb_A x \ \&\ fx\leb_B y\,,
$$
so $x\vvbar y \in F_! \circ C_A$.
 \end{proof}

\end{document}
 
\endinput
